\crefname{equation}{}{}
\newtheorem{thm}{Theorem}[section]
\newtheorem{lem}[thm]{Lemma}
\newtheorem{obs}[thm]{Observation}
\theoremstyle{definition}
\newtheorem{defn}[thm]{Definition}
\newtheorem{exmp}[thm]{Example}
\newcommand{\T}{\mathcal{T}}
\newcommand{\Leav}{\mathcal{L}}
\newcommand{\N}{\mathcal{N}}
\Crefname{line}{line}{lines}
\newcommand\problem[3]{%
\vspace{0.5cm}
\par\noindent {\bfseries  #1}\\
{\bfseries Instance}: #2\\
{\bfseries Question}: #3\par
\vspace{0.5cm}
}
\newcommand{\doctype}{paper\xspace}
\newcommand{\markj}[1]{#1}
\newcommand{\leo}[1]{#1}
\newcommand{\sander}[1]{#1}
\title{New FPT algorithms for finding the temporal hybridization number for sets of phylogenetic trees\thanks{Leo van Iersel and Mark Jones were partly supported by the Netherlands Organization for Scientific Research (NWO), Vidi grant 639.072.602 and Mark Jones also by the gravitation grant NETWORKS.}}
\author{Sander Borst\footnote{Centrum Wiskunde \& Informatica (CWI), P.O. Box 94079, 1090 GB Amsterdam, The Netherlands, \texttt{Sander.Borst@cwi.nl, markelliotlloyd@gmail.com}} \and Leo van Iersel\footnote{Delft Institute of Applied Mathematics, Delft University of Technology, Van Mourik Broekmanweg 6, 2628 XE, Delft, The Netherlands, \texttt{L.J.J.vanIersel@tudelft.nl}.}\and Mark Jones\footnotemark[2] \and Steven Kelk\footnote{Department of Data Science and Knowledge Engineering (DKE)
Maastricht University
P.O. Box 616
6200 MD Maastricht 
The Netherlands, \texttt{steven.kelk@maastrichtuniversity.nl}.}}
\begin{document}
\maketitle  

\begin{abstract}
	We study the problem of finding a temporal hybridization network for a set of phylogenetic trees that minimizes the number of reticulations.
	First, we introduce an FPT algorithm for this problem on an arbitrary set of $m$ binary trees with $n$ leaves each with a running time of $O(5^k\cdot n\cdot m)$, where $k$ is the minimum temporal hybridization number.
	We also present the concept of \emph{temporal distance}, which is a measure for how close a tree-child network is to being temporal.
	Then we introduce an algorithm for computing a tree-child network with temporal distance at most $d$ and at most $k$ reticulations in $O((8k)^d5^ k\cdot n\cdot m)$ time.
	Lastly, we introduce a \leo{$O(6^kk!\cdot k\cdot n^2)$} time algorithm for computing a minimum temporal hybridization network for a set of two nonbinary trees.
	We also provide an implementation of all algorithms and an experimental analysis on their performance.
\end{abstract}



\section{Introduction}
Phylogenetics is the study of the evolutionary history of biological species.
Traditionally such a history is represented by a phylogenetic tree.
However, hybridization and horizontal gene transfer, both so-called \emph{reticulation} events, can lead to multiple seemingly conflicting trees representing the evolution of different parts of the genome \cite{mallet_how_2016, soucy_horizontal_2015}.
Directed acyclic networks can be used to combine these trees into a more complete representation of the history \cite{bapteste_networks_2013}.
Reticulations are represented by vertices with in-degree greater than one.

Therefore, an important problem is how to construct such a network based on a set of input trees that are known to represent the evolutionary history for different parts of the genome.
The network should display all of these input trees.
In general there are many solutions to this problem, but in accordance with the parsimony principle we are especially interested in the most simple solutions to the problem.
These are the solutions with a minimal number of reticulations.
Finding a network for which the number of reticulations, also called the \emph{hybridization number}, is minimal now  becomes an optimization problem.
This problem is NP-complete, even for only two binary input trees \cite{bordewich_computing_2007-2}.
The problem is fixed parameter tractable for an arbitrary set of non-binary input trees if either the number of trees or the out-degree in the trees is bounded by a constant \cite{van_iersel_kernelizations_2016-1}.
For a set of two binary input trees an FPT algorithm with a reasonable running time exists \cite{bordewich_computing_2007}.
For more than two input trees theoretical FPT algorithms and practical heuristic algorithms exist, but no FPT algorithm with a reasonable running time is known.
That is why we are interested in slightly modifying the problem to make it easier to solve.

One way to do this is by restricting the solution space to the class of tree-child networks, in which each non-leaf vertex has at least one outgoing arc that does not enter a reticulation \cite{cardona_comparison_2007}.
The minimum hybridization number over all tree-child networks that display the input trees is called the \emph{tree-child hybridization number}.
These networks can be characterized by so-called cherry picking sequences \cite{linz_attaching_2019}.
This characterization can be used to create a fixed parameter tractable algorithm for this restricted version of the problem for any number of binary input trees with time complexity $O((8k)^k\cdot poly(n, m))$ where $k$ is the tree-child hybridization number, $n$ is the size of leaves and $m$ is the number of input trees \cite{van_iersel_practical_2019}.

The solution space can be reduced even further \cite{humphries_cherry_2013}, leading to the problem of finding the \emph{temporal hybridization number}.
The extra constraints enforce that each species can be placed at a certain point in time such that evolution events take a positive amount of time and that reticulation events can only happen between species that live at the same time.
For the problem of computing the temporal hybridization number a cherry picking characterization exists too and it can be used to develop a fixed parameter tractable algorithm for problems with two binary input trees with time complexity $O((7k)^k\cdot poly(n, m))$ where $k$ is the temporal hybridization number, $n$ is the number of leaves and $m$ is the number of input trees \cite{humphries_cherry_2013}.
In this \doctype we introduce a faster algorithm for solving this problem in $O(5^k \cdot n \cdot m)$ time using the cherry picking characterization.
Moreover, this algorithm works for any number of binary input trees.

A disadvantage of the temporal restrictions is that in some cases no solution satisfying the restrictions exists.
In fact determining whether such a solution exists is a NP-hard problem \cite{humphries_complexity_2013}\cite{docker_deciding_2019}.
Because of this our algorithm will not find a solution network for all problem instances.
However we show that it is possible to find a network with a minimum number of non-temporal arcs, thereby finding a network that is `as temporal as possible'.
For that reason we also introduce an algorithm that also works for non-temporal instances.
This algorithm is a combination of the algorithm for tree-child networks and the one for temporal networks introduced here.

In practical data sets, the trees for parts of the genome are often non-binary.
This can be either due to simultaneous divergence events or, more commonly, due to uncertainty in the order of divergence events \cite{linz_hybridization_2009}.
This means that many real-world datasets contain non-binary trees, so it is very useful to have algorithms that allow for non-binary input trees.
While the general hybridization number problem is known to be FPT when either the number of trees or the out-degree of the trees is bounded by a constant \cite{van_iersel_kernelizations_2016-1}, an FPT algorithm with a reasonable running time ($O(6^kk!
    \cdot poly(n))$) is only known for an input of two trees \cite{piovesan_simple_2013}.
Until recently no such algorithm was known for the temporal hybridization number problem however.
In this \doctype the first FPT algorithm  for constructing optimal temporal networks based on two non-binary input trees with running time \leo{$O(6^kk!\cdot k\cdot n^2)$} is
introduced.

\leo{We implemented and tested all new algorithms~\cite{sjb_implementation}.}

The structure of the paper is as follows.
First we introduce some common theory and notation in \cref{sec:preliminaries}.
In \cref{sec:algorithm} we present a new algorithm for the temporal hybridization number of binary trees, prove its correctness and analyse the running time.
In \cref{sec:non_temporal} we combine the algorithm from \cref{sec:algorithm}  with the algorithm from \cite{van_iersel_practical_2019} to obtain an algorithm for constructing tree-child networks with a minimum number of non-temporal arcs.
In \cref{sec:non_binary_trees} we present the algorithm for the temporal hybridization number for two non-binary trees.
In \cref{sec:implementation} we conduct an experimental analysis of the \leo{algorithms}.


\section{Preliminaries}
\label{sec:preliminaries}
\subsection{Trees}
A \emph{rooted binary phylogenetic $X$-tree} $\mathcal{T}$ is a rooted binary tree for which the leaf set is \leo{equal to} $X$ with $|X|=n$.
Because we will mostly use rooted binary phylogenetic trees in this \doctype we will just refer to them as \emph{trees}.
Only in \cref{sec:non_binary_trees} trees that are not necessarily binary are mentioned, but we will explicitly call them non-binary trees.

Each of the leaves of a tree \leo{is an element} of $X$.
We will also refer to the set of \leo{leaves} in $\T$ as $\Leav(\T)$.
For a tree $\mathcal{T}$ and a set of leaves $A$ with the notation $\mathcal{T}\setminus A$ we refer to the tree obtained by removing all leaves \leo{that are in} $A$ from $\T$ and repeatedly contracting all vertices with both in- and out-degree one.
Observe that $\left(\mathcal{T}\setminus \{x\}\right)\setminus \{y\} = \mathcal{T}\setminus \{x,y\} = \left(\mathcal{T}\setminus \{y\}\right)\setminus \{x\}$.
We will often use $T$ to refer to a set of $m$ trees $\mathcal{T}_1,\ldots, \mathcal{T}_m$.
We will write $T\setminus A$ for $\{\T_1\setminus A,\ldots, \T_m\setminus A \}$ and $\Leav(T)=\cup_{i=1}^m\Leav(\T_i)$.

\subsection{Temporal networks}

A \emph{network} \leo{on~$X$} is a rooted acyclic directed graph satisfying:
\begin{enumerate}
	\item The root $\rho$ has in-degree $0$ and an out-degree not equal to $1$.
	\item The \emph{leaves} are the nodes with out-degree zero. \leo{The set of leaves is~$X$.}
	\item The remaining vertices are \emph{tree vertices} or \emph{hybridization vertices}
	      \begin{enumerate}
		      \item A tree vertex has in-degree $1$ and out-degree at least $2$.
		      \item A hybridization vertex (also called \emph{reticulation}) has out-degree $1$ and in-degree at least $2$.
	      \end{enumerate}
\end{enumerate}

We will call the arcs ending in a hybridization vertex \emph{hybridization arcs}.
All other arcs are \emph{tree arcs}.
A network is a \emph{tree-child} network if every tree vertex has at least one outgoing tree arc.

We say that a network $\mathcal{N}$ on $X$ displays a set of trees $T$ on $X'$ with $X'\subseteq X$ if every tree in $T$ can be obtained by removing edges and vertices and contracting vertices with both in-degree $1$ and out-degree $1$.
For a set of leaves $A$ we define $\N\setminus A$ to be the network obtained from $\N$ by removing all leaves in $A$ and afterwards removing all nodes with out-degree zero and contracting all nodes with both in- and out-degree one.

\begin{figure}
	\centering
	\begin{subfigure}[b]{.25\textwidth}
		\includegraphics{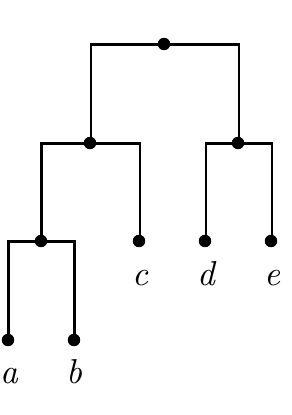}
		\caption{
			\label{subfig:first_tree}}
	\end{subfigure}
	\begin{subfigure}[b]{.25\textwidth}
		\includegraphics{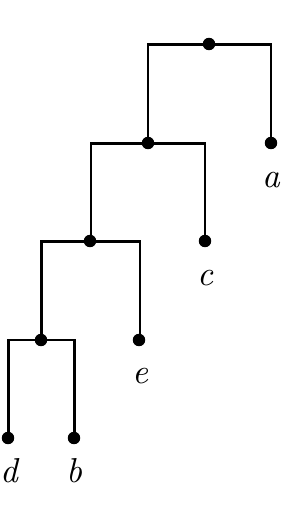}
		\caption{
			\label{subfig:second_tree}
		}
	\end{subfigure}
	\begin{subfigure}[b]{.45\textwidth}
		\includegraphics{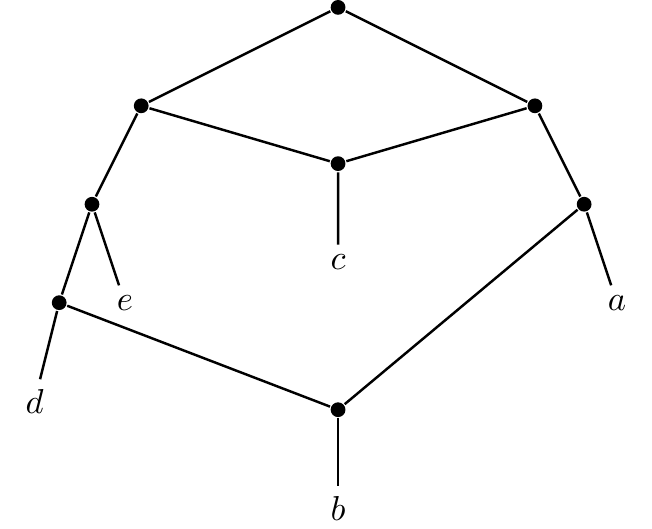}
		\caption{\label{subfig:network}}
	\end{subfigure}
	\caption{The binary trees in (a) and (b) are both displayed by the network in (c). }
	\label{fig:simple_example}

\end{figure}

For a tree-child network $\mathcal{N}$, the \emph{hybridization number} $h_t(\mathcal{N})$ is defined as
\begin{align*}
	r(\mathcal{N})=\sum_{v\neq \rho}(d^-(v)-1)\text{.
	}
\end{align*}
where $d^-(v)$ is the in-degree of a vertex $v$ and $\rho$ is the root of $\mathcal{N}$.

A tree-child network $\mathcal{N}$ with set of vertices $V$ is \emph{temporal} if there exists a map $t:V\to \mathbb{R}^+$, called a temporal labelling, such that for all $u,v\in V$ we have $t(u)=t(v)$ when $(u,v)$ is a hybridization arc and $t(u)<t(v)$ when $(u,v)$ is a tree arc.
In \cref{fig:temporal_and_non_temporal_network} both a temporal and a non-temporal network are shown.

\begin{figure}
	\begin{subfigure}[b]{.45\textwidth}
		\includegraphics[scale=.9]{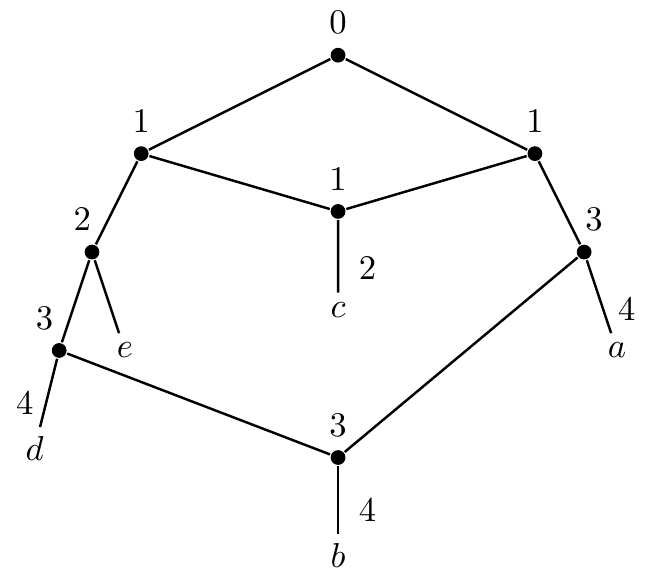}
		\caption{
			A temporal labeling is shown in the network above, asserting that the network is temporal.
			\label{subfig:network_temporal}
		}
	\end{subfigure}
	\hfill
	\begin{subfigure}[b]{.45\textwidth}
		\includegraphics[scale=.9]{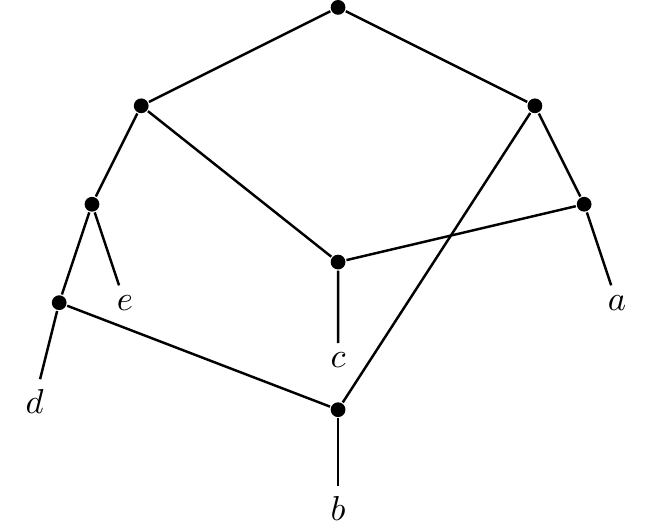}
		\caption{
			No temporal labeling exists for this network.
			Therefore the network is not temporal.
			\label{subfig:network_non_temporal}}
	\end{subfigure}
	\caption{\label{fig:temporal_and_non_temporal_network}}
\end{figure}

For a set of trees $T$ we define the minimum temporal-hybridization number as
\begin{align*}
	h_t(T)=\min \{r(\mathcal{N}):\mathcal{N}\text{ is a temporal network that displays }T \}
\end{align*}
This definition leads to the following decision problem.

\problem{Temporal hybridization}{A set of trees $T$ and an integer $k$}{Is $h_t(T)\leq k$?}

Note that there are sets of trees such that no temporal network exists that displays them.
In 		\cref{fig:non_temporal_trees} an example is given.
For such a set $T$ we have $h_t(T)=\infty$.
\begin{figure}
	\centering
	\includegraphics{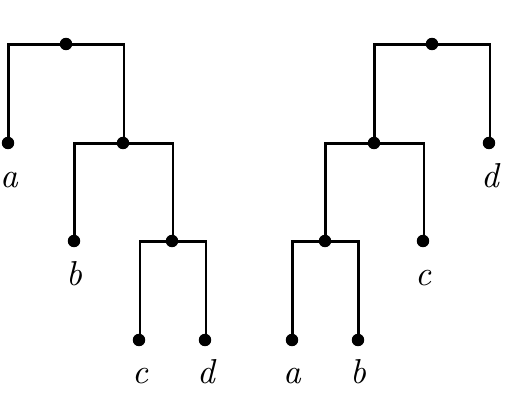}
	\caption{
		No temporal network that displays these trees exists.
		\label{fig:non_temporal_trees}
	}

\end{figure}

\subsection{Cherry picking sequences}

Temporal networks can now be characterized by so-called cherry-picking sequences \cite{humphries_cherry_2013}.
A \emph{cherry} is a set of children of a tree vertex that only has leaves as children.
So for binary trees a cherry is a pair of leaves.
We will write $(a,b)\in \mathcal{T}$ if $\{a,b\}$ is a \emph{cherry} of $\mathcal{T}$ and $(a,b)\in T$ if there is a $\mathcal{T}\in T$ with $(a,b)\in \mathcal{T}$.
First we introduce some notation to make it easier to speak about cherries.

\begin{defn}
	\label{def:ht}
	For a set of binary trees $T$ on the same taxa define $H(T)$ to be the set of leaves that is in a cherry in every tree.
\end{defn}

If two leaves are in a cherry together we call them \emph{neighbors}.
We also introduce notation to speak about the \emph{neighbors} of a given leaf: 

\begin{defn}
	Define $N_\mathcal{T}(x) =\{y\in\mathcal{X}:  (y,x)\in \mathcal{T} \}$.
	For a set of trees $T$ define $N_T(x)=\cup_{\mathcal{T}\in T}N_\mathcal{T}(x)$.
\end{defn}

\begin{defn}
	For a set of binary trees $T$ containing a leaf $x$ define $w_T(x)=|N_T(x)| -1$.
	We will also call this the \emph{weight} of $x$ in $T$.
	\label{def:weight}
\end{defn}

Using this theory, we can now give the definition of cherry picking sequences.

\begin{defn}
	A sequence of leaves $s=(s_1,s_2,\ldots, s_n)$ is a \emph{cherry picking sequence} (CPS) for a set of binary trees $T$ on the same set of taxa if it contains all leaves of $T$ exactly once and if for all $i\in [n-1]$ we have $s_i \in H(T\setminus \{s_1, \ldots, s_{i-1} \})$.
	The weight $w_T(s_1,\ldots s_n)$ of the sequence is defined as $w_T(s)=\sum_{i=1}^{n-1}w_{T\setminus \{s_1, \ldots, s_{i-1} \}} (s_{i})$.
	\label{def:cps}
\end{defn}
\begin{exmp}
	\label{exmp:simple_example}
	For the two trees in \cref{fig:simple_example}, $\leo{(\boldsymbol{b},e,\boldsymbol{c},d,a)}$ is a minimum weight cherry-picking sequence of weight $2$.
	Leaves $b$ and $c$ (indicated in bold) have weight $1$ and the rest of the leaves have weight $0$ in the sequence.
\end{exmp}
For a cherry picking sequence $s$ with $s_i=x$ we say that $x$ is \emph{picked} in $s$ at index $i$.
\begin{thm}[{\cite[Theorem 1, Theorem 2]{humphries_cherry_2013}}]
	\label{lem:exists_cherry_sequence}
	Let $T$ be a set of trees on $\mathcal{X}$.
	There exists a temporal network $\mathcal{N}$ that displays $T$ with $h_t(\mathcal{N})=k$ if and only if there exists a cherry-picking sequence $s$ for $T$ with $w_T(s)=k$.
\end{thm}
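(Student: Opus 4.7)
The plan is to prove each direction of the biconditional by induction on the number of leaves $n=|\mathcal{X}|$, with the inductive step driven by a single cherry pick.

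For the forward direction (network $\Rightarrow$ sequence), suppose $\mathcal{N}$ is a temporal network with temporal labeling $t$ that displays $T$ and has $r(\mathcal{N})=k$. I would identify a leaf $x$ to serve as $s_1$ by looking at the ``topmost'' tree vertex in the temporal ordering: a tree vertex $v$ whose $t(v)$ is maximal among tree vertices, so that every descendant of $v$ is either a leaf or a reticulation whose unique child is a leaf. Such a $v$ exists because a tree-vertex descendant would have strictly larger $t$, contradicting maximality. The leaves in the ``cloud'' below $v$ form a cherry-like bundle, and I would argue that any such leaf $x$ satisfies $x\in H(T)$: in each displayed $\mathcal{T}_j$, the path from the root to $x$ must terminate through this bundle, pairing $x$ with some neighbor as a cherry. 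Removing $x$ then yields a smaller temporal tree-child network displaying $T\setminus\{x\}$, and I would show that exactly $w_T(x)=|N_T(x)|-1$ reticulations collapse (the reticulations that fed $x$'s local structure, one per additional distinct neighbor). The inductive hypothesis produces a CPS for $T\setminus\{x\}$ of weight $k-w_T(x)$, to which I prepend $x$.

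For the backward direction (sequence $\Rightarrow$ network), I would build $\mathcal{N}$ from the tail of $s$ upward. Start from the trivial network on $\{s_n\}$, then for $i=n-1,n-2,\ldots,1$ add $s_i$ using its neighbor set $N_{T_i}(s_i)$, where $T_i=T\setminus\{s_1,\ldots,s_{i-1}\}$. Pick one neighbor $y\in N_{T_i}(s_i)$ and subdivide the arc entering $y$ by a fresh tree vertex $p$, hanging $s_i$ off $p$; for each of the remaining $w_{T_i}(s_i)$ distinct neighbors $y'$, insert a reticulation on the arc from $p$ to $s_i$ whose second parent is a fresh tree vertex subdividing the arc into $y'$. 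Extend the temporal labeling by assigning $p$ and the new structure a fresh time value strictly below every time used so far, placing each reticulation at the common time of its two parents (which is enforceable since those parents are introduced simultaneously). A short verification shows the resulting network displays $T$, is temporal and tree-child, and has $r(\mathcal{N})=\sum_i w_{T_i}(s_i)=w_T(s)=k$.

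The main obstacle is the forward direction, specifically the two quantitative claims: that the chosen $x$ lies in $H(T)$ (a cherry in \emph{every} tree, not merely some tree), and that removing $x$ decreases the reticulation count by exactly $w_T(x)$. The first uses the display relation together with the tree-child property to trace each root-to-$x$ path in $\mathcal{T}_j$ through the topmost bundle; the second requires exhibiting a bijection between the reticulations that collapse when $x$ is deleted and the extra neighbors of $x$ beyond the first. A subtler temporal bookkeeping issue is that after removing $x$ the network may acquire degree-one vertices or parallel arcs, and one must check that the induced contractions preserve a valid temporal labeling. In the backward direction the technical points are showing that the repeated insertions introduce no directed cycle and that the tree-child property survives each step.
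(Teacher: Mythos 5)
First, a point of comparison: the paper does not actually prove this theorem itself; it cites Humphries et al.\ and only restates the two constructions (repeatedly extract a leaf whose parent has maximal temporal label for one direction; rebuild the network from the tail of the sequence for the other). Your proposal follows essentially those same two constructions, so you are on the intended route. However, there are two concrete problems in your sketch. In the backward direction, inserting $w_{T_i}(s_i)$ separate in-degree-two reticulations in a chain on the arc from $p$ to $s_i$ violates the tree-child property as soon as $w_{T_i}(s_i)\geq 2$: every reticulation in the chain except the last has a reticulation as its only child, so the resulting network is not tree-child and hence not temporal under the paper's definitions. The paper's construction avoids this by creating a \emph{single} new vertex $p_{s_i}$ of in-degree $|N_{T_i}(s_i)|$ (reticulations of in-degree greater than two are allowed here), which contributes exactly $w_{T_i}(s_i)$ to $r(\mathcal{N})$ while keeping the only child of $p_{s_i}$ a leaf.

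In the forward direction, your claim that \emph{exactly} $w_T(x)$ reticulations collapse when $x$ is removed is too strong. The temporal-maximality argument shows that every parent of the reticulation above $x$ has a leaf child, so $N_T(x)$ is \emph{contained in} the set of those leaves, but the in-degree of that reticulation can strictly exceed $|N_T(x)|$ in a non-optimal network (redundant incoming arcs are possible). Hence each step removes \emph{at least} $w_T(x)$ from the reticulation count, and this direction of the construction yields $w_T(s)\leq r(\mathcal{N})$ rather than equality; the characterization of the minimum then follows by combining this inequality with the exact converse construction. The bijection you propose between collapsing reticulations and extra neighbours of $x$ does not exist for an arbitrary temporal network, so that step fails as written and should be weakened to an inequality.
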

This has been proven in \cite[Theorem 1, Theorem 2]{humphries_cherry_2013}.
The proof works by constructing a cherry picking sequence from a temporal network and vice versa.
Here, we only repeat the construction to aid the reader, and refer to \cite{humphries_cherry_2013} for the proof of correctness.

The construction of cherry picking sequence $s$ from a temporal network $\N$ with temporal labeling $t$ works in the following way: For $i=1$ choose $s_i$ to be a leaf $x$ of $\N$ such that $t(p_x)$ is maximal where $p_x$ is the parent of $x$ in $\N$.
Then increase $i$ by one and again choose $s_i$ to be a leaf $x$ of $\N\setminus \{s_1, \ldots, s_{i-1} \}$ that maximizes $t(p_x)$ where $p_x$ is the parent of $x$ in $\N\setminus \{s_1, \ldots, s_{i-1} \}$.
In \cite[Theorem 1, Theorem 2]{humphries_cherry_2013} it is shown that now $s$ is a cherry picking sequence with $w_T(s)=r(\N)$.

The construction of a temporal network $\N$ from a cherry picking $s$ is somewhat more technical: for cherry picking sequence $s_1,\ldots, s_t$, define $\N_{n}$ to be the tree, only consisting of a root and \leo{leaf~$s_n$}
Now obtain $\N_{i}$ from $\N_{i+1}$ by adding node $s_i$ and a new node $p_{s_i}$, adding edge $(p_{s_i},s_i)$ subdividing $(p_x,x)$ for every $x\in N_{T\setminus \{s_1, \ldots, s_{i-1} \}}(s_i)$ with node $q_x$ and adding an edge $(q_x,p_{s_i})$ and finally suppressing all nodes with in- and out-degree one.
Then $\N=\N_1$ displays $T$ and $r(\N)=w_T(s)$.

The theorem implies that the weight of a minimum weight CPS is equal to the temporal hybridization number of  the trees.
Because finding an optimal temporal reticulation network for a set of trees is an NP-hard problem \cite{humphries_complexity_2013}, this implies that finding a minimum weight CPS is an NP-hard problem.

\begin{defn}
	We call two sets of trees $T$ and $T'$ \emph{equivalent} if a bijection from $\mathcal{L}(T)$ to $\mathcal{L}(T')$ exists that transforms $T$ into $T'$.
	We call them equivalent because have the same structure and consequently the same (temporal-) hybridization \leo{number}, however the biological interpretation can be different.
	We will write this as $T\simeq T'$.
	\label{defn:equivalent}
\end{defn}


\section{Algorithm for constructing temporal networks from binary trees}
\label{sec:algorithm}
\label{section:foundation_of_v3}

Finding a cherry picking sequence comes down to deciding in which order to pick the leaves.
Our algorithm relies on the observation that this order does not always matter.
Intuitively the observation is that the order of two leaves in a cherry picking sequence only matters if they appear in a cherry together somewhere during the execution of the sequence.
Therefore the algorithm keeps track of the pairs of leaves for which the order of picking matters.
We will make this more precise in the remainder of this section.
The algorithm now works by branching on the choice of which element of a pair to pick first.
These choices are stored in a so-called constraint set.
Each call to the algorithm branches into subcalls with more constraints added to the constraint set.
As soon as it is known that a certain leaf has to be picked before all of its neighbors and is in a cherry in all of the trees, the leaf can be picked.

\begin{defn}
	Let $C\subseteq \mathcal{L}(T) \times \mathcal{L}(T)$.
	We call $C$ a \emph{constraint set} on $T$ if every pair $(a,b)\in C$ is a cherry in $T$.
	A cherry picking sequence $s=(s_1,\ldots, s_k)$ of $T$ \emph{satisfies} $C$ if for all $(a,b)\in C$, we have $s_i=a$ and $(a,b)\in T'$ and $w_{T'}(a)>0$ with $T'=T\setminus \{s_1,\ldots , s_{i-1} \}$ for some $i$.
\end{defn}
Intuitively, a cherry picking sequence satisfies a constraint set if for every pair $(a,b)$ in the set $a$ is picked with positive weight and $(a,b)$ is a cherry just before picking $a$.
This implies that $a$ occurs in the cherry picking sequence before $b$.

\markj{We now prove a series of results about what sets of constraints are valid, which will then be used to guide our algorithm.}

\begin{figure}
	\centering
	\includegraphics[scale = 0.2]{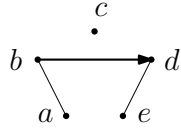}
	\caption{
	\markj{
	An example showing the neighbour relation for the trees in Figure~\ref{fig:simple_example}, together with a constraint $(b,d)$.
    Two elements $x,y \in X$ are depicted as adjacent if $x \in N_T(y)$ i.e. if $x$ and $y$ appear in a cherry together.
    An arc from $x$ to $y$ indicates the presence of a constraint $(x,y)$.
    }
		\label{fig:cherry_adjacency_example}
	}
\end{figure}

\begin{obs}\label{obs:ConstraintThreeCases}
	Let $s$ be a cherry picking sequence for $T$ and $w_T(x) > 0$ and $a,b\in N_T(x)$.
	Then $s$ satisfies one of the following constraint sets: \\$\{(a,x)\}, \{(b,x)\}, \{(x,a),(x,b)\}$. 
	\label{lem:branch_in_three}
\end{obs}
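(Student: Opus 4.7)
The plan is to case-split on which of $x,a,b$ is picked earliest in $s$. Let $i_x,i_a,i_b$ denote the positions of $x,a,b$ in $s$, and write $T^{(j)}=T\setminus\{s_1,\ldots,s_{j-1}\}$. Since $a,b\in N_T(x)$, fix trees $\mathcal{T}_a,\mathcal{T}_b\in T$ in which $(a,x)$ and $(b,x)$ are cherries respectively; if $a\neq b$ these two trees are distinct, because $x$ lies in at most one cherry of any single binary tree. The key simple fact I will use repeatedly is that pruning a set of leaves not containing either endpoint of a cherry cannot destroy that cherry: suppressing vertices of in- and out-degree one never separates two surviving siblings.

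If $i_x<\min(i_a,i_b)$, then at step $i_x$ both $a$ and $b$ are still present in $T^{(i_x)}$, and the cherries $(x,a)$ in $\mathcal{T}_a$ and $(x,b)$ in $\mathcal{T}_b$ persist by the fact above. Hence $\{a,b\}\subseteq N_{T^{(i_x)}}(x)$, so $w_{T^{(i_x)}}(x)\geq 1$, and $s$ satisfies $\{(x,a),(x,b)\}$.

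If instead $i_a<i_x$ and $i_a<i_b$, then at step $i_a$ both $x$ and $b$ are still present, and $(a,x)$ persists in the pruned copy of $\mathcal{T}_a$, so $(a,x)\in T^{(i_a)}$. To verify $w_{T^{(i_a)}}(a)>0$, note that because $s_{i_a}=a$ we have $a\in H(T^{(i_a)})$, so $a$ lies in a cherry in every tree of $T^{(i_a)}$, in particular in the pruned copy of $\mathcal{T}_b$. In that pruned tree, however, $(b,x)$ is still a cherry (both $b$ and $x$ survive), so $x$'s cherry partner there is $b$ rather than $a$; hence $a$ must be paired with some leaf $y\neq x$, giving $\{x,y\}\subseteq N_{T^{(i_a)}}(a)$ and therefore $w_{T^{(i_a)}}(a)\geq 1$. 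Thus $s$ satisfies $\{(a,x)\}$. The remaining case $i_b<i_x$ and $i_b<i_a$ is symmetric and yields that $s$ satisfies $\{(b,x)\}$; since these three cases are exhaustive, the observation follows. The one place to exercise care is the weight verification in the middle case, where one has to rule out the possibility that $a$'s only remaining neighbour at step $i_a$ is $x$, and this is precisely what $\mathcal{T}_b$ prevents.
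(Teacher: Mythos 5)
Your proof is correct and follows essentially the same route as the paper's: case-split on which of $x,a,b$ appears first in $s$, use persistence of cherries under pruning for the case where $x$ comes first, and use the surviving cherry $(b,x)$ in $\mathcal{T}_b$ to force $a$'s cherry partner away from $x$ (hence $w_{T^{(i_a)}}(a)>0$) in the case where $a$ comes first. You merely spell out the weight verification that the paper leaves implicit; like the paper, your middle case tacitly assumes $a\neq b$, which is how the observation is always invoked.
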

\begin{proof}
	Let $i$ be the lowest index such that $s_i \in\{x,a,b\}$.
	If $s_i=x$, then $(x,a)\in T\setminus \{s_1,\ldots, s_{i-1} \}$ and $(x,b)\in T\setminus \{s_1,\ldots, s_{i-1} \}$, so $s$ satisfies $\leo{\{(x,a),(x,b)\}}$.
	If $s_i=a$, then there is a $\T\in T\setminus \{s_1,\ldots, s_{i-1} \}$ with $(x,b)\in \T$, so $(a,x)\notin \T$, which implies that $w_{T\setminus \{s_1,\ldots, s_{i-1} \}}(s_i)>0$, so $s$ satisfies $\{(a,x)\}$.
	Similarly if $s_i=b$ then $s$ satisfies $\{(b,x)\}$.
\end{proof}
\begin{exmp}\label{ex:ConstraintThreeCases}
    The trees in \cref{subfig:first_tree} and \cref{subfig:second_tree} contain the cherries $(a,b)$ and $(d,b)$. So by \cref{lem:branch_in_three} every cherry picking sequence for these trees satisfies one of the constraint sets $\{(a,b)\}, \{(d,b)\}, \{(b,a),(b,d)\}$. For example, $\leo{({\bf b},d,{\bf c},e,a)}$ is a cherry picking sequence of weight $2$ for these trees. This sequence satisfies the constraint set $\{(b,a),(b,d)\}$. \markj{See Figure~\ref{fig:constraint_three_cases}.} 
\end{exmp}

\begin{figure}
	\centering
	\hfill
	\begin{subfigure}{.3\textwidth}
	\includegraphics[scale=.3]{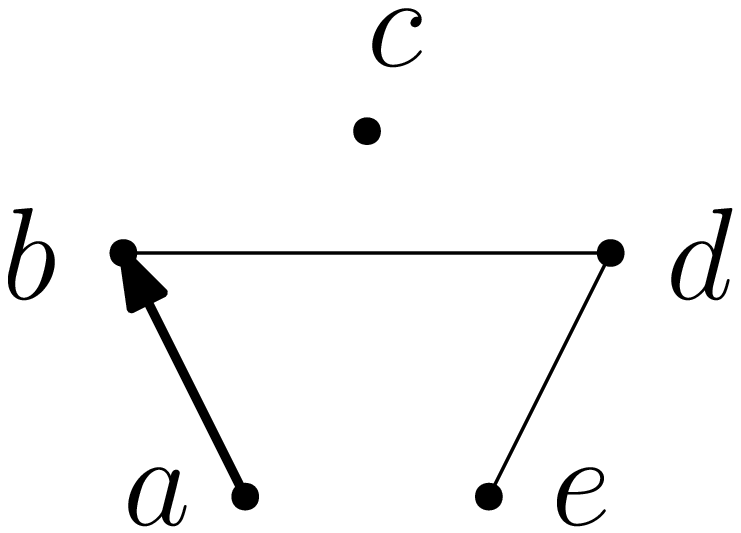}
	\end{subfigure}
	\hfill
	\begin{subfigure}{.3\textwidth}
	\includegraphics[scale=.3]{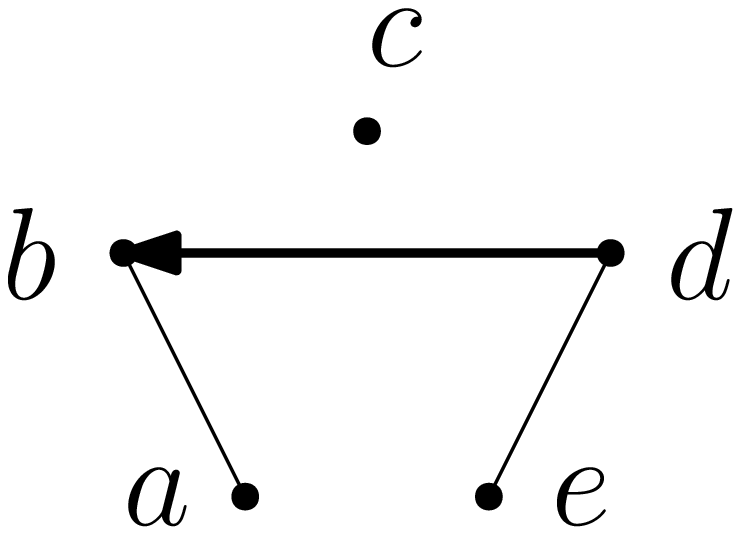}
	\end{subfigure}
	\hfill
	\begin{subfigure}{.3\textwidth}
	\includegraphics[scale=.3]{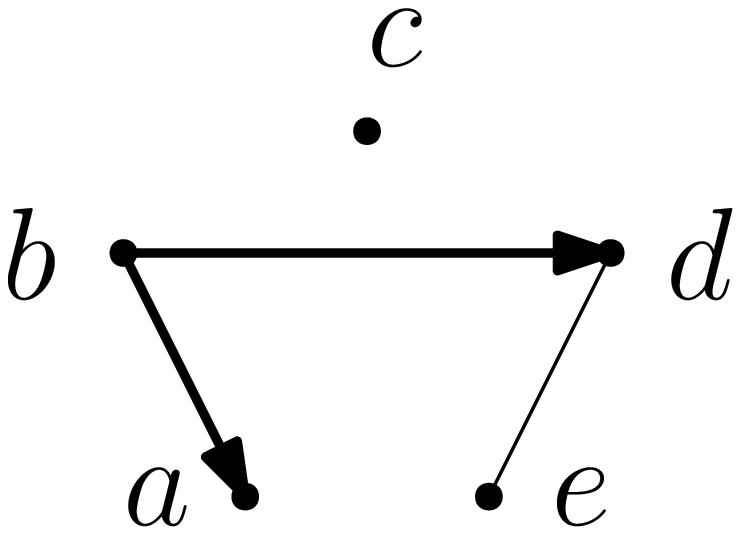}
	\end{subfigure}
	\hfill
	\caption{
	\markj{
    Illustration of Example~\ref{ex:ConstraintThreeCases}, showing the possible constraint sets on $a,b,d$ implied by 
	Observation~\ref{obs:ConstraintThreeCases}.
    }
		\label{fig:constraint_three_cases}
	}
\end{figure}

This observation implies that the problem can be reduced to three subproblems, corresponding to either appending $\{(a,x)\}$, $\{(b,x)\}$ or  $\{(x,a),(x,b)\}$ to $C$.
As we will see, this is used by the algorithm.
It is possible to implement an algorithm using only this rule, but the running time of the algorithm can be improved by using a second rule that branches into only two subproblems when it is applicable.
The rule relies on the following observation. Note that we will write $\pi_i(C)$ for the set obtained by projecting every element of $C$ to the $i$'th coordinate.
\begin{obs}\label{obs:ConstraintTwoCases}
	If $C$ is satisfied by $s$ then for all $x\in \pi_1(C)$ and $y\in N_T(x)$ we have that either $C\cup \{(y,x)\}$ or $C\cup \{ (x,y)\}$ is also satisfied by $s$.
	\label{lem:branch_in_two}
\end{obs}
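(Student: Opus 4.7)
The plan is to fix an index $i$ with $s_i = x$ that witnesses the satisfaction of some $(x,z) \in C$, so that $w_{T_{i-1}}(x) > 0$ and $(x,z) \in T_{i-1}$ (writing $T_k := T \setminus \{s_1, \ldots, s_k\}$), and then to split according to where $y$ appears in $s$. Throughout I use a basic fact about leaf removal: if $(a,b)$ is a cherry of a tree $\mathcal{T}$ and $a,b \notin A$, then $(a,b)$ is still a cherry of $\mathcal{T} \setminus A$, because removing other leaves and suppressing degree-two vertices does not affect the parent of $a$ and $b$.

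If $y = z$ then $(x,y) \in C$ already and there is nothing to prove. Otherwise $y \neq z$ and I let $j$ be the index with $s_j = y$. Since $y \in N_T(x)$, I fix some $\mathcal{T}_1 \in T$ in which $(y,x)$ is a cherry. In the case $j > i$, both $y$ and $x$ are still present at time $i-1$, so $(y,x)$ survives in $\mathcal{T}_1 \setminus \{s_1, \ldots, s_{i-1}\}$ and $(x,y) \in T_{i-1}$; the same index $i$ then witnesses that $C \cup \{(x,y)\}$ is satisfied by $s$. In the case $j < i$ with $w_{T_{j-1}}(y) > 0$, analogously $(y,x) \in T_{j-1}$, and index $j$ witnesses that $C \cup \{(y,x)\}$ is satisfied.

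The only remaining case is $j < i$ with $w_{T_{j-1}}(y) = 0$, and this is the heart of the argument. Here $y$ has a unique neighbour in $T_{j-1}$, which by the persistence fact above must be $x$, so $(y,x)$ is in fact a cherry of every tree of $T_{j-1}$. Since $C$ is a constraint set on $T$, there is an original tree $\mathcal{T}_0 \in T$ with $(x,z)$ a cherry of $\mathcal{T}_0$; specializing, $(y,x)$ is a cherry of $\mathcal{T}_0 \setminus \{s_1, \ldots, s_{j-1}\}$. Because $x$ appears in at most one cherry per tree and $y \neq z$, the original cherry $(x,z)$ of $\mathcal{T}_0$ cannot still be present in $\mathcal{T}_0 \setminus \{s_1, \ldots, s_{j-1}\}$, which by the persistence fact forces $z \in \{s_1, \ldots, s_{j-1}\}$. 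But then $z \in \{s_1, \ldots, s_{i-1}\}$, so $(x,z) \notin T_{i-1}$, contradicting the choice of $i$. Hence this case cannot occur, and the observation follows.

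The main obstacle is precisely this last subcase, and the essential ingredient is the validity of $C$ as a constraint set on the original $T$: without requiring $(x,z)$ to be a cherry of $T$ itself (and not merely of some reduced $T_k$), a weight-zero pick of $y$ before $x$ could genuinely leave neither $(x,y)$ nor $(y,x)$ satisfiable, and the contradiction used above would evaporate.
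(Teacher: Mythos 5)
Your proof is correct, but it takes a genuinely different route from the paper's. The paper proves this observation in two lines by reducing to \cref{obs:ConstraintThreeCases}: it takes the pair $(x,a)\in C$ witnessing $x\in\pi_1(C)$, applies that observation to the two neighbours $a,y\in N_T(x)$ to conclude that $s$ satisfies one of $\{(a,x)\}$, $\{(y,x)\}$, $\{(x,a),(x,y)\}$, and then eliminates the first alternative because $s$ already satisfies $\{(x,a)\}$ and the constraints $(x,a)$ and $(a,x)$ are mutually exclusive (each forces the other leaf to survive until its own witness index). You instead argue from first principles, splitting on whether $y$ is picked before or after the witness index $i$ for $x$, and you resolve the delicate subcase --- $y$ picked first with weight zero --- by a contradiction with the constraint witness itself: if $N_{T_{j-1}}(y)=\{x\}$ then the cherry $(x,z)$ of the witnessing tree must already have been destroyed, forcing $z$ to have been picked before $j$ and hence before $i$, contradicting $(x,z)\in T_{i-1}$. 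The paper's proof of \cref{obs:ConstraintThreeCases} dispatches the analogous difficulty differently, by exhibiting a second neighbour of $x$ whose cherry survives and thereby forcing the first-picked leaf to have positive weight. Your version is more self-contained and makes explicit the role of $C$ being a constraint set on the original $T$ (every pair a cherry in $T$), which the paper's reduction uses only implicitly; the paper's version is shorter and reuses the three-way branching lemma that the algorithm needs anyway. Both are valid.
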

\begin{proof}
	If $x\in \pi_1(C)$ then $C$ contains a pair $(x, a)$.
	If $a=y$ it is trivial that $s$ satisfies $C\cup \{ (x,y)\}=C$.
	Otherwise \cref{lem:branch_in_three} implies that $s$ satisfies one of the constraint sets $\{(a,x)\}, \{(y,x)\}, \{(x,a),(x,y)\}$.
	Because $s$ satisfies $\{(x,a)\}$, $s$ can not satisfy $\{(a,x)\}$.
	So $s$ will satisfy either $\{(y,x)\}$ or $\{(x,a),(x,y)\}$.
\end{proof}
Using this observation we can let the algorithm branch into two paths by either adding $(x,y)$ or $(y,x)$ to the constraint set $C$ if $x\in\pi_1(C)$. 

\begin{exmp}\label{ex:ConstraintTwoCases}
\leo{Consider again the situation in Example~\ref{ex:ConstraintThreeCases}. Suppose we guess that the solution satisfies the constraint set $\{(d,b)\}$. Then we have~$d\in\pi_1(C)$. Hence, we are in the situation of Observation~\ref{obs:ConstraintTwoCases} and we can conclude that either $(d,e)$ or $(e,d)$ can be added to the constraint set~$C$. See Figure~\ref{fig:constraint_two_cases}.}
\end{exmp}

\begin{figure}
	\centering
	\begin{subfigure}{.4\textwidth}
	\includegraphics[scale=.65]{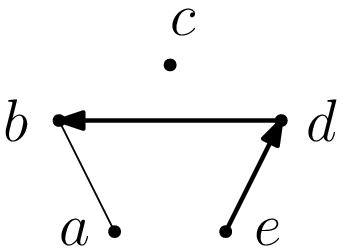}
	\end{subfigure}
	\begin{subfigure}{.4\textwidth}
	\includegraphics[scale=.65]{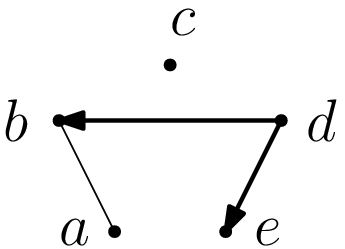}
	\end{subfigure}
	\caption{
	\markj{
    Illustration of Example~\ref{ex:ConstraintTwoCases}, showing the two possible constraints on $d$ and $e$ implied by 
	Observation~\ref{obs:ConstraintTwoCases}, in the case that there already exists a constraint $\leo{(d,b)} \in C$ and thus $d \in \pi_1(C)$.
    }
		\label{fig:constraint_two_cases}
	}
\end{figure}
 
We define $G(T,C)$ to be the set of cherries for which there is no constraint in $C$, so $G(T,C)=\{(x,y):(x,y) \in T \land (x,y), (y,x) \notin C  \}$.
Observe that $(x,y)\in G(T,C)$ is equivalent with $(y,x) \in G(T,C)$.

\markj{Before proving the next result about constraints, we need the following lemma. This}
states that if we have a set of trees, a leaf that is in a cherry in all of the trees and a corresponding cherry picking sequence then the following holds: for every element in a cherry picking sequence, we can either move it to the front of the sequence without affecting the weight of the sequence or there is a neighbor of this element that occurs earlier in the sequence.
\begin{lem}
	\label{lem:twoconditions}
	Let $(s_1,s_2,\ldots)$ be a cherry picking sequence for a set of trees $T$ that satisfies constraint set $C$.
	Let $x\in H(T)$.
	Then at least one of the following statements is true:
	\begin{enumerate}
		[ {(}1{)} ] \item $\exists i: s_i=x $ and $s'=(s_i,s_1,\ldots ,s_{i-1},s_{i+1},\ldots)$ is a cherry picking sequence for $T$ satisfying $C$ and $w(s)=w(s')$.
		\item If $s_i=x$ then $\exists j: s_j \in N_T(x)$ such that $j<i$.
	\end{enumerate}
\end{lem}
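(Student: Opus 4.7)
The plan is to prove the contrapositive structural claim: assume statement~(2) fails and deduce~(1). So let $i$ be the index with $s_i=x$, and assume no $s_j$ with $j<i$ lies in $N_T(x)$. I will show that $s'=(x,s_1,\ldots,s_{i-1},s_{i+1},\ldots)$ is a cherry picking sequence for $T$ that satisfies $C$ and has the same weight as~$s$.

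The technical heart of the argument is a \emph{stability} claim for $x$'s cherry: since $x\in H(T)$, in every tree $\mathcal T\in T$ the leaf $x$ has a unique cherry partner $z_\mathcal T\in N_T(x)$. Because no element of $\{s_1,\ldots,s_{i-1}\}$ lies in $N_T(x)$, each $z_\mathcal T$ survives deletion of any $S\subseteq\{s_1,\ldots,s_{i-1}\}$ from $\mathcal T$, and a brief topological argument (removing other leaves and suppressing degree-two nodes does not disturb the common parent of $x$ and $z_\mathcal T$) shows that $(x,z_\mathcal T)$ remains a cherry throughout. Two consequences drive the whole proof: (a) $N_{T\setminus S}(x)=N_T(x)$, and hence $w_{T\setminus S}(x)=w_T(x)$, for every such $S$; and (b) for every $j\leq i$ and every $\mathcal T$, the cherry partner of $s_j$ in $\mathcal T\setminus\{s_1,\ldots,s_{j-1}\}$ cannot be $x$---otherwise the uniqueness of $x$'s cherry partner in a binary tree would force $s_j=z_\mathcal T\in N_T(x)$, contradicting the hypothesis.

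Given (a) and (b), the three verifications become bookkeeping. \emph{CPS property:} $x\in H(T)$ is given; for $j<i$, fact~(b) says that additionally removing $x$ does not disturb the cherry containing $s_j$ in any tree, so $s_j\in H(T\setminus\{x,s_1,\ldots,s_{j-1}\})$; for $j>i$ the reduced tree set is unchanged since $x=s_i$ is merely repositioned. \emph{Weight:} the new leading term $w_T(x)$ equals $w_{T\setminus\{s_1,\ldots,s_{i-1}\}}(x)$ by~(a), matching exactly the term that $x$ contributed originally, and by~(b) the weight of each $s_j$ (for $j<i$) is unchanged by the extra deletion of $x$; the tail is identical. \emph{Constraints:} a constraint $(x,b)\in C$ is verified at $s'_1=x$ because $(x,b)\in T$ (since $b=z_\mathcal T$ for the tree witnessing the original satisfaction, and this cherry predates all deletions) and $w_T(x)=w_{T\setminus\{s_1,\ldots,s_{i-1}\}}(x)>0$; a hypothetical constraint $(a,x)\in C$ cannot occur, since it would require $a=s_j$ for some $j<i$ with $a\in N_{T\setminus\{s_1,\ldots,s_{j-1}\}}(x)=N_T(x)$, which is forbidden; any other constraint $(a,b)\in C$ is verified in the same trees as before (now additionally with $x$ removed if $a$ precedes $x$ in $s'$), and (b) ensures this extra deletion changes neither the cherry $(a,b)$ nor the weight of~$a$.

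The main subtlety to nail down is fact~(b): one has to rule out, in every intermediate reduced tree, the possibility that some $s_j$ has ``accidentally'' acquired $x$ as its cherry partner. This relies on the binary-tree uniqueness of cherry partners combined with the persistence of $z_\mathcal T$; once that is established, everything else is substitution and unpacking of definitions.
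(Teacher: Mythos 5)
Your proposal is correct and follows essentially the same route as the paper: the paper takes $r$ to be the first index with $s_r\in N_T(x)\cup\{x\}$ and splits into the two cases ($s_r\neq x$ gives (2), $s_r=x$ gives (1)), which is exactly your contrapositive, and then argues as you do that since none of $s_1,\ldots,s_{i-1}$ lie in $N_T(x)$, moving $x$ to the front leaves every other element in exactly the same cherries with the same weight, so the CPS property, the weight, and the satisfaction of $C$ are all preserved. Your facts (a) and (b) are just a slightly more explicit packaging of the paper's claim that $(s'_j,z)$ is a cherry after the reordering iff $(s_k,z)$ was before.
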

\begin{proof}
	Let $r$ be the smallest number such that $s_r\in N_T(x) \cup \{x\}$.
	In case $s_r\neq x$ it follows directly that condition (2) holds for $j=r$. 
	For $s_r=x$ we will prove that condition (1) holds with $i=r$. 
	\markj{The key idea is that, because $s_i$ is not in a cherry with any of $s_1,\ldots, s_{i-1}$, removing $s_i$ first will not have any effect on the cherries involving $s_1,\ldots, s_{i-1}$.}

	\markj{More formally,} take an arbitrary tree $\T\in T$.
	Now take arbitrary $j,k$ with $s'_j=s_k$.
	Now we claim that for an arbitrary $z$ we have 
	\markj{$(s'_j,z) \in \T\setminus\{s'_1,\ldots, s'_{j-1}\}$}
	if and only if 
	\markj{$(s_k,z) \in \T\setminus\{s_1,\ldots, s_{k-1}\}$}.
	For $s'_j=s'_1=s_i=s_k$ this is true because none of the elements $s_1,\ldots, s_{i-1}$ are in $N_T(s_i)$ so for each $z$ we have $(s'_1,z)\in \T$ if and only if 
	\markj{$(s_i,z)\in \T\setminus\{s_1,\ldots,s_{i-1}\}$}.

	For $k$ with $k<i$ we have $s'_{j+1}=s_j$.
	Because $s_i\notin N_T(s_j)$ we have that $(s_j,z) \in \T\setminus \{s'_1,\ldots, s'_j\}=\{s_1,\ldots, s_{j-1},s_i\}$ if and only if $(s_j,z) \in \T\setminus \{s_1,\ldots, s_{j-1}\}$.

	For $k>i$ we have $j=k$ and also $\T\setminus \{s'_1, \ldots s'_{j-1}\}=\T\setminus \{s_1, \ldots s_{j-1}\}$ because $\{s_1, \ldots s_{j-1}\}=\{s'_1, \ldots s'_{j-1}\}$.
	It directly follows that 
	\markj{$(s'_j,z)\in \T\setminus\{s'_1, \ldots s'_{j-1}\}$}
	if and only if 
	\markj{$(s_j,z)\in \T\setminus\{s_1, \ldots s_{j-1}\}$.}

	Now because we know that for each $k$ we have \markj{$s_k\in H(T\setminus\{s_1,\ldots, s_{k-1}\})$}
	and $s_k=s'_j$ is in exactly the same cherries in 
	\markj{$T\setminus\{s_1,\ldots, s_{k-1}\}$} 
	as in \markj{$T\setminus\{s'_1,\ldots, s'_{j-1}\}$}, 
	we know that 
	\markj{$s'_j\in H(T\setminus\{s'_1,\ldots, s'_{j-1}\})$}, 
	that 
	\markj{$w_{T\setminus\{s'_1,\ldots, s'_{j-1}\}}(s'_j)=w_{T\setminus\{s_1,\ldots, s_{k-1}\}}(s_k)$} 
	and that $s'$  satisfies $C$.
	This implies that $s'$ is a CPS with $w_T(s)=w_T(s')$.

\end{proof}

As soon as we know that a leaf in $H(T)$ has to be picked before all its neighbors we can pick it, as stated by the following lemma.
\begin{lem}
	Suppose $x\in H(T)$ and constraint set $C$ is satisfied by cherry picking sequence $s$ of $T$, with $\{(x,n):n\in N_T(x) \}\subseteq C$.
	Then there is a cherry picking sequence $s'$ with $s'_1 = x$ and $w(s')=w(s)$.
	\label{lem:remove_immediately} 
\end{lem}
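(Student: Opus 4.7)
The plan is to deduce this lemma almost directly from Lemma~\ref{lem:twoconditions}, by arguing that the second alternative in that lemma cannot hold under the present hypotheses.

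First I would invoke Lemma~\ref{lem:twoconditions} on the sequence~$s$ and the leaf~$x$: since $x\in H(T)$ and $s$ is a CPS for~$T$ satisfying~$C$, either (1) the rotation that moves~$x$ to the front of~$s$ is itself a CPS of the same weight satisfying~$C$, or (2) some neighbour of~$x$ is picked strictly before~$x$ in~$s$. Conclusion~(1) gives exactly the sequence~$s'$ demanded by the lemma (note $s'_1=s_i=x$), so it would suffice to rule out~(2).

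To exclude~(2), I would argue by contradiction: suppose $s_i=x$ and $s_j=n$ for some $n\in N_T(x)$ with $j<i$. By hypothesis $(x,n)\in C$, and $s$ satisfies~$C$, so unfolding the definition of ``satisfies'' there must exist an index~$i^\star$ with $s_{i^\star}=x$ such that $(x,n)$ is a cherry in $T'=T\setminus\{s_1,\ldots,s_{i^\star-1}\}$ and $w_{T'}(x)>0$. Since a CPS picks every leaf exactly once we have $i^\star=i$, and for $(x,n)$ to be a cherry of $T\setminus\{s_1,\ldots,s_{i-1}\}$ the leaf~$n$ must still be present at that stage, i.e.\ $n\notin\{s_1,\ldots,s_{i-1}\}$. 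This contradicts $s_j=n$ with $j<i$, so (2) is impossible and (1) must apply.

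I do not expect any serious obstacle here. The only content beyond citing Lemma~\ref{lem:twoconditions} is the short bookkeeping translation of ``$(x,n)\in C$ and $s$ satisfies~$C$'' into ``$n$ is picked after~$x$ in~$s$''; the main thing to be careful about is that this argument uses both pieces of the definition of ``satisfies'' (the cherry condition, which forces $n$ to survive, and uniqueness of the index at which $x$ is picked), rather than the weight condition, which plays no role in this particular lemma.
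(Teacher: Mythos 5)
Your proposal is correct and follows essentially the same route as the paper: invoke Lemma~\ref{lem:twoconditions}, then rule out alternative~(2) by observing that $(x,n)\in C$ together with the definition of ``satisfies'' forces every neighbour $n$ to still be present when $x$ is picked. Your version merely spells out the bookkeeping (uniqueness of the picking index and survival of $n$) slightly more explicitly than the paper's one-line justification.
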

\begin{proof}
	This follows from \cref{lem:twoconditions}, because statement (2) can not be true because for every $j$ with $s_j\in N_T(x)$ we have $(x, s_j)\in C$ and therefore $i < j$ for $s_i=x$.
	So statement (1) has to hold which yields a sequence $s'$ with $w(s)=w(s')$ and $s'_1=x$.
\end{proof}

The following lemma shows that we can also safely remove all leaves that are in a cherry with the same leaf in every tree.
\begin{lem}
	Let $s$ be a cherry picking sequence for $T$ satisfying constraint set $C$ with $x\notin \pi_1(C)$ and $x\notin \pi_2(C)$.
	If $x\in H(T)$ and $w_T(x)=0$, then there is a cherry picking sequence $s'$ with $s'_1 = x$ and $w(s')=w(s)$ satisfying $C$.
	\label{lem:remove_trivial}
\end{lem}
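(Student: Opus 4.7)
The plan is to invoke Lemma~\ref{lem:twoconditions} on $s$ and $x$. If condition~(1) holds, the sequence obtained by moving $x$ to the front is already the desired $s'$, with $w(s')=w(s)$ and satisfying $C$. If instead condition~(2) holds, then there is some $j<i$ (where $s_i=x$) with $s_j\in N_T(x)$; since $w_T(x)=0$ forces $|N_T(x)|=1$, this neighbour is the unique $y\in N_T(x)$ and $s_j=y$. My approach in this case is to first construct an auxiliary cherry picking sequence $s''$ by swapping the positions of $x$ and $y$ in $s$, and then reapply Lemma~\ref{lem:twoconditions} to $s''$, where condition~(2) must now fail because $y$, the only neighbour of $x$, has been moved past $x$.

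The structural fact enabling the swap is that $x$ and $y$ form a cherry in every tree of $T$: since $x\in H(T)$ lies in some cherry in each tree and its only partner across all trees is $y$, they share a parent in every $\mathcal{T}\in T$. It follows that $y\in H(T)$ with $w_T(y)=0$ as well, and also that, in each $\mathcal{T}\in T$, the two operations $\mathcal{T}\setminus\{x\}$ and $\mathcal{T}\setminus\{y\}$ produce trees that coincide up to swapping the labels $x$ and $y$ at the surviving leaf over their common parent. Because none of the leaves removed at positions $j+1,\ldots,i-1$ equals $x$ or $y$, this label-swap isomorphism propagates throughout, giving a correspondence between $T\setminus\{s_1,\ldots,s_{k-1}\}$ and $T\setminus\{s''_1,\ldots,s''_{k-1}\}$ for every $j\le k\le i$. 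Cherry membership and weight therefore coincide for every leaf other than $x$ or $y$; at positions $k=j$ and $k=i$ the picked leaves themselves get swapped, but both have weight $0$ in the relevant set of trees. Hence $s''$ is a valid CPS with $w(s'')=w(s)$.

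To confirm that $s''$ still satisfies $C$, I would show that neither $y\in\pi_1(C)$ nor $y\in\pi_2(C)$, so that no constraint in $C$ mentions $x$ or $y$ and the swap leaves $C$-satisfaction untouched. A constraint $(y,z)\in C$ would require, via satisfaction by $s$ at position $j$, that $w_{T\setminus\{s_1,\ldots,s_{j-1}\}}(y)>0$; but $y$'s unique cherry partner is still $x$ at that moment, so its weight is $0$, a contradiction. A constraint $(z,y)\in C$ would require the cherry $(z,y)$ at the moment $z$ is picked, which forces $z=x$ and contradicts $x\notin\pi_1(C)$.

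The main obstacle is the bookkeeping for the label-swap isomorphism between the two sequences of trees, which drives both the validity of $s''$ and the equality $w(s'')=w(s)$; the argument hinges on the symmetry between $x$ and $y$ as mutual cherry partners in every tree of $T$. Once $s''$ is established, the second application of Lemma~\ref{lem:twoconditions} immediately produces $s'$ with $s'_1=x$, $w(s')=w(s)$, and satisfying $C$.
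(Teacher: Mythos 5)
Your proof is correct and follows essentially the same route as the paper's: the paper likewise invokes Lemma~\ref{lem:twoconditions} and then appeals to the symmetry of $x$ and $y$ (swapping them fixes $T$) to convert the "$y$ first" outcome into the "$x$ first" one, though it states this in three lines where you spell out the swap isomorphism and the preservation of $C$-satisfaction explicitly. One small imprecision: at position $i$ the picked leaf need not have weight $0$ (after its partner is removed it may acquire new neighbours), but equality of the weights at that position still follows from your label-swap isomorphism, so the conclusion stands.
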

\begin{proof}
	Because $w_T(x)=0$ we have $N_{T}(x)=\{y\}$.
	Then from \cref{lem:twoconditions} it follows that a sequence $s'$ exists such that either $s''=(x)|s'$ or $s''=(y)|s'$ is a cherry picking sequence for $T$ and $w_T(s'')=w(s)$ and $s''$ satisfies $C$.
	However, because the position of $x$ and $y$ in the trees are equivalent (i.e. swapping $x$ and $y$ does not change $T$) both are true.
\end{proof}

\markj{We are almost ready to describe our algorithm. There is one final piece to introduce first: the measure $P(C)$. 
This is a measure on a set of constraints $C$, which will be used to provide a termination condition for our algorithm. 
We show below that $P(C)$ provides a lower bound on the weight of any cherry picking sequence satisfying $C$, and so if during any recursive call to the algorithm $P(C)$ is greater than the desired weight, we may stop that call.}


\label{section:the_algorithm}
\begin{defn}
	Let $\psi = \frac{\log(2)}{\log(5)} \markj{\simeq 0.4307}$.
	Let $P(C) = \psi \cdot |C| + (1-2\psi) |\pi_1(C)|$.
\end{defn}


\begin{lem}
	If cherry picking sequence $s$ for $T$ satisfies $C$, then $w_T(s)\geq P(C)$.
	\label{lem:bound_w_pc}
\end{lem}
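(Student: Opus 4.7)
The plan is to decompose the weight $w_T(s)$ by looking at each element $x \in \pi_1(C)$ individually, and show that the contribution of picking $x$ is already enough to pay for the share of $P(C)$ attributable to $x$.

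For each $x \in \pi_1(C)$, let $d_x := |\{b : (x,b) \in C\}|$, so that $\sum_{x \in \pi_1(C)} d_x = |C|$. Let $i$ be the index with $s_i = x$ and set $T' := T \setminus \{s_1,\ldots,s_{i-1}\}$. Since $s$ satisfies $C$, we have $(x,b) \in T'$ for every $(x,b) \in C$; this means every such $b$ lies in $N_{T'}(x)$, so $|N_{T'}(x)| \geq d_x$, and therefore $w_{T'}(x) = |N_{T'}(x)| - 1 \geq d_x - 1$. The satisfaction condition also forces $w_{T'}(x) > 0$, hence $w_{T'}(x) \geq 1$. Combining both bounds yields
\[
w_{T'}(x) \geq \max(d_x - 1, 1).
\]
Since distinct elements of $\pi_1(C)$ are picked at distinct indices, and each such pick contributes a non-negative term to $w_T(s)$, we obtain
\[
w_T(s) \;\geq\; \sum_{x \in \pi_1(C)} \max(d_x - 1, 1).
\]

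The remaining step is a purely elementary inequality: for every integer $d \geq 1$,
\[
\max(d - 1, 1) \;\geq\; \psi \cdot d + (1 - 2\psi).
\]
This is straightforward to verify by cases: for $d = 1$ the right-hand side equals $1 - \psi \leq 1$; for $d = 2$ both sides equal $1$; and for $d \geq 3$ the inequality rearranges to $(1-\psi)(d - 2) \geq 0$, which holds since $0 < \psi < 1$. I do not expect any real obstacle here; the only thing to be a little careful about is the $d_x = 1$ case, where the factor $1 - 2\psi < 1/2$ is what makes the inequality go through (a larger coefficient on $|\pi_1(C)|$ would fail), and the $d_x = 2$ case, where equality holds and thus pins down the trade-off encoded by the choice of $\psi$.

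Summing the elementary inequality over $x \in \pi_1(C)$ gives
\[
w_T(s) \;\geq\; \sum_{x \in \pi_1(C)} \bigl(\psi d_x + (1 - 2\psi)\bigr) \;=\; \psi \, |C| + (1 - 2\psi) \, |\pi_1(C)| \;=\; P(C),
\]
which completes the proof.
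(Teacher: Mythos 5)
Your proof is correct and follows essentially the same route as the paper's: both decompose $P(C)$ over the elements $x\in\pi_1(C)$ via the per-leaf constraint count $d_x=|C_x|$, use the satisfaction condition to get $w_{T'}(x)\geq \max(d_x-1,1)$, and finish with the same case analysis on $d_x=1$ versus $d_x\geq 2$. The only cosmetic difference is that you isolate the elementary inequality $\max(d-1,1)\geq \psi d+(1-2\psi)$ explicitly, whereas the paper folds it into the case analysis.
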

\begin{proof}
	For $x=s_i$ with $i<n$ we prove that for $C_x:=\{(a,b):(a,b)\in C \land a=x \}$ we have $w_{T\setminus \{s_1, \ldots, s_{i-1} \}}(x)\geq P(C_x)$.
	If $|C_x|=0$, then $P(C_x)=0$ and the inequality is trivial.
	If $|C_x|=1$, then there is some $(x,b)\in C$, which implies that $w_{T\setminus \{s_1, \ldots, s_{i-1} \}}(x)>0$, so $w_{T\setminus \{s_1, \ldots, s_{i-1} \}}(x)\geq  |\pi_1(C_x)|=1\geq P(C)$.
	Otherwise if $|C_x|\geq 2$, then 
	\markj{$w_{T\setminus \{s_1, \ldots, s_{i-1} \}}(x)=N_T(x)-1\geq |C_x|-1= \psi \cdot  |C_x|-1 + (1-\psi)|C_x| \geq \psi \cdot |C_x|-1 + 2(1-\psi)= \psi \cdot |C_x|+(1-2\psi)=P(C_x)$.}
	Now the result follows because $w_T(s)=\sum_{i=1}^{n-1}w_{T\setminus \{s_1,\ldots, s_{i-1}\}}(s_i)\geq \sum_{i=1}^{n-1}P(C_{s_i})=P(C)$.
\end{proof}

\markj{We now present our algorithm, which we split into two parts.
The main algorithm is \texttt{CherryPicking}, a recursive algorithm which takes as input parameters a set of trees $T$, a desired weight $k$ and a set of constraints $C$,
and returns a cherry picking sequence for $T$ of weight at most $k$ satisfying $C$, if one exists.}

\markj{
The second part is the procedure \texttt{Pick}. In this}
procedure zero-weight cherries and cherries for which all neighbors are contained in the constraint set are greedily removed from the trees.

\begin{algorithm}
	\caption{}
	\label{alg:better_temporal}
	\begin{algorithmic}[1]
		\Procedure{CherryPicking}{$T,k, C$}
		\If {$k-P(C) < 0$} \label{line:if_smaller_pc}
		\State \Return $\emptyset$ \label{line:first_return}
		\EndIf
		\State $T',k', C',p\gets $\Call{Pick}{$T,k,C$} \label{line:callpick}
		\If{$|\mathcal{L}(T')| =1  $}
		\State\Return $\{p\}$ \label{line:emptysequence}
		\ElsIf{$\pi_1(C')\nsubseteq \mathcal{L}(T') $}
		\State \Return $\emptyset$  \label{line:second_return}
		\ElsIf{$k'-P(C')\leq 0$} \label{line:if_smaller_pc_2}
		\State \Return $\emptyset$  \label{line:third_return}
		\EndIf

		\\
		\State $R\gets \emptyset$
		\If{$\exists (x, y) \in G(T',C'): w_T(x)>0 \land x\in \pi_1(C')$ } \label{line:if-statement}
		\State $R\gets R \cup $ \Call{CherryPicking}{$T'$,$k'$,$C'\cup \{ (x,y ) \}$}
		\State $R\gets R \cup $ \Call{CherryPicking}{$T'$,$k'$,$C'\cup \{ (y,x ) \}$}

		\ElsIf{$\exists (x,a) \in G(T',C'): w_{T'}(x)>0 \land x\notin \pi_2(C')$ }  \label{line:else-if-statement}
		\State \text{Choose} $b\neq a$ such that $(x,b)\in G(T',C')$
		\label {line:two_elements}
		\State $R\gets R \cup $ \Call{CherryPicking}{$T'$,$k'$,$C'\cup \{ (a,x ) \}$}
		\State $R\gets R \cup $ \Call{CherryPicking}{$T'$,$k'$,$C'\cup \{ (b,x ) \}$}
		\State $R\gets R \cup $ \Call{CherryPicking}{$T'$,$k'$,$C'\cup \{ (x,a ), (x,b) \}$}
		\EndIf
		\State\Return $\{p|r: r \in R\}$
		\EndProcedure
	\end{algorithmic}
\end{algorithm}
\begin{algorithm}
	\caption{}
	\begin{algorithmic}[1]
		\Procedure{Pick}{$T', k', C'$}
		\State $(T^{(0)},k_1,C_1)\gets (T',k',C')$
		\State $p^{(0)}\gets ()$
		\State $i\gets 1$
		\While{$\exists x_i\in H(T^{(i-1)}):  w_{T^{(i-1)}}(x)=0 \lor \{(x_i,n) : n \in N_{T^{(i-1)}}(x_i) \subseteq C_i $}
		\State $p^{(i)}\gets p^{(i-1)}|(x_i)$
		\State $k_{i}\gets k_{i-1}-w_{T^{(i-1)}}(x_i)$
		\State $T^{(i)}\gets  T^{(i-1)}\setminus \{x_i\}$
		\State $C_{i}\gets \{(a,b)\in C_{i-1}: a\neq x_i\}$
		\State $i\gets i+1$
		\EndWhile
		\State \Return $T^{(i-1)},k_{i-1},C_{i-1},p^{(i-1)}$
		\EndProcedure
	\end{algorithmic}
\end{algorithm}


\subsection{Proof of correctness}
In this section a proof of correctness will be given. First some properties of the auxiliary procedure \texttt{Pick} are proven.
\begin{obs}
Suppose \texttt{Pick}$(T',k', C')$ returns $(T,k,C,p)$.
\begin{enumerate}
    \item There are no $x\in H(T)$ with $w_T(x)=0$.
	\item There are no $x\in H(T)$ with $\{(x_i,n) : n \in N_{T^{(i-1)}}(x_i)\}\subseteq C$.
\end{enumerate}
\end{obs}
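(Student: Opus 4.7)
The plan is to observe that this statement is a direct consequence of the termination condition of the \texttt{while} loop inside \texttt{Pick}; no additional ideas beyond a careful unfolding of the loop are required.

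First I would make the loop invariant explicit: at the top of iteration $i$ the procedure's state equals $(T^{(i-1)}, k_{i-1}, C_{i-1}, p^{(i-1)})$, and the loop guard tests whether there exists some $x \in H(T^{(i-1)})$ satisfying $w_{T^{(i-1)}}(x) = 0$ or $\{(x,n) : n \in N_{T^{(i-1)}}(x)\} \subseteq C_{i-1}$. Since the loop only updates $T^{(i-1)}$ by peeling off one leaf $x_i$ at a time (and symmetrically updates $k,C,p$), this invariant is immediate by induction on~$i$. Let $i^\ast$ denote the iteration at which the loop guard first fails; then by inspection of the pseudocode the returned tuple $(T,k,C,p)$ equals $(T^{(i^\ast-1)}, k_{i^\ast-1}, C_{i^\ast-1}, p^{(i^\ast-1)})$.

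Second, I would apply the negation of the loop guard at iteration~$i^\ast$. Because the loop exited, there is no $x \in H(T)$ for which
\[
w_T(x) = 0 \ \lor\ \{(x,n) : n \in N_T(x)\} \subseteq C
\]
holds. Splitting this disjunction gives both items of the observation: item~(1) is the negation of the first disjunct, asserting that no $x \in H(T)$ has $w_T(x) = 0$, and item~(2) is the negation of the second disjunct, asserting that no $x \in H(T)$ has $\{(x,n) : n \in N_T(x)\} \subseteq C$.

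I do not anticipate any real obstacle here, since the observation is essentially a restatement of the exit condition of the \texttt{while} loop. The only minor subtlety is reading item~(2) correctly: the symbols $x_i$ and $T^{(i-1)}$ appearing in its displayed form should be interpreted as ranging over the returned tree~$T$ and any candidate leaf $x \in H(T)$, rather than as the internal loop indices of \texttt{Pick}.
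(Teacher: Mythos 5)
Your proof is correct and matches the paper's (implicit) reasoning: the paper states this as an unproved observation precisely because both items are the negation of the two disjuncts in the \texttt{while}-loop guard of \texttt{Pick}, evaluated on the state that is returned once the loop exits. Your note about reading the $x_i$, $T^{(i-1)}$ notation in item (2) as referring to the returned $T$ and $C$ is the right interpretation of the statement's slightly sloppy phrasing.
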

\begin{lem}[Correctness of \texttt{Pick}] Suppose \texttt{Pick}$(T',k', C')$ returns $(T,k,C,p)$.
	\begin{enumerate}
		\item If a cherry picking sequence $s$ of weight at most $k$ for $T$ that satisfies $C$ exists then a cherry picking sequence $s'$ of weight at most $k'$ for $T'$ that satisfies $C'$ exists.
		\item 	If $s$ is a cherry picking sequence of weight at most $k$ for $T$ that satisfies $C$ then $p|s$ is a cherry picking sequence for $T'$ of weight at most $k'$ and satisfying $C'$.
	\end{enumerate}
	\label{lem:correctness_pick}
\end{lem}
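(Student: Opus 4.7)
The plan is to prove Part~2 by induction on the number of iterations $j$ performed by the while loop of \texttt{Pick}. Part~1 is then an immediate corollary: given any $s$ satisfying its hypothesis, the concrete sequence $p|s$ produced in Part~2 witnesses the existence of the required $s'$.

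For the base case ($j=0$), the returned tuple equals the input, $p$ is empty, and $p|s = s$ trivially meets the claim. For the inductive step ($j \geq 1$), I would peel off the first iteration, in which \texttt{Pick} selects a leaf $x_1 \in H(T')$ satisfying $w_{T'}(x_1) = 0$ or $\{(x_1,n) : n \in N_{T'}(x_1)\} \subseteq C'$, and updates the state to $(T^{(1)}, k_1, C_1)$ with $T^{(1)} = T' \setminus \{x_1\}$, $k_1 = k' - w_{T'}(x_1)$ and $C_1 = \{(a,b) \in C' : a \neq x_1\}$. The remaining $j-1$ iterations, applied to $(T^{(1)}, k_1, C_1)$, return $(T, k, C, p^{\mathrm{rest}})$ with $p = (x_1)|p^{\mathrm{rest}}$. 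The inductive hypothesis then gives that $p^{\mathrm{rest}}|s$ is a CPS for $T^{(1)}$ of weight at most $k_1$ satisfying $C_1$; the task reduces to showing that prepending $x_1$ yields a CPS for $T'$ of weight at most $k'$ satisfying $C'$.

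The CPS property and weight bound are routine: since $x_1 \in H(T')$ is a legal first pick and $T' \setminus \{x_1\} = T^{(1)}$, the sequence $(x_1)|p^{\mathrm{rest}}|s$ is a CPS for $T'$, and $w_{T'}((x_1)|p^{\mathrm{rest}}|s) = w_{T'}(x_1) + w_{T^{(1)}}(p^{\mathrm{rest}}|s) \leq w_{T'}(x_1) + k_1 = k'$. The essential content is satisfaction of~$C'$. Constraints $(a,b) \in C'$ with $a \neq x_1$ lie in $C_1$ and are inherited from the tail---the index shifts by one, but the sets of removed leaves match exactly, so both the cherry condition and the positive-weight condition transfer unchanged. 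Constraints of the form $(x_1, n) \in C'$ must be discharged at position~$1$; here I would split on the trigger case. If $w_{T'}(x_1) = 0$, then Lemma~\ref{lem:remove_trivial} applies, its side conditions $x_1 \notin \pi_1(C') \cup \pi_2(C')$ holding in any state from which a CPS satisfying $C'$ can actually be built (otherwise the hypothesis of the lemma is vacuous), so $C'$ has no constraint involving $x_1$. If instead $\{(x_1,n) : n \in N_{T'}(x_1)\} \subseteq C'$, then Lemma~\ref{lem:remove_immediately} supplies both that $x_1$ is a cherry partner of every relevant $n$ and that $w_{T'}(x_1) > 0$ is forced by the existence of a satisfying CPS, discharging every $(x_1,n)$ constraint at position~$1$.

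The main obstacle I foresee is keeping the book-keeping around the constraint set clean when $x_1$ interacts with existing constraints---in particular, justifying the positive-weight condition at position~$1$ in the second branch, and confirming that in any branch where the first-iteration pick ``would'' destroy a constraint, the hypothesis of Part~2 is already vacuous, so there is nothing to prove.
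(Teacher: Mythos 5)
Your induction for Part~2 --- peeling off the first loop iteration, transferring the constraints of $C_1$ across the index shift, and discharging the $(x_1,n)$ constraints at position~$1$ --- is essentially the paper's own proof of its second claim (the paper runs the induction from the inside out and splits $C_{i'}$ into $C_x\cup C_{-x}$, but the content is the same). The structural divergence is that you obtain Part~1 as a corollary of Part~2. That is logically valid for the statement as literally written, but the paper's separate proof of Part~1 in fact establishes the \emph{converse} implication --- a satisfying CPS for the input $(T',k',C')$ yields one for the output $(T,k,C)$, by using \cref{lem:remove_immediately} and \cref{lem:remove_trivial} to rearrange a given sequence so that it begins with the greedily picked leaf --- and it is that forward direction which \cref{lem:procedure_returns_sequence} later consumes. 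Your shortcut therefore proves the letter of the statement while omitting the half of the argument that does the downstream work; the rearrangement argument cannot be avoided.

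The genuine gap is in your treatment of the trigger cases for Part~2. You justify the side conditions of \cref{lem:remove_trivial} (and the positive-weight requirement in the \cref{lem:remove_immediately} case) by saying that otherwise ``the hypothesis of the lemma is vacuous.'' But the hypothesis of Part~2 is the existence of a CPS for the \emph{output} $(T,k,C)$, in which every constraint mentioning $x_1$ has already been deleted; it is not the existence of a CPS for $T'$ satisfying $C'$. These are not equivalent: if, say, $(x_1,b)\in C'$ with $w_{T'}(x_1)=0$, then \texttt{Pick} removes $x_1$ and strips the constraint, the residual instance may well admit a satisfying CPS $s$, and yet $p|s$ picks $x_1$ first with weight $0$ and so violates $(x_1,b)$ --- the conclusion of Part~2 fails even though its hypothesis holds. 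Closing this requires the invariant that \texttt{Pick} never selects a leaf of $\pi_1(C')\cup\pi_2(C')$ under the zero-weight trigger (and that $w_{T'}(x_1)>0$ whenever the constraint-containment trigger fires nontrivially); the paper simply asserts this invariant without proof, relying on how \texttt{CherryPicking} builds its constraint sets, so you are in good company, but the resolution you sketch is circular rather than merely implicit.
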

\begin{proof}
	We will prove the first claim for $(T,k,C,p)=(T^{(i)}, k_i, C_i, p^{(i)})$ for all $i$ defined in \texttt{Pick}.
	We will prove this with induction on $i$.
	For $i=1$ this is obvious because $T^{(1)}=T$, $p^{(1)}=()$, $C_1= C$ and $k_1=k$.

	Now assume the claim is true for $i=i'$.
	Now there are two cases to consider: 

	\begin{itemize}
		\item If we have $\{(x_{i'},n):n\in N_T(x_{i'})  \}\subseteq C_{i'}$ we know from \cref{lem:remove_immediately} that if a cherry picking sequence $s$ satisfying $C_i$ exists then also a cherry picking sequence $(x)|s'$ that satisfies $C'$ exists with $w(p|(x)|s')=w(p|s)$.
		      Note that this implies that $s'$ is a cherry picking sequence for $T^{(i+1)}=T'\setminus \{x\}$, that $C_{i+1}={c\in C': x \notin \{c_1, c_2\}}$ is satisfied by $s_{i+1}$ and that $w(s_{i+1})=w(s_i)-w_{T^{(i)}}(x_i)=k_i-w_{T^{(i)}}(x)$.
		      So this proves the statement for $i=i'+1$.
		\item Otherwise we have $w_{T^{(i')}}(x)=0$ and $x\notin \pi_1
			      (C)$ and $x\notin \pi_2(C)$.
		      Then the statement for $i=i'+1$ follows directly from \cref{lem:remove_trivial}.
	\end{itemize}

	Let $j$ be the maximal value such $x_j$ is defined in a given invocation of \texttt{Pick}.
	
	We will prove the second claim for $(T,k,C,p)=(T^{(i)}, k_{i}, C_{i}, p^{(i)})$ for all $i=0,\ldots, j$ with induction on $i$.
	For $i=0$ this is trivial.
	Now assume the claim is true for $i=i'$ and assume $s$ is a cherry picking sequence for $T^{(i'+1)}$ of weight at most $k_{i'+1}$ that satisfies $C_{i'+1}$.
	Then if $x_{i'}$ is defined, it will be in $H(T^{(i')})$, so $s'=(x_{i'})|s$ is a cherry picking sequence for $T^{(i')}$.
	Because $w_{T^{i'}}(x_{i'})=k_{i'}-k_{i'+1}$, $s'$ will have weight at most $k_{i'}$.
	We can write $C_{i'}=C_x\cup C_{-x}$ where $C_x = \{(a,b):(a,b)\in C_{i'} \land a=x \}$ and $C_{-x}=C_{i'}\setminus C_x$.
	Note that $s$ satisfies $C_{i'+1}=C_{-x}$, so $s'=(x_{i'})|s$ also satisfies $C_{i'+1}$.
	Because for every $(a,b)\in C_x$, also $(a,b)\in T^{i'}$, $s'$ also satisfies $C_{x}$, so $s'$ satisfies $C_{i'}$.
	Now it follows from the induction hypothesis that $p^{i'+1}|s=p^{i'}|s'$ is a cherry picking sequence for $T'$ of weight at most $k'$ and satisfying $C'$.

\end{proof}

Note that on \cref{line:two_elements} \markj{of \cref{alg:better_temporal}} an element $b\neq a$ with $(x,b) \in G(T', C')$ is chosen. The following lemma states that such an element does indeed exist.
\begin{lem}
	When the algorithm executes \cref{line:two_elements} there exist an element $b\neq a$ with $(x,b) \in G(T', C')$. 
	\label{lem:exists_uncovered_pair}
\end{lem}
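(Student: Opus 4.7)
The plan is to argue that, when \cref{line:two_elements} is reached, the element $x$ chosen on \cref{line:else-if-statement} is in fact completely untouched by the constraint set $C'$. Once we know this, producing a second element $b$ is immediate, because $w_{T'}(x) > 0$ forces $x$ to have at least two neighbours in $T'$.

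First I would extract what the control flow on \cref{line:if-statement,line:else-if-statement} gives us. From the else-if hypothesis we have $(x,a) \in G(T', C')$, $w_{T'}(x) > 0$, and $x \notin \pi_2(C')$. Since we are in the else branch, the if-condition on \cref{line:if-statement} failed for this same pair: otherwise the algorithm would have entered that branch instead. Thus for no $(x',y) \in G(T', C')$ with $w_{T'}(x') > 0$ do we have $x' \in \pi_1(C')$. Applied to our $x$ (with witness $y = a$), this yields $x \notin \pi_1(C')$ as well. Together with $x \notin \pi_2(C')$ we conclude that no constraint in $C'$ involves $x$ in either coordinate.

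Next I would use $w_{T'}(x) > 0$ to produce the required $b$. By \cref{def:weight}, $|N_{T'}(x)| = w_{T'}(x) + 1 \geq 2$, and since $(x,a) \in G(T', C') \subseteq T'$ certainly $a \in N_{T'}(x)$, so I can pick $b \in N_{T'}(x) \setminus \{a\}$. By definition of neighbourhood $\{x,b\}$ is a cherry in some $\T \in T'$, so $(x,b) \in T'$. Because no pair of $C'$ mentions $x$, neither $(x,b)$ nor $(b,x)$ lies in $C'$, hence $(x,b) \in G(T', C')$, as required.

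The only real obstacle is the mild book-keeping of tracking what the failure of the if-condition on \cref{line:if-statement} tells us about $x$; everything else is an unpacking of the definitions of $G$, $N_{T'}$, and $w_{T'}$. In particular, nothing in the proof needs to appeal to the properties of \texttt{Pick} guaranteed by \cref{lem:correctness_pick} — the structural information supplied by the two branch conditions is already sufficient.
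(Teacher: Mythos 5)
Your proposal is correct and follows essentially the same route as the paper: both use $w_{T'}(x)>0$ to obtain a second neighbour $b$, use $x\notin\pi_2(C')$ to rule out $(b,x)\in C'$, and use the failure of the if-condition on \cref{line:if-statement} (with the pair $(x,a)$ as witness) to rule out $(x,b)\in C'$. The only difference is presentational — you establish $x\notin\pi_1(C')$ up front, while the paper derives the same contradiction after assuming $(x,b)\in C'$ — so no further comment is needed.
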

\begin{proof}
	Because $w_{T'}(x)>0$, there is at least a $b\neq a$ such that $b\in N_{T'}(x)\setminus \{x\}$.
	Because $x \notin \pi_2(C')$ we have $(b, x)\notin C'$.
	If $(x,b)\in C'$ then $x\in \pi_1(C')$, but then $x$ satisfies the if-statement on \cref{line:if-statement} and it would not have gotten to this line.
	Therefore $(x,b)\notin C'$ and so $(x,b)\in G(T',C')$.
\end{proof}

The proof of correctness of \cref{alg:better_temporal} will be given in two parts. First, in \label{lem:procedure_returns_sequence} we show that for any feasible problem instance the algorithm will return a sequence. Second, in \label{lem:returned_sequences_satisfy_demands} we show that every sequence that the algorithm returns is a valid cherry picking sequence for the problem instance.
\begin{lem}
	When a cherry picking sequence of weight at most $k$ that satisfies $C$ exists, \texttt{CherryPicking}$(T,k, C)$ from  \Cref{alg:better_temporal} returns a non-empty set.
	\label{lem:procedure_returns_sequence}
\end{lem}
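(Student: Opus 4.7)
I would prove this by induction on the lexicographic pair $(|\mathcal{L}(T)|,\, k-P(C))$, which strictly decreases on every recursive sub-call: Pick can only shrink $|\mathcal{L}(T)|$, while a branching step keeps $|\mathcal{L}(T)|$ fixed and appends at least one new pair to $C$, strictly increasing $P(C)$. The task is then to show, given a hypothetical CPS $s$ of weight at most $k$ for $T$ satisfying $C$, that no early-return statement outputs $\emptyset$ and at least one recursive sub-call receives a feasible instance, so that the inductive hypothesis applies.

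The pre-Pick return on \cref{line:first_return} is ruled out by \cref{lem:bound_w_pc}, which gives $k \geq w_T(s) \geq P(C)$. After Pick, \cref{lem:correctness_pick}(1) supplies a CPS $s'$ for $T'$ of weight at most $k'$ satisfying $C'$. The $|\mathcal{L}(T')|=1$ case is handled immediately by \cref{line:emptysequence}. For \cref{line:second_return}, I would note that Pick only discards a pair $(a,b)$ from $C$ when $a$ is simultaneously deleted from the leaves, so every surviving first-coordinate stays a leaf, giving $\pi_1(C')\subseteq\mathcal{L}(T')$. The check on \cref{line:if_smaller_pc_2} is more delicate because \cref{lem:bound_w_pc} only provides $k' \geq P(C')$; I would derive the strict version $k'>P(C')$ together with the branching step as follows.

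Let $x^\star = s'_1$. Then $x^\star \in H(T')$ by the definition of a CPS, and the Pick termination conditions give $w_{T'}(x^\star)>0$ together with some neighbour $y$ satisfying $(x^\star, y) \in G(T',C')$. Moreover $x^\star \notin \pi_2(C')$, for any $(c, x^\star) \in C'$ would force $c$ to be picked before $x^\star = s'_1$ in $s'$, contradicting $s'_1 = x^\star$. If $x^\star \in \pi_1(C')$ the if-branch on \cref{line:if-statement} fires; otherwise the else-if on \cref{line:else-if-statement} fires with $a=y$ and $b$ supplied by \cref{lem:exists_uncovered_pair}. Whichever branch the algorithm takes, \cref{obs:ConstraintTwoCases} or \cref{obs:ConstraintThreeCases} applied to the chosen $x,y$ (respectively $x,a,b$) guarantees that $s'$ satisfies the augmented constraint set $C''$ of at least one of the recursive calls. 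Since $C'' \supsetneq C'$ implies $P(C'')>P(C')$, applying \cref{lem:bound_w_pc} to $s'$ satisfying $C''$ yields $k' \geq w_{T'}(s') \geq P(C'') > P(C')$, which simultaneously upgrades the bound, dispatches \cref{line:if_smaller_pc_2}, and identifies a feasible sub-call whose measure is strictly smaller, so the inductive hypothesis returns a non-empty set.

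The main obstacle I anticipate is the non-strict inequality in \cref{line:if_smaller_pc_2}: a naive application of \cref{lem:bound_w_pc} only rules out $k' < P(C')$ and leaves the boundary case $k' = P(C')$ untreated. My plan resolves this by producing the strict bound $k' > P(C')$ as a by-product of the branching case analysis on $x^\star$, rather than as an independent preliminary lemma, which keeps the induction self-contained.
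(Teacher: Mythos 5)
Your proof is correct and follows essentially the same route as the paper's: the same appeals to \cref{lem:bound_w_pc}, \cref{lem:correctness_pick}, \cref{lem:exists_uncovered_pair} and \cref{obs:ConstraintTwoCases}/\cref{obs:ConstraintThreeCases}, with $x^\star=s'_1$ playing exactly the role it plays there, and your derivation of the strict bound $k'>P(C')$ from a satisfied strict superset $C''\supsetneq C'$ is the paper's own device (the paper uses the single extra constraint $(s'_1,y)$ rather than the full branching sets, which is immaterial). The one caveat is your induction measure: $k-P(C)$ is real-valued, so ``strictly decreases'' alone does not give well-foundedness; you should add that each recursive call enlarges $|C|$ by at least one, so $P(C)$ grows by at least $\psi>0$ per step while $|C|\leq |\mathcal{L}(T)|^2$ bounds the depth (the paper sidesteps this by inducting on the integer pair $(k,\, n^2-|C|)$).
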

\begin{proof}
	Let $W(k,u)$ be the claim that if a cherry picking sequence $s$ of weight at most $k$ exists that satisfies constraint set $C$ with $n^2-|C|\leq u$, then \leo{calling} \texttt{CherryPicking}$(T,k, C)$ will return a non-empty set.
	We will prove this claim with induction on $k$ and $n^2-|C|$.

	For the base case $k=0$ if a cherry picking sequence of weight $k$ exists we must have that all trees are equal, so $|\mathcal{L}(T)|=1$.
	In this case a sequence is returned on \cref{line:emptysequence}.

	Note that we can never have a constraint set $C$ with $|C|>n^2$ because $C\subseteq \mathcal{L}(T)^2$.
	Therefore $W(k,-1)$ is true for all $k$.

	Now suppose $W(k, n^2-|C|)$ is true for all cases where $0\leq k< k_b$ and all cases where $k=k_b$ and $n^2-|C| \leq u$.
	We consider the case where a cherry picking sequence $s$ of weight at most $k=k_b+1$ exists for $T$ that satisfies $C$ and $n^2-|C|\leq u+1$.
	\Cref{lem:bound_w_pc} implies that $k-P(C)\geq 0$, so the condition of the if-statement on \cref{line:if_smaller_pc} will not be satisfied.

	From \cref{lem:correctness_pick} it follows that a CPS $s'$ of weight at most $k'$ exists for $T'$ that satisfies $C'$.
	From the way the \texttt{Pick} works it follows that either $k'<k$ or $n^2-C'= n^2-C$.
	If $|\Leav(T')=1$ then $\{()\}$ is returned and we have proven $W(k_b+1, u+1)$ to be true for this case.
	Because $s'$ satisfies $C'$, we know that $\pi_1(C) \subseteq \Leav(T')$.
	We know there is an $y\in N_{T'}(s'_1)$ with $(s'_1,y)\notin C'$, because otherwise $s'_1$ would be picked by \texttt{Pick}.
	Also $s'$ satisfies $C'\cup \{(s'_1,y)\}$, which implies that $k\geq P(C'\cup \{(s'_1,y)\})>P(C')$, so the condition of the if-statement on \cref{line:if_smaller_pc_2} will not be satisfied.

	Note that we have $(s'_1, x) \in G(T',C')$, $w_{T'}(s'_1)>0$ and $s'_1\notin \pi_2(C')$.

	This implies that either the body of the if-statement on \cref{line:if-statement} or the body of the else-if-statement on \cref{line:else-if-statement} will be executed.

	Suppose the former is true.
	By \cref{lem:branch_in_two} we know that $s$ satisfies $C'\cup \{(x,y)\}$ or $C'\cup \{(y,x)\}$.
	Because $(x,y)\in G(T',C')$ we know $|C'\cup \{x,y\}|=|C'\cup \{y,x\}|=|C'|+1$ and therefore $n^2-|C'\cup \{x,y\}|=n^2-|C'\cup \{y,x\}| \leq u$.
	So by our induction hypothesis we know that at least one of the two subcalls will return a sequence, so the main call to the function will also return a sequence.

	If instead the body of the else-if-statement on line \cref{line:else-if-statement} is executed we know by \cref{lem:branch_in_three} that at least one of the constraint sets $C'_1=C\cup \{(a,x)\}$,  $C'_2=C\cup \{(b,x)\}$ and $C'_3=C\cup \{(x,a),(x,b)\}$ is satisfied by $s$.
	Note that $|C'_3|\geq |C'_2|=|C'_1|\geq |C'| +1$, so $n^2-|C'_3|\leq n^2-|C'_2|=n^2-|C'_1|\leq u$.
	By the induction hypothesis it now follows that at least one of the three subcalls will return a sequence, so the main call to the function will also return a sequence.
	So for both cases we have proven $W(k_b+1, u+1)$ to be true.
\end{proof}

\begin{lem}
	Every element in the set returned by \texttt{CherryPicking}$(T,k, C)$ from  \Cref{alg:better_temporal} is a cherry picking sequence for $T$ of weight at most $k$ that satisfies $C$.
	\label{lem:returned_sequences_satisfy_demands}
\end{lem}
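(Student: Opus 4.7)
The plan is to prove the claim by strong induction on the recursion depth of the call \texttt{CherryPicking}$(T,k,C)$. Each invocation either (a) returns $\emptyset$ via one of the three early exits on \cref{line:first_return,line:second_return,line:third_return}, in which case the claim holds vacuously; (b) returns $\{p\}$ in the base case where \texttt{Pick} reduces the input to a one-leaf instance; or (c) returns $\{p|r : r \in R\}$, where $R$ is the union of the results of two or three recursive calls, each on the same $(T',k')$ but with $C'$ augmented by at least one additional pair drawn from $G(T',C')$. The pairs added in the branching steps are valid constraints because by definition $G(T',C')$ consists of cherries of $T'$, so the preconditions of the recursive subcalls are satisfied.

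For the base case, let $(T',k',C',p)=\texttt{Pick}(T,k,C)$ and suppose $|\Leav(T')|=1$. Interpreting the returned $p$ (together with the single surviving leaf, which trivially forms a CPS of weight $0$ for $T'$ satisfying $C'$) as a sequence $s$ for $T'$, the second part of \cref{lem:correctness_pick} immediately yields that $p$ is a CPS for $T$ of weight at most $k$ satisfying $C$. For the inductive step, let $r$ be an arbitrary element of $R$. By the inductive hypothesis applied to the subcall that produced $r$, the sequence $r$ is a CPS for $T'$ of weight at most $k'$ satisfying the augmented constraint set, which is a superset of $C'$; in particular $r$ satisfies $C'$. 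Applying the second part of \cref{lem:correctness_pick} with this $r$ now yields that $p|r$ is a CPS for $T$ of weight at most $k$ satisfying $C$, completing the induction.

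The main obstacle is ensuring that the induction is well-founded, i.e., that the recursion depth is actually bounded so that strong induction on depth is justified. This follows because every recursive subcall on line of the algorithm strictly grows $|C|$ by at least one (while $|\pi_1(C')|$ is non-decreasing), so $P(C)$ grows by at least $\psi$ per recursive step. The early exit on \cref{line:if_smaller_pc} halts any branch once $k - P(C) < 0$, bounding the recursion depth by $O(k/\psi)$. A secondary technicality is taking care with the base case $|\Leav(T')|=1$: one must argue that the single remaining leaf, which was not explicitly picked by \texttt{Pick}, is automatically a CPS of weight zero satisfying $C'$ (indeed $C' = \emptyset$ at this point since $\pi_1(C') \subseteq \Leav(T')$ and all leaves in $\pi_1(C')$ would have been eliminated by \texttt{Pick} or caught by the check on \cref{line:second_return}); once this is in hand the invocation of \cref{lem:correctness_pick} goes through cleanly.
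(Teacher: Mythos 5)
Your proof is correct and takes essentially the same route as the paper: induction over the recursive calls, observing that each element of $R$ satisfies the augmented constraint set $C''\supseteq C'$ and hence $C'$, and then invoking the second part of \cref{lem:correctness_pick} to lift $r$ to $p|r$. The extra attention you give to well-foundedness of the induction and to the one-leaf base case only makes explicit what the paper leaves implicit.
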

\begin{proof}
	Consider a certain call to \texttt{CherryPicking}$(T,k, C)$.
	Assume that the lemma holds for all subcalls to \texttt{CherryPicking}.
	We claim that during the execution every element that is in $R$ is a partial cherry picking sequence for $T'$ of weight at most $k'$ that satisfies $C'$.
	This is true because $R$ starts as an empty set, so the claim is still true at that point.
	At each point in the function where sequences are added to $R$, these sequences are elements returned by \texttt{CherryPicking}($T',k', C''$) with $C'\subseteq C''$.
	By our assumption we know that all of these elements are cherry picking sequences for $T'$ of weight at most $k'$ and satisfy $C''$.
	The latter implies that every elements also satisfies $C'$ because $C'\subseteq C''$.
	The procedure now return $\{p|r:r\in R \}$ and from \cref{lem:correctness_pick} it follows that all elements of this set are cherry picking sequences for $T$ of weight at most $k$ and satisfying $C$.
\end{proof}

\subsection{Runtime analysis}

\markj{The key idea behind our runtime analysis is that at each recursive call in \cref{alg:better_temporal2}, the measure $k-P(C)$ is decreased by a certain amount, and this leads to a bound on the number of times \cref{alg:better_temporal} is called.
It is straightforward to get a bound of $O(9^k)$. Indeed, 
it can be shown that for $k<|C|/2$ no feasible solution exists, and so the algorithm could stop whenever $2k - |C| < 0$. 
One call to the algorithm results in at most $3$ subcalls, and
in each subcall $|C|$ increases by at least one.
Then the total number of subcalls to \cref{alg:better_temporal} would be bounded by $O(3^{2k}) = O(9^k)$.
By more careful analysis, and using the lower bound of $P(C)$ on the weight of a sequence satisfying $C$, we are able to improve this bound to $O(5^k)$.
}

We will now state some lemmas that are needed for the runtime analysis of the algorithm. 
\markj{We first show that the measure $k - P(C)$ will never increase at any point in the algorithm. The only time this may happen is during \texttt{Pick}, as the values of $k$ and $C$ are not otherwise changed, except at the point of a recursive call where constraints are added to $C$ (which cannot increase $P(C)$).
Thus we first show that \texttt{Pick} cannot cause $k - P(C)$ to increase.}

\begin{lem}
	\label{lem:pick_decrease_k}
	Let $(s,T',k',C')=\texttt{Pick}(T,k,C)$  from \cref{alg:better_temporal}.
	Then $k'-P(C')\leq k-P(C)$.
\end{lem}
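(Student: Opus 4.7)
The plan is to prove the inequality by induction on the number of iterations performed by the \texttt{while} loop inside \texttt{Pick}. Writing $(T^{(j)},k_j,C_j)$ for the state after $j$ iterations, the base case $j=0$ is immediate, since $(T^{(0)},k_0,C_0)=(T,k,C)$ by initialisation. The inductive step reduces to showing that each single iteration satisfies $k_i-P(C_i)\leq k_{i-1}-P(C_{i-1})$, which, using $k_i=k_{i-1}-w_{T^{(i-1)}}(x_i)$, is equivalent to $w_{T^{(i-1)}}(x_i)\geq P(C_{i-1})-P(C_i)$.

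For the key calculation, set $C_x:=\{(a,b)\in C_{i-1}:a=x_i\}$, so that $C_i=C_{i-1}\setminus C_x$. Only the constraints with first coordinate $x_i$ are dropped, so $|\pi_1(C_{i-1})|-|\pi_1(C_i)|$ equals $1$ if $|C_x|\geq 1$ and $0$ otherwise. Substituting into the definition of $P$ gives
\[
P(C_{i-1})-P(C_i)=
\begin{cases} 0 & \text{if } |C_x|=0,\\ \psi\,|C_x|+(1-2\psi) & \text{if } |C_x|\geq 1.\end{cases}
\]

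I would then split on $|C_x|$. When $|C_x|=0$, the right-hand side is zero and the inequality is trivial. When $|C_x|\geq 1$, the \texttt{while}-loop guard forces $\{(x_i,n):n\in N_{T^{(i-1)}}(x_i)\}\subseteq C_{i-1}$; combined with the constraint-set property that any $(x_i,b)\in C_{i-1}$ must satisfy $b\in N_{T^{(i-1)}}(x_i)$, this gives $|C_x|=w_{T^{(i-1)}}(x_i)+1$. Substituting, the required inequality becomes $(1-\psi)\,w_{T^{(i-1)}}(x_i)\geq 1-\psi$, which holds whenever $w_{T^{(i-1)}}(x_i)\geq 1$.

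The most delicate point will be verifying that $w_{T^{(i-1)}}(x_i)\geq 1$ in this second branch, since otherwise $|C_x|=1$ and the algebraic bound fails by a factor of $1-\psi$. I expect this to be handled either by appealing to the conditions under which constraints involving $x_i$ in the first coordinate are added during \texttt{CherryPicking} (in particular the guard $w_{T'}(x)>0$ at the points where such constraints are generated), or by observing that a state with $w_{T^{(i-1)}}(x_i)=0$ and $(x_i,y)\in C_{i-1}$ is already infeasible, so that \texttt{Pick} need not preserve the algebraic invariant there.
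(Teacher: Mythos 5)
Your proof follows essentially the same route as the paper: induction over the iterations of the \texttt{while} loop in \texttt{Pick}, reducing the claim to the per-iteration inequality $w_{T^{(i-1)}}(x_i)\geq P(C_{i-1})-P(C_i)$ and computing the drop in $P$ from the constraints $C_x$ removed with $x_i$. Your case split on $|C_x|$ is in fact finer than the paper's: the paper writes the single chain
$k_{i+1}-P(C_{i+1}) = k_i-P(C_i)-(w_{T^{(i)}}(x_i)-1)(1-\psi) \leq k_i-P(C_i)$,
which is exactly your $|C_x|=w+1$ branch and silently requires $w_{T^{(i)}}(x_i)\geq 1$ whenever constraints are dropped (and silently ignores the $|C_x|=0$ branch, where the conclusion holds for the different reason that neither $k$ nor $P(C)$ changes). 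So the ``delicate point'' you flag --- a leaf with $w_{T^{(i-1)}}(x_i)=0$ yet $|C_x|=1$, where the bound would fail by $1-\psi$ --- is a genuine edge case that the paper's own proof does not address either; your suggested resolution (constraints with $x_i$ in the first coordinate are only ever created under the guard $w_{T'}(x)>0$, so this state should be unreachable or infeasible) is the right place to look, and making that argument precise would strengthen, not just match, the published proof.
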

\begin{proof}
	We will prove with induction that for the variables $k_i$ and $C_i$ defined in the function body, we have $k_i-P(C_i) \leq k-P(C)$ for all $i$, from which the result follows.
	Note that for $i=0$ this is trivial.
	Now suppose the inequality holds for $i$.
	Then we also have
	\begin{align*}
		k_{i+1} - P(C_{i+1}) & = (k_{i} - w_{T^{(i)}}(x_i)) - (P(C_i)- (w_{T^{(i)}}(x_i)+1) \cdot \psi - (1-2\psi)) \\ &= k_i - P(C_i) - (w_{T^{(i)}}(x_i) - 1) (1-\psi) \\ &\leq k_i - P(C_i) \\ &\leq k - P(C)
	\end{align*}
\end{proof}

\FloatBarrier
\markj{The next lemma will be used later to show that a recursive call to 	\texttt{CherryPicking} always increases $k-P(C)$ b a certain amount.}

\begin{lem}
	For $a$ and $b$ on \cref{line:two_elements} \markj{of \cref{alg:better_temporal}} it holds that $a\notin \pi_1(C')$ \markj{and $b\notin \pi_1(C')$.}
	\label{lem:algline_a_not_before}
\end{lem}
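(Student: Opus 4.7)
The plan is to argue by contradiction, treating the two claims symmetrically. Suppose $a\in\pi_1(C')$; I will show this would force the condition of the if-statement on \cref{line:if-statement} to hold, contradicting the fact that execution has reached the else-if branch on \cref{line:else-if-statement}. The proof for $b\notin\pi_1(C')$ will be an exact copy with $b$ in place of $a$, since $b$ is chosen on \cref{line:two_elements} so that $(x,b)\in G(T',C')$, which is exactly the analogue of the hypothesis $(x,a)\in G(T',C')$ coming from \cref{line:else-if-statement}.

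The first step is to translate everything through the symmetry of $G$: since $(x,a)\in G(T',C')$, the paper's observation that $(u,v)\in G(T,C)\iff(v,u)\in G(T,C)$ gives $(a,x)\in G(T',C')$, and in particular $(a,x)\notin C'$. The second step is to use the assumption $a\in\pi_1(C')$ to obtain a witness $(a,c)\in C'$. Since $C'$ is a constraint set on $T'$, the pair $(a,c)$ is a cherry in some tree of $T'$, so $c\in N_{T'}(a)$. Moreover, $c\neq x$, because otherwise $(a,x)\in C'$, contradicting the previous step.

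The core step is then a weight computation: we have two distinct elements $x,c\in N_{T'}(a)$, hence $|N_{T'}(a)|\geq 2$ and so $w_{T'}(a)=|N_{T'}(a)|-1\geq 1$. Now the triple $(a,x)$ satisfies all three conjuncts of the if-statement on \cref{line:if-statement}: $(a,x)\in G(T',C')$, $w_{T'}(a)>0$, and $a\in\pi_1(C')$. This contradicts the hypothesis that the algorithm instead entered the else-if branch on \cref{line:else-if-statement}, completing the argument for $a$. The argument for $b$ is verbatim the same, with $(x,b)\in G(T',C')$ supplying the symmetric starting point.

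The only subtle point I anticipate is ensuring the invariant that every pair in $C'$ really is a cherry in the corresponding tree $T'$, so that $(a,c)\in C'$ indeed yields $c\in N_{T'}(a)$. This invariant is preserved by the recursive calls in \cref{alg:better_temporal} (which only append pairs $(x,y),(y,x)$ arising from a cherry $(x,y)\in G(T',C')\subseteq T'$) and, granted that \texttt{Pick} preserves it (which is needed for the earlier correctness lemmas as well), there is no real obstacle here. Beyond that, the proof is a short chain of implications and requires no new machinery.
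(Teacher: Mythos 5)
Your proof is correct and follows essentially the same route as the paper's: assume $a\in\pi_1(C')$, use the symmetry of $G$ to get $(a,x)\in G(T',C')$ (so $(a,x)\notin C'$), and derive a contradiction with the fact that the if-statement on \cref{line:if-statement} was not triggered, relying (as the paper implicitly does) on the invariant that every pair in $C'$ is a cherry in $T'$. The only difference is cosmetic: where the paper splits into the cases $w_{T'}(a)>0$ and $w_{T'}(a)=0$, you observe directly that the witness $(a,c)\in C'$ must have $c\neq x$, giving $|N_{T'}(a)|\geq 2$ and hence $w_{T'}(a)>0$ in one step.
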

\begin{proof}
	Suppose $a\in \pi_1(C')$.
	Then $(a,z)\in C'$ for some $z\in N_{T'}(x)$.
	If $w_{T'}(a)>0$ then $a$ satisfies the conditions in the if-statement on \cref{line:if-statement}, so \cref{line:two_elements} would not be executed.
	If $w_{T'}(a)=0$ then we must have  $|N_{T'}(a)\setminus \{a\}| = 1$, so $N_{T'}(a)\setminus \{a\} = \{x\}$, which implies that $z=x$.
	But $(a,x)\notin C'$ because $(x,a)\in G(T',C')$, which contradicts that $(a,z)\in C'$.
	So $a\notin \pi_1(C')$.
	Because of symmetry, the same argument holds for $b$.
\end{proof}

\FloatBarrier
\markj{We now give the main runtime proof.}
\begin{lem}
	\texttt{CherryPicking} from \Cref{alg:better_temporal} has a time complexity of $O(5^k \cdot knm)$.
	\label{lem:running_time_better_temporal}
\end{lem}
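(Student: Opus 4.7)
The plan is to bound the number of recursive calls to \texttt{CherryPicking} by a function of the potential $\mu := k - P(C)$, then multiply by the work performed at each call. By \cref{lem:bound_w_pc} a call with $\mu<0$ returns immediately on Line~\ref{line:if_smaller_pc}, and the analogous check after \texttt{Pick} is made on Line~\ref{line:if_smaller_pc_2}, so only calls with $\mu \geq 0$ produce recursive descendants. Moreover \cref{lem:pick_decrease_k} guarantees that the invocation of \texttt{Pick} on Line~\ref{line:callpick} cannot raise $\mu$, so it suffices to account for the decrease in $\mu$ that occurs between a call with post-\texttt{Pick} parameters $(T',k',C')$ and each of its recursive children, which differ only in that a set of new constraints has been added to $C'$.

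First I would handle the two-way branch at Line~\ref{line:if-statement}: both children augment $C'$ by a single pair drawn from $\{(x,y),(y,x)\}$ with $(x,y)\in G(T',C')$. Since neither pair was in $C'$ previously, $|C'|$ grows by exactly one, so $P(C')$ grows by at least $\psi$ and $\mu$ drops by at least $\psi$ in each child. For the three-way branch at Line~\ref{line:else-if-statement} I would first note that, because the guard at Line~\ref{line:if-statement} failed, the chosen leaf $x$ must satisfy $x\notin\pi_1(C')$; combined with \cref{lem:algline_a_not_before}, this gives $x,a,b\notin\pi_1(C')$. Augmenting by $(a,x)$ or by $(b,x)$ then raises both $|C'|$ and $|\pi_1(C')|$ by $1$, so $P$ grows by $\psi + (1-2\psi) = 1-\psi$; augmenting by $\{(x,a),(x,b)\}$ raises $|C'|$ by $2$ and $|\pi_1(C')|$ by $1$, so $P$ grows by $2\psi + (1-2\psi) = 1$. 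Writing $T(\mu)$ for the number of recursive calls starting from potential $\mu$, this yields
\begin{align*}
T(\mu) \;\leq\; \max\bigl\{\,2\,T(\mu-\psi),\ 2\,T(\mu-(1-\psi)) + T(\mu-1)\,\bigr\}.
\end{align*}

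To solve this I would substitute $T(\mu)=5^{\mu}$ and exploit the defining identity $5^{\psi}=2$, which also gives $5^{1-\psi}=5/2$. The first inequality reduces to $5^{\psi}\geq 2$, which holds with equality. The second reduces to $1 \geq 2\cdot 5^{\psi-1} + 5^{-1} = 2\cdot(2/5) + 1/5 = 1$, also with equality. Since $P(C)\geq 0$ we have $\mu\leq k$ at the initial call, so the total number of recursive calls is $O(5^k)$. Per call, the dominant work is the scan through the $m$ trees to run \texttt{Pick}, to evaluate $P(C')$, and to locate a branching pair; these take $O(nm)$ time each, and an extra factor of $k$ comes from maintaining the constraint set (whose size is $O(k/\psi)=O(k)$), giving overall $O(5^k\cdot k\cdot n\cdot m)$. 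The main obstacle is the case analysis of the three-way branch: one must verify carefully, via \cref{lem:algline_a_not_before} together with the failure of the Line~\ref{line:if-statement} guard, that $x,a,b\notin\pi_1(C')$ so that $|\pi_1(C')|$ genuinely grows with each augmentation; without this, the recurrence would deteriorate below base $5$.
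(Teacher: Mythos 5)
Your proposal is correct and follows essentially the same route as the paper: the same potential $k-P(C)$, the same case analysis showing $P$ increases by $\psi$ in the two-way branch and by $1-\psi$ (resp.\ $1$) in the three-way branch via \cref{lem:algline_a_not_before} and the failure of the guard on \cref{line:if-statement}, and the same identity $5^{\psi}=2$ to close the recurrence. The only cosmetic difference is that you count recursion-tree nodes and multiply by per-call work (attributing the extra factor $k$ to constraint-set maintenance), whereas the paper runs a direct induction on total time with the bound $5^{k-P(C)+1}(k-P(C)+1)f(n,m)$; both yield $O(5^k\cdot knm)$.
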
 
\begin{proof}
    \sander{Let $n$ be the number of leaves and $m$ the number of trees.
	The non-recursive part of \texttt{CherryPicking}($T$,$k$,$C$) can be implemented to run in $O(n\cdot m)$ time by constructing $H(T^{(i)})$ from $H(T^{(i-1)})$ in each step.}
	Let $f(n,m)$ be an upper bound for its computation time with $f(n,m)=O(n\cdot m)$.
	Let the runtime of \texttt{CherryPicking}($T$,$k$,$C$) be $t(n, k, C)$.
	We will prove this with induction on $k-P(C)$ that
	\begin{align*}
		\markj{t(n, k, C) \leq 5^{k-P(C)+1} (k-P(C)+1)f(n,m)} \text{.
		}
	\end{align*}
	For $-1\leq k-P(C)\leq 0$ the claim follows from the fact that the function will return on either \cref{line:first_return} or \cref{line:third_return} and therefore will not do any recursive calls.

	Now assume the claim holds for $-1\leq k-P(C)\leq w$.
	Now consider an instance with $k-P(C)\leq  w+\psi $.
	Note that  $k'-P(C')\leq k-P(C)$  (\cref{lem:pick_decrease_k}).
	If the function \texttt{CherryPicking} does any recursive calls then it either executes the body of the if-clause on \cref{line:if-statement}, or the body of the else-if clause on \cref{line:else-if-statement}.

	If the former is true then the function does $2$ recursive calls.
	Each recursive call \leo{to the function} \texttt{CherryPicking}($T'$, $k'$, $C''$) is done with a constraint set $C''$ for which $|C''|=|C'|+1$.
	Therefore for both subproblems $P(C'')\geq P(C') + \psi$ and also $k'-P(C'')\leq k'-P(C') - \psi \leq  k-P(C) - \psi \leq  w$.
	By our induction hypothesis the running time of each of the subcalls is now bounded by 
	 \markj{${5^{k'-P(C'')+1}(k'-P(C'')+1)f(n,m)}$.}
	So therefore the total running time of this call is bounded by
	
	\markj{
	\begin{align*}
		 & 2\cdot 5^{k'-P(C'') + 1}(k'-P(C'') + 1)f(n,m) + f(n,m)\\ 
		 &\leq 2\cdot 5^{k-P(C)-\psi +1}(k-P(C)-\psi +1)f(n,m) +f(n,m) \\&=5^\psi 5^{k-P(C)-\psi +1}(k-P(C)-\psi +1)f(n,m)+ f(n,m) \\& = 5^{k-P(C) +1}(k-P(C)-\psi +1) f(n,m)+ f(n,m) \\
		 &\sander{\leq} 5^{k-P(C) +1}(k-P(C)+1)f(n,m) - 5\psi f(n,m) + f(n,m)\\
		 &\leq 5^{k-P(C) +1}(k-P(C)+1) f(n,m)\text{.
		}
	\end{align*}}
	So in this case we have proven the claim for $-1\leq k-P(C)\leq  w+\psi $.

	If instead the body of the else-if statement on \cref{line:else-if-statement} is executed then 3 recursive subcalls are made.
	Consider the first subcall $\texttt{CherryPicking}(T',k',C'')$.
	We  have $C''=C'\cup \{(a,x)\}$.
	Because \markj{$(x,a)\in G(T',C')$ we have $(a,x)\notin C'$.}
	Therefore $|C''|=|C'|+1$.
	By \cref{lem:algline_a_not_before} we know that $a\notin \pi_1(C')$, but we have $a\in \pi_1(C')$, so $|\pi_1(C'')|=|\pi_1(C')|+1$.
	Therefore $P(C'')=P(C')+1-\psi$, so $k'-P(C'')=k'-P(C')-1+\psi \markj{\leq k - P(C) -1 + \psi }< k-P(C)-\psi \leq w$.
	By our induction hypothesis we now know that the running time of this subcall is bounded by
	\markj{
	\begin{align*}
		5^{k'-P(C'')+1} (k'-P(C'')+1)f(n,m) \leq 5^{k-P(C)+\psi} (k-P(C)+\psi)f(n,m) \text{.
		}
	\end{align*}}
	Note that by symmetry the same holds for the second subcall.

	For the third subcall 	$\texttt{CherryPicking}(T',k',C'')$ ,
	\markj{because $(x,a),(x,b) \in G(T',C')$ we have $|C''| = |C'|+2$, and because $x \notin \pi_1(C')$ we have $|\pi_1(C'')| = |\pi_1(C')| + 1$. So}
we know that $P(C'')=\markj{P(C') + 2\psi + (1-2\psi) =} P(C')+1$
\markj{and $k'-P(C'') +1 \leq k-P(C)$.}
	\markj{Therefore the} running time is bounded by
	\markj{\begin{align*}
		5^{k-P(C)}(k-P(C))f(n,m)\text{.
		}
	\end{align*}}
	So the total running time of this call is bounded by
	\markj{
	\begin{align*}
		 & \hphantom{= =} 2\cdot 5^{k-P(C)+\psi} (k-P(C)+\psi)f(n,m)+ 5^{k-P(C)}(k-P(C))f(n,m)+f(n,m) \\
		 & = 2\cdot 5^\psi \cdot 5^{k-P(C)} (k-P(C)+\psi)f(n,m)+ 5^{k-P(C)}(k-P(C))f(n,m)+f(n,m) \\
		 & = 4 \cdot 5^{k-P(C)} (k-P(C)+\psi)f(n,m)+ 5^{k-P(C)}(k-P(C))f(n,m)+f(n,m) \\
		 & = 5 \cdot 5^{k-P(C)} (k-P(C))f(n,m)+ 4\cdot\psi\cdot5^{k-P(C)}f(n,m)+f(n,m) \\
		 & \leq 5 \cdot 5^{k-P(C)} (k-P(C))f(n,m)+ 5\cdot5^{k-P(C)}f(n,m) \\
		 & = 5\cdot 5^{k-P(C)} (k-P(C)+1)f(n,m)\\
		 & = 5^{k-P(C)+1} (k-P(C)+1)f(n,m)
	\end{align*}}
	So also for this case we have proven the claim for $k-P(C)\leq  w+\psi $.
\end{proof}

\begin{thm}
	\texttt{CherryPicking}$(T,k, C)$ from  \Cref{alg:better_temporal} returns a cherry picking sequence of weight at most $k$ that satisfies $C$ if and only if such a sequence exists.
	The algorithm terminates in $O(5^k\cdot poly(n,m))$ time.
\end{thm}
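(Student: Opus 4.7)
The plan is to derive the theorem as a direct corollary of the three lemmas already proved in the section, so the proof should be short and mostly bookkeeping.

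First I would establish the ``only if'' direction. By \cref{lem:returned_sequences_satisfy_demands}, every element of the set returned by \texttt{CherryPicking}$(T,k,C)$ is a cherry picking sequence for $T$ of weight at most $k$ satisfying $C$. Thus if the returned set is non-empty, such a sequence certainly exists, and one of the returned sequences witnesses this.

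Next I would establish the ``if'' direction by contrapositive (or directly). Suppose a cherry picking sequence $s$ for $T$ of weight at most $k$ satisfying $C$ exists. Then by \cref{lem:procedure_returns_sequence}, \texttt{CherryPicking}$(T,k,C)$ returns a non-empty set, and by the previous paragraph every element of this set is itself a valid cherry picking sequence with the required properties. Combining the two directions gives the biconditional.

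For the runtime claim, I would simply invoke \cref{lem:running_time_better_temporal}, which bounds the running time by $O(5^k \cdot knm)$; since $k \cdot n \cdot m$ is polynomial in $n$ and $m$ (treating $k$ as the parameter), this is $O(5^k \cdot \mathrm{poly}(n,m))$, and the algorithm in particular terminates. No new technical ideas are needed, so there is no real obstacle: the work has all been done in the preceding lemmas, and the theorem serves as the packaging step that assembles correctness (both directions) together with the runtime bound into the final statement.
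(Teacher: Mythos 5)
Your proposal is correct and matches the paper's own proof, which likewise derives the theorem directly from \cref{lem:returned_sequences_satisfy_demands}, \cref{lem:procedure_returns_sequence}, and \cref{lem:running_time_better_temporal}. Your write-up just spells out the bookkeeping a little more explicitly.
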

\begin{proof}
	This follows directly from \cref{lem:returned_sequences_satisfy_demands}, \cref{lem:procedure_returns_sequence} and \cref{lem:running_time_better_temporal}.
\end{proof}


\section{Constructing non-temporal tree-child networks from binary trees}
\label{sec:non_temporal}
For every set of trees there exists a tree-child network that displays the trees.
However there are sets of trees for which no temporal network displaying the trees exist, so we can not always find such a network.
As shown in \cref{fig:temporal_tree_child_difference}, approximately 5 percent of the instances used in \cite{van_iersel_practical_2019} do not admit a temporal solution.

\begin{figure}
	\includegraphics[scale=.8]{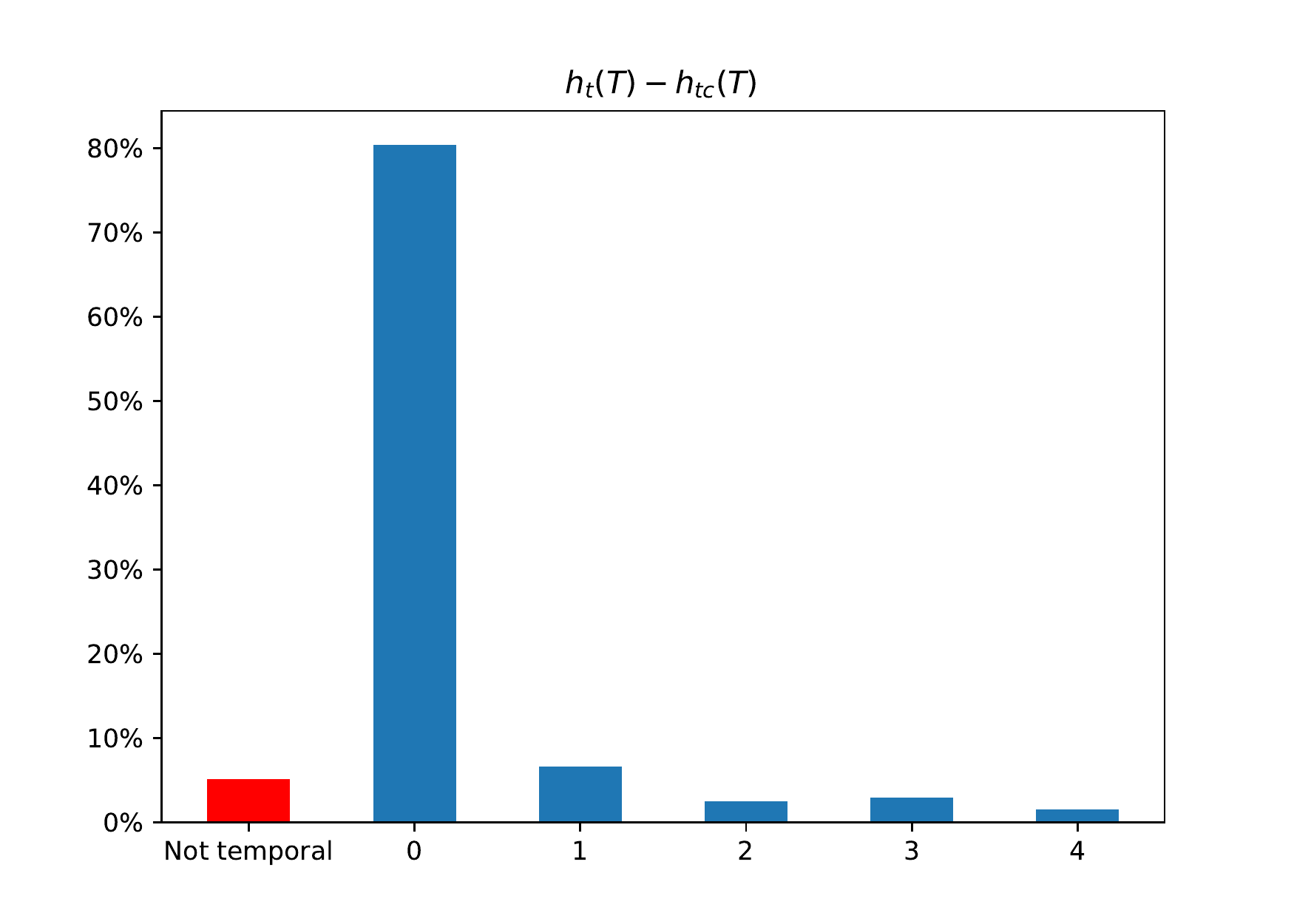}
	\caption{The difference between the tree-child reticulation number and the temporal reticulation number on the dataset generated in \cite{van_iersel_practical_2019}.
		If no temporal network exists, the instance is shown under `Not temporal'.
		Instances for which it could not be decided if they were temporal within 10 minutes ($2.6\%$ of the instances), are excluded.
		\label{fig:temporal_tree_child_difference}
	}

\end{figure}

In this section we introduce theory that makes it possible to quantify how close a network is to being temporal.
We can then pose the problem of finding the `most' temporal network that displays a set of trees.

\begin{defn}
	For a tree-child network with vertices $V$ we call a function $t: V\to \mathbb{R}^+$ a semi-temporal labeling if:
	\begin{enumerate}
		\item For every tree arc $(u,v)$ we have $t(u)<t(v)$.
		\item For every hybridization vertex $v$ we have $t(v)=\min \{t(u): (u,v)\in E\}$.
	\end{enumerate}
\end{defn}
Note that network has a semi-temporal labeling.

\begin{defn}
	For a tree-child network $\mathcal{N}$ with a semi-temporal labeling $t$, define $d(\mathcal{N}, t)$ to be number of hybridization arcs $(u,v)$ with $t(u)\neq t(v)$.
	We call these arcs non-temporal arcs.
\end{defn}
\begin{defn}
	For a tree-child network $\mathcal{N}$ define
	\begin{align*}
		d(\mathcal{N})=\min  \{d(\mathcal{N}, t): t \text{ is a semi-temporal labeling of }\mathcal{N} \}
	\end{align*}
	Call this number the \emph{temporal distance} of $\mathcal{N}$.
	Note that this number is finite for every network, because there always exist semi-temporal labelings.
\end{defn}

The temporal distance is a way to quantify how close a network is to being temporal.
The networks with temporal distance zero are the temporal networks.
We can now state a more general version of the decision problem.

\problem{Semi-temporal hybridization}{A set of~$m$ trees $T$ with~$n$ leaves and integers $k,p$.}{Does there exist a tree-child network $\mathcal{N}$ with $r(\mathcal{N})\leq k$ and $d(\mathcal{N})\leq p$?}

There are other, possibly more biologically meaningful ways to define such a temporal distance.
The reason for defining the temporal distance in this particular way is that an algorithm for solving the corresponding decision problem exists.
For further research it could be interesting to explore if other definitions of temporal distance are more useful and whether the corresponding decision problems could be solved using similar techniques.

Van Iersel et al.
\ presented an algorithm to solve the following decision problem in $O((8k)^k\cdot \text{poly}(m,n))$ time.
\problem{Tree-child hybridization}{A set of~$m$ trees $T$ with~$n$ leaves and integer $k$.}{Does there exist a tree-child network $\mathcal{N}$ with $r(\mathcal{N})\leq k$?}
Notice that for $p=k$ \textsc{Semi-temporal hybridization} is equivalent to \textsc{Tree-child hybridization} and for $p=0$ it is equivalent to \textsc{Temporal hybridization}.
The algorithm for \textsc{Tree-child hybridization} uses a characterization by Linz and Semple \cite{linz_attaching_2019} using \emph{tree-child sequences}, that we will describe in  the next section.
We describe a new algorithm that can be used to decide \textsc{Semi-temporal hybridization}.
This algorithm is a combination of the algorithms for \textsc{Tree-child hybridization} and  \textsc{Temporal hybridization}.
\subsection{Tree-child sequences}
First we will define the \emph{generalized cherry picking sequence} (generalized CPS), which is called a cherry picking sequence in \cite{van_iersel_practical_2019}.
We call it generalized cherry picking sequence because it is a generalization of the cherry picking sequence we defined in \cref{def:cps}.
\begin{defn}
	A \emph{partial generalized CPS} on $X$ is a sequence
	\begin{align*}
		s=((x_1,y_1),\ldots, (x_r,y_r),(x_{r+1},-),\ldots ,(x_{t},-))
	\end{align*}
	with $\{x_1, x_2, \ldots, x_s,y_1,\ldots, y_r \}\subseteq X$.
	A generalized CPS is \emph{full} if $t>r$ and $\{x_1,\ldots, x_t\}=X$.
\end{defn}

For a tree $\T$ on $X'\subseteq X$ the sequence $s$ defines a sequence of trees $(\T^{(0)}, \ldots, \T^{(r)})$ as follows:
\begin{itemize}
	\item $\T^{(0)}=\T$.
	\item If $(x_j,y_j)\in \T^{(j-1)}$, then $\T^{(j)}=\T^{(j-1)} \setminus \{x_j\}$.
	      Otherwise $\T^{(j)}=\T^{(j-1)}$.
\end{itemize}
We will refer to $\T^{(r)}$ as $\T(s)$, the tree obtained by applying sequence $s$ to $\T$.

A full generalized CPS on $X$ is a \emph{generalized CPS} for a set $T$ of trees if for each $\T\in T$ the tree $\T(s)$ contains just one leaf and that leaf is in $\{x_{r+1},\ldots ,x_{t} \}$.
The \emph{weight} of a sequence $s$ for a set of trees on $X$ is defined as $w_T(s)=|s|-|X|$.

A generalized CPS is a \emph{tree-child} sequence if $|s|\leq r+1$ and $y_j\neq x_i$ for all $1\leq i<j\leq |s|$.
If for such a \emph{tree-child} sequence $|s|=r$, then $s$ is also called a \emph{tree-child} sequence prefix.

It has been proven that a tree-child network displaying a set of trees $T$ with $r(\mathcal{N})=k$ exists if and only if a tree-child sequence $s$ with $w(s)=k$ exists.
The network can be efficiently computed from the corresponding sequence.
The algorithm presented by Van Iersel et al.
works by searching for such a sequence.

We will show that it is possible to combine their algorithm with the algorithm presented in \cref{sec:non_temporal}.
This yields an algorithm that decides \textsc{Semi-temporal hybridization} in $O(5^{k}(8k)^p\cdot k\cdot n\cdot m)$ time.

\begin{defn}
	Let $s=((x_1,y_1),\ldots,(x_t,-))$ be a full generalized CPS.
	An element $(x_i,y_i)$ is a \emph{non-temporal} element when there are $j,k\in [t]$ with $i<j<k\leq t$ and $x_j\neq x_i$ and $x_k=x_i$.
\end{defn}
\begin{defn}
	For a sequence $s$ we define $d(s)$ to be the number of non-temporal elements in $s$.
\end{defn}

\newcommand{\lemSemiTemporalTequenceToNetworkText}{Let $s$ be a full tree-child sequence $s$ for $T$.
	Then there exists a network $\N$ with semi-temporal labeling $t$ such that $r(\N)\leq w_T(s)$ and $d(\N,t)\leq d(s)$.}
\newtheorem*{lemSemiTemporalTequenceToNetwork}{Lemma~\ref{lem:SemiTemporalTequenceToNetwork}}
\begin{lem}
	\label{lem:SemiTemporalTequenceToNetwork}
   \lemSemiTemporalTequenceToNetworkText
\end{lem}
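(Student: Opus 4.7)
The plan is to build $\N$ from $s$ using the standard backward construction of a tree-child network from a tree-child sequence, as described in \cite{van_iersel_practical_2019}, which already guarantees $r(\N) \leq w_T(s)$. I then define a semi-temporal labeling $t$ on $\N$ directly from the positions of the elements of $s$ and bound $d(\N, t)$ by $d(s)$.

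First I would recall the construction: processing $s$ from position $t$ down to $1$, each pair $(x_i,y_i)$ with $i\leq r$ either introduces a fresh leaf $x_i$ (the first time the symbol $x_i$ is encountered going right-to-left) or subdivides the top arc above $y_i$ and adds a new hybridization arc feeding into the reticulation above $x_i$. For each leaf $x$ with $k$ occurrences in $s$, the resulting network contains a reticulation above $x$ whose in-degree equals $k$ (before binarization), so the total number of reticulations is at most $|s|-|X| = w_T(s)$.

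Next, for each node $v$ of $\N$ let $\sigma(v)$ denote the backward step at which $v$ is added, with $\sigma$ set to $t+1$ for the surviving leaves and to $0$ for the root. I define $t(v)=\sigma(v)$ when $v$ is a tree vertex or a leaf, and $t(v)=\min\{t(u):(u,v)\in E\}$ when $v$ is a hybridization vertex. Tree arcs $(u,v)$ then satisfy $t(u)<t(v)$ because, in the backward construction, a parent is always added at a strictly earlier step (smaller $\sigma$) than its child; the condition on hybridization vertices holds by definition, so $t$ is a semi-temporal labeling. One subtlety here is the treatment of survivor leaves that also appear earlier in $s$, but this can be handled by giving each surviving leaf the $\sigma$-value of its first reintroduction.

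Finally I would count non-temporal arcs and match them to non-temporal elements of $s$. A reticulation above a leaf~$x$ with occurrences at positions $i_1<i_2<\cdots<i_k$ has $k$ incoming hybridization arcs, one entering at step $i_j$ for each $j$. Under $t$, the arc coming from step $i_j$ is non-temporal exactly when $\sigma$ strictly decreases between $i_{j-1}$ and $i_j$, which in turn happens precisely when some different symbol is picked between two consecutive occurrences of $x$. By choosing $\sigma$ to stay constant across runs of consecutive occurrences of the same $x$ (i.e.\ collapsing their construction step to a single value), the hybridization arcs within such a run become temporal, while the remaining hybridization arcs are in bijection with the pairs $(x_i,y_i)$ for which some $j,k$ with $i<j<k$, $x_j\neq x_i$, $x_k=x_i$ exist, i.e.\ with the non-temporal elements of $s$. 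The main obstacle will be making this last bookkeeping argument precise: one must verify that consecutive same-symbol occurrences can indeed be grouped without violating the tree-child property or the strict increase along tree arcs, so that exactly the non-temporal elements contribute non-temporal arcs, yielding $d(\N,t)\leq d(s)$.
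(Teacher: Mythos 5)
Your overall route is the same as the paper's: build $\N$ backwards from $s$ via the Linz--Semple construction, assign times according to the order in which vertices are created, group maximal runs of consecutive elements with the same first coordinate, and argue that only the earlier (non-temporal) runs can contribute non-temporal hybridization arcs. The counting at the end is also essentially the paper's: each non-temporal run of length $\ell$ contributes at most $\ell$ non-temporal arcs, and these runs consist exactly of the non-temporal elements of $s$.

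However, there is a genuine gap in the middle step: the static labeling $t(v)=\sigma(v)$ is not a semi-temporal labeling, and the justification ``a parent is always added at a strictly earlier step than its child'' is false. When $(x_i,y_i)$ is processed, the pendant arc $(p_{y_i},y_i)$ above the \emph{already existing} leaf $y_i$ is subdivided by a new vertex $v_{y_i}$, which becomes the new parent of $y_i$; so parents are routinely created \emph{after} their children. Whichever orientation you give $\sigma$, one of the two tree arcs $(p_{y_i},v_{y_i})$ or $(v_{y_i},y_i)$ then violates $t(u)<t(v)$ (e.g.\ for $s=((a,b),(b,c),(c,-))$ the caterpillar path forces the label of $b$ to exceed that of a vertex created strictly later than $b$). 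No fixed creation-time label can satisfy both families of arcs simultaneously; this is precisely why the paper's proof maintains a \emph{dynamic} labeling, re-raising $t(y)$ to $\tau+2$ and setting $t(v_y)=\tau+1$ with $\tau=\max\{\max_{y\in Y}t(p_y),\,t(p_x)-1\}$ every time the arc above $y$ is subdivided, and then checks by induction that each $\N_i$ remains semi-temporally labeled. The ``subtlety'' you flag (surviving leaves) is not the real obstacle, and the repair you sketch does not address the failing pendant arcs; once the dynamic relabeling is in place, your run-grouping and the bound $d(\N,t)\leq d(s)$ go through as in the paper.
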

\markj{The full proof of \cref{lem:SemiTemporalTequenceToNetwork} is given in the appendix. We construct a tree-child network $\N$ from $s$ in a similar way to \cite[Proof of Theorem 2.2]{linz_attaching_2019}, working backwards through the sequence. At each stage when a pair $(x,y)$ is processed, we adjust the network to ensure there is an arc from the parent of $y$ to the parent of $x$. 
Our contribution is to also maintain a semi-temporal labeling $t$ on $\N$.
This can done in such a way that for each pair $(x,y)$, at most one new non-temporal arc is created, and only if $(x,y)$ is a non-temporal element of $s$. This ensures that $d(\N,t)\leq d(s)$.}

\newcommand{\lemSemiTemporalNetworkToSequenceText}{
	For a tree-child network $\N$ there exists a full tree-child sequence $s$ with $d(s)\leq d(\N)$ and $w_T(s)\leq r(\N)$. }
\newtheorem*{lemSemiTemporalNetworkToSequence}{Lemma~\ref{lem:SemiTemporalNetworkToSequence}}
\begin{lem}
	\label{lem:SemiTemporalNetworkToSequence}
   \lemSemiTemporalNetworkToSequenceText
\end{lem}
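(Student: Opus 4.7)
The plan is to construct $s$ by induction on the number of non-leaf vertices of $\N$, guided by a fixed semi-temporal labeling $t^*$ attaining $d(\N,t^*)=d(\N)$. In the inductive step, I would pick a leaf $x$ of the current network whose parent $p_x$ maximises $t^*(p_x)$ over all leaves, mirroring the construction already used in the temporal setting (cf.\ the translation from temporal networks to cherry picking sequences described earlier). By the tree-child property, $p_x$ either (i) is a tree vertex whose other child is a leaf $y$, in which case I would append $(x,y)$ and delete $x$ to form the next network; or (ii) is a reticulation with parents $u_1,u_2$, in which case I would find a leaf $y$ whose parent is some $u_i$, append $(x,y)$, and remove the arc $(u_i,p_x)$ from $\N$. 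In case (ii), I would break ties by preferring the $u_i$ satisfying $t^*(u_i)=t^*(p_x)$, i.e., the temporal incoming arc, whenever such a choice exists.

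The weight bound $w_T(s)\leq r(\N)$ falls out of the construction: case (i) picks each account for one leaf and do not contribute to $w_T$, whereas each case (ii) pick removes exactly one reticulation arc from $\N$, so the total number of case (ii) picks is bounded by the total number of reticulation arcs, which is $r(\N)$.

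The main obstacle is establishing $d(s)\leq d(\N)$. The strategy here is to define an injection from the set of non-temporal elements of $s$ into the set of non-temporal arcs of $\N$ under $t^*$. For each non-temporal element $(x_i,y_i)$ in $s$, the fact that some $x_j\neq x_i$ is picked before $x_i$ re-appears at a later index means the construction was forced to switch away from $x_i$. Since $x$ was always chosen to maximise $t^*(p_x)$ and since, whenever both incoming arcs of the reticulation currently above $x_i$ are temporal, one of them can be processed immediately without switching leaves, such a forced switch must correspond to the case where the next incoming arc we need to remove is non-temporal under $t^*$. I would charge $(x_i,y_i)$ to this arc.

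The delicate point is injectivity of this charge. This requires tracking how reticulations of $\N$ map bijectively onto reticulations encountered during the reduction, and verifying that each non-temporal arc is used to account for at most one non-temporal element of $s$; once an arc is removed in a case (ii) step, it cannot be re-charged later. Combined with the observation that the restriction of $t^*$ to the reduced tree-child network remains semi-temporal with no more non-temporal arcs than in $\N$, the inductive hypothesis then yields the desired bound. The weight bound $w_T(s)\leq r(\N)$ will follow by a parallel induction, and together they complete the proof.
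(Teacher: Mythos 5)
Your high-level plan (reduce $\N$ step by step, and injectively charge non-temporal elements of $s$ to non-temporal arcs of $\N$) is the same as the paper's, but two of your specific design choices break the argument. First, the selection rule: picking the leaf $x$ maximising $t^*(p_x)$ does not guarantee that $x$ is in a cherry or is the reticulation leaf of a reticulated cherry, so your two cases are not exhaustive. The tree-child property only gives $p_x$ a non-reticulation child besides $x$ when $p_x$ is a tree vertex, and while maximality of $t^*(p_x)$ rules out that child being an internal tree vertex (a tree path below it would lead to a leaf with a larger parent label), it does not rule out that child being a reticulation $c$: since $t^*$ is only semi-temporal, $t^*(c)=\min\{t^*(u):(u,c)\in E\}$ can be strictly smaller than $t^*(p_x)$ when the arc $(p_x,c)$ is non-temporal, so the leaf below $c$ has a smaller parent label and your rule still selects $x$, which then fits neither case (i) nor case (ii). (This is exactly the feature of genuinely temporal labelings — reticulation arcs preserve the label — that the Humphries et al.\ construction exploits and that you lose here.) The paper avoids this by not tying the choice to a label-maximising leaf: it uses the Linz--Semple fact that \emph{some} cherry or reticulated cherry always exists in a tree-child network, and processes them with the priority order cherries, then non-temporal reticulated cherries, then temporal ones.

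Second, the charging. An element $(x_i,y_i)$ is non-temporal whenever $x_i$ recurs later after an interruption, so \emph{every} element of a non-final block of $x$-elements is non-temporal, not only the one at which you are ``forced to switch''; charging one non-temporal arc per switch therefore undercounts. Worse, your tie-break deletes the \emph{temporal} incoming arc of the reticulation first; if that element ends up in a non-final block it is a non-temporal element whose deleted arc is temporal, so it has nothing to be charged to. The working discipline is the opposite one, which is what the paper does: delete non-temporal reticulation arcs one at a time (each such step emits at most one element, charged to the arc it deletes), and once only temporal reticulated cherries remain for a leaf, emit all of its remaining pairs as one consecutive final block and delete the leaf, so that the entire block is temporal by definition. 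Finally, a small slip in the weight bound: the number of reticulation arcs is $\sum_v d^-(v)$, not $r(\N)=\sum_v(d^-(v)-1)$; you need the extra observation that the last incoming arc of each reticulation is never deleted by a case (ii) pick (the reticulation is suppressed once its in-degree reaches one), which is what brings the count of such picks down to $r(\N)$.
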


\markj{
The full proof of \cref{lem:SemiTemporalNetworkToSequence} is given in the appendix. We construct the sequence in a similar way to  \cite[Lemma 3.4]{linz_attaching_2019}. The key idea is that at any point the network will contain some pair of leaves $x,y$ that either form a \emph{cherry} (where $x$ and $y$ share a parent) or a \emph{reticulated cherry} (where the parent of $x$ is a reticulation, with an incoming edge from the parent of $y$).
We process such a pair by appending $(x,y)$ to $s$, deleting an edge from $\N$, and simplifying the resulting network.
By being careful about the order in which we process reticulated cherries, we can ensure that we only add a non-temporal element to $s$ when we delete a non-temporal arc from $\N$.   This ensures that $d(s) \leq d(\N,t)$.}

\begin{obs}
	\label{obs:tc:either_ab_or_ba}
	A tree-child sequence $s$ can not contain both $(a,b)$ and $(b,a)$.
\end{obs}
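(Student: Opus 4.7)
The plan is to give a short direct argument by contradiction, using only the defining property of a tree-child sequence that $y_j \neq x_i$ for all $1 \leq i < j \leq |s|$.

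Suppose for contradiction that $s$ contains both $(a,b)$ and $(b,a)$, say $(a,b)$ at position $i$ and $(b,a)$ at position $j$. I would split on the two possible orderings of the indices. If $i<j$, then $x_i = a$ while $y_j = a$, which directly contradicts the tree-child condition $y_j \neq x_i$ for $i<j$. If $j<i$, then $x_j = b$ while $y_i = b$, again contradicting $y_i \neq x_j$ with $j<i$ playing the roles of $i<j$ in the definition. Since $i=j$ is impossible (the pairs $(a,b)$ and $(b,a)$ are distinct as $a\neq b$ is implicit, otherwise the statement is trivial), both cases lead to contradictions and the observation follows.

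The only subtlety to check is the edge case $a = b$: if $a = b$ then the two occurrences are actually the same pair $(a,a)$, and the observation is vacuous (there is only one pair, not two). I would briefly mention this at the start to dispense with the degenerate case, then carry out the two-case argument above. There is no real obstacle here; the statement is an immediate unpacking of the definition of a tree-child sequence.
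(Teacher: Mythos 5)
Your proof is correct and is exactly the immediate unpacking of the tree-child condition $y_j\neq x_i$ for $i<j$ that the paper has in mind (the paper states this as an observation without any written proof). Both orderings of the two occurrences are handled properly, and the $a=b$ remark is a harmless aside since cherries consist of distinct leaves.
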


\begin{obs}
	\label{lem:tree_child_subsequence}
	If a tree-child sequence $s$ has a subsequence $s'$ that is a generalized cherry picking sequence for $T$, then $s$ is also a generalized cherry picking sequence for $T$.
\end{obs}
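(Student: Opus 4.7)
The plan is to show that for every $\T\in T$, the sequence $s$ acts on $\T$ in exactly the same way as its subsequence $s'$, namely $\T(s)=\T(s')$. Since $s'$ is a generalized CPS for $T$, $\T(s')$ is a single leaf in the residual set of $s'$, and it will then follow that $s$ is also a generalized CPS for $T$ after a routine verification that $s$ is full and that its residual set contains the same leaf (both being consequences of $s$ being tree-child with $s'$ as a full subsequence).

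\textbf{Parallel induction.} Write $s=(e_1,\ldots,e_m)$ and $s'=(e_{i_1},\ldots,e_{i_p})$ with $i_1<\cdots<i_p$, and set $f(k)=|\{j:i_j\leq k\}|$. Let $\T^{(k)}$ be the tree after applying the first $k$ elements of $s$ to $\T$, and let $(\T')^{(q)}$ be the analogous tree for $s'$. I would prove by induction on $k$ that $\T^{(k)}=(\T')^{(f(k))}$. The case where $e_{k+1}$ is the next element of $s'$ is immediate, since both sequences then perform the same operation on identical trees. So the work reduces to showing, for each \emph{extra} element $e_{k+1}=(x,y)$ (that is, not in $s'$), that $(x,y)$ is not a cherry in $\T^{(k)}$, so the step is inert.

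\textbf{Main argument via \cref{obs:tc:either_ab_or_ba}.} Suppose for contradiction that $(x,y)$ is a cherry in $\T^{(k)}=(\T')^{(f(k))}$. Because $s'$ reduces $\T$ to a single leaf $\ell$, I would first rule out $\ell\in\{x,y\}$: for example if $\ell=x$ then $x$ is never removed during $s'$, but $y$ must be, and the cherry $\{x,y\}$ is preserved by every intermediate $s'$-step (each such step at most deletes some leaf other than $x,y$ and suppresses its parent, neither of which disturbs the parent of $x,y$), so the removal of $y$ has to use the cherry operation $(y,x)$. This places $(y,x)\in s'\subseteq s$ alongside the extra $(x,y)\in s$, contradicting \cref{obs:tc:either_ab_or_ba}. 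Given $\ell\notin\{x,y\}$, both $x$ and $y$ must then be removed in $s'$ after step $f(k)$; whichever is removed first, say $y$, still has $x$ as its cherry-mate at that moment, so again $(y,x)\in s'$, and the same contradiction with the extra $(x,y)$ appears.

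\textbf{Wrap-up and main obstacle.} Once the inductive claim is established, $\T(s)=\T(s')$ for every $\T\in T$, and the remaining conditions on $s$ follow as noted: $s$ is full because its first coordinates contain those of $s'$ and therefore equal $X$, and the tree-child inequality $|s|\leq r+1$ forces the residual $\{x_{r+1},\ldots,x_t\}$ to consist of the very leaf returned by $s'$. The step I expect to require the most care is the cherry-mate analysis in the main argument: I need to justify that $\{x,y\}$ remains a cherry in $(\T')^{(q)}$ for every $q$ between $f(k)$ and the first step that deletes one of $x,y$, so that this deletion necessarily names the other as the second coordinate. This is a straightforward structural property of the cherry-picking operation—deleting a leaf $z\notin\{x,y\}$ only rearranges the subtree around $z$ and leaves the parent of $\{x,y\}$ unchanged—but it is the linchpin that converts the assumption into a pair $\{(x,y),(y,x)\}\subseteq s$ forbidden by \cref{obs:tc:either_ab_or_ba}.
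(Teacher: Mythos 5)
The paper records this statement as an unproved observation, so I can only judge your argument on its own terms, and it contains a genuine gap: the inductive invariant $\T^{(k)}=(\T')^{(f(k))}$ is false, because an extra element of $s$ can fire. Your contradiction argument only closes the cases in which the $s'$-operation that eventually removes one of $x,y$ is $(y,x)$; when that operation is $(x,y)$ itself---that is, when the extra element duplicates an element of $s'$---you merely obtain $(x,y)\in s'$ together with the extra occurrence of $(x,y)$ in $s$, which \cref{obs:tc:either_ab_or_ba} does not forbid (it excludes $(a,b)$ together with $(b,a)$, not two occurrences of $(a,b)$). And in that situation the extra element genuinely fires. Concretely, let $\T$ be the caterpillar on $\{a,b,c\}$ with cherry $\{a,b\}$, let $s=((a,b),(b,c),(b,c),(c,-))$, which is a tree-child sequence, and let $s'=((a,b),(b,c),(c,-))$ embedded at positions $1,3,4$. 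The extra element is the copy of $(b,c)$ at position $2$; after $(a,b)$ the tree is the cherry $\{b,c\}$, so the extra element removes $b$ and $\T^{(2)}$ is a single leaf, while $(\T')^{(f(2))}=(\T')^{(1)}$ still has two leaves. Your two case splits (``for example if $\ell=x$'' and ``whichever is removed first, say $y$'') each treat only the favourable orientation; the symmetric cases $\ell=y$ and ``$x$ removed first'' are exactly the ones that break.

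The statement itself is true, and your architecture can be repaired by weakening the invariant from equality of trees to containment of leaf sets: prove by induction that $\Leav(\T^{(k)})\subseteq\Leav((\T')^{(f(k))})$, noting that both trees are restrictions of $\T$ to subsets of its leaves. Extra elements of $s$ can only remove additional leaves, which preserves the containment. For a matched element $(x,y)$ that fires in $(\T')^{(f(k))}$, the tree-child condition guarantees that $y$ has not already been removed by $s$ (otherwise $y=y_{k+1}=x_i$ for some $i\le k$), and since deleting further leaves preserves cherries among the surviving leaves---the structural fact you already isolate---$\{x,y\}$ is still a cherry in $\T^{(k)}$ unless $x$ is already gone; either way $x\notin\Leav(\T^{(k+1)})$, so the containment is maintained. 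At the end $\Leav(\T(s))\subseteq\Leav(\T(s'))=\{x_{r+1}\}$ and $\T(s)$ is nonempty, so $\T(s)$ is the single leaf $x_{r+1}$; fullness of $s$ follows from fullness of $s'$ exactly as in your wrap-up.
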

\begin{lem}
	\label{lem:tc:minus_still_seq}
	If $s=((x_1,y_1),\ldots,(x_{r+1},-))$ is a generalized CPS for $\T$ and there is a $z$ such that $y_i\neq z$ for all $i$.
	Then $(\T\setminus \{z\})(s)=\T(s)$ and therefore $s$ is also a generalized CPS for $\T\setminus \{z\}$.
\end{lem}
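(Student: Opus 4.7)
The plan is to prove the lemma by induction on $r$, the number of non-dash pairs in $s$. The intuition is that since $z$ never appears as a surviving partner $y_i$, the sequence $s$ effectively picks ``around'' any interaction with $z$, so removing $z$ ahead of time should not affect the final one-leaf tree. The base case $r=0$ is essentially vacuous: $\T$ has a single leaf $x_1$ and the hypothesis on $z$ is empty.

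For the inductive step I would split on whether $(x_1, y_1)$ is a cherry in $\T$ and in $\T\setminus\{z\}$. When $(x_1, y_1)\in\T$, deletion of a leaf $z\ne x_1,y_1$ from a binary tree preserves any existing cherry, so $(x_1,y_1)\in\T\setminus\{z\}$ as well, and both executions remove $x_1$ at step~1; applying the inductive hypothesis to the tail $s'=((x_2,y_2),\ldots,(x_{r+1},-))$ on $\T^{(1)}=\T\setminus\{x_1\}$ finishes this case (the degenerate subcase $z=x_1$ is immediate, since $\T\setminus\{z\}=\T^{(1)}$ and step~1 of $s$ is a no-op on $\T\setminus\{z\}$). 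When $(x_1,y_1)\notin\T$ and also $(x_1,y_1)\notin\T\setminus\{z\}$, step~1 is a no-op on both trees and induction on $s'$ handles it directly.

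The main obstacle is the remaining subcase: $(x_1,y_1)\notin\T$ but $(x_1,y_1)\in\T\setminus\{z\}$, i.e., the deletion of $z$ unveils a new cherry (structurally, $y_1$ must be the sibling of $z$ in $\T$ and the parent of $\{y_1,z\}$ must itself have $x_1$ as a sibling). The two executions then diverge at step~1 ($x_1$ is removed from $\T\setminus\{z\}$ but not from $\T$), and the work is to show they re-converge. To do this I would apply the inductive hypothesis to the shorter sequence $s'$ twice. First, since $\T^{(1)}=\T$, $s'$ is itself a generalized CPS for $\T$, so induction gives $(\T\setminus\{z\})(s')=\T(s')=\T(s)$, which in particular establishes that $s'$ is a CPS for $\T\setminus\{z\}$. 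Second, I apply induction to $s'$ acting on $\T\setminus\{z\}$, this time with $x_1$ playing the role of the deleted leaf; this step uses $y_i\ne x_1$ for all $i\ge 2$, which is the tree-child condition $y_j\ne x_i$ (for $i<j$) specialised at $i=1$ and is natural in this section (cf.\ \cref{obs:tc:either_ab_or_ba} and \cref{lem:tree_child_subsequence}). This yields $((\T\setminus\{z\})\setminus\{x_1\})(s')=(\T\setminus\{z\})(s')=\T(s)$, and combining with $(\T\setminus\{z\})(s)=((\T\setminus\{z\})\setminus\{x_1\})(s')$ closes the induction. The final assertion that $s$ is also a generalized CPS for $\T\setminus\{z\}$ is then immediate: $(\T\setminus\{z\})(s)=\T(s)$ is a single-leaf tree whose lone leaf $x_{r+1}$ lies in $\{x_{r+1},\ldots,x_t\}$.
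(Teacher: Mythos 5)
Your induction-on-the-sequence argument is a genuinely different route from the paper's, which instead supposes the conclusion fails, takes the smallest index $i$ at which $\Leav((\T\setminus\{z\})((x_1,y_1),\ldots,(x_i,y_i)))\not\subseteq\Leav(\T((x_1,y_1),\ldots,(x_i,y_i))\setminus\{z\})$, and derives a contradiction from a leaf $q$ sitting below the lowest common ancestor of $x_i$ and $y_i$ in $(\T\setminus\{z\})((x_1,y_1),\ldots,(x_{i-1},y_{i-1}))$. The substantive difference is the one you flag yourself: your second application of the inductive hypothesis in the ``unveiled cherry'' case needs $y_i\neq x_1$ for $i\geq 2$, i.e.\ the tree-child condition, which is not among the lemma's stated hypotheses. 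You should not treat this as a blemish of your argument, however, because the lemma as literally stated (for arbitrary generalized CPSs) is false. Take $\T=(d,(c,(b,(a,z))))$ and $s=((a,b),(z,a),(b,a),(a,c),(c,d),(d,-))$: every $y_i$ differs from $z$, and $s$ reduces $\T$ to the single leaf $d$ (the first pair is a no-op, then $z,b,a,c$ are removed in turn), yet applied to $\T\setminus\{z\}=(d,(c,(b,a)))$ the first pair deletes $a$, after which $(z,a)$, $(b,a)$, $(a,c)$ and $(c,d)$ are all no-ops and three leaves survive. The paper's own proof silently assumes that $y_i$ is still a leaf of $(\T\setminus\{z\})((x_1,y_1),\ldots,(x_{i-1},y_{i-1}))$ when it forms the lowest common ancestor of $x_i$ and $y_i$; that assumption is exactly what the tree-child condition $y_i\neq x_j$ (for $j<i$) guarantees, and exactly what fails in the example above ($y_3=a$ has already been deleted on the $\T\setminus\{z\}$ side). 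Since the lemma is only ever invoked for tree-child sequences, your restricted version is the right statement to prove; the two small points left to tidy are the parenthetical structural claim in your third case (it may be $x_1$ rather than $y_1$ that is the sibling of $z$, though nothing downstream depends on which) and the degenerate possibility $z=x_{r+1}$ or $x_1=x_{r+1}$, which must be excluded for the conclusion to be meaningful under either approach.
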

\begin{proof}
	Suppose this is not true.
	Because $\T(S)$ consists of a tree with only one leaf $x_{r+1}$, this implies that $\Leav((\T\setminus \{z\})(s))\not \subseteq \Leav(\T(s))$.
	Let $i$ be the smallest $i$ for which \leo{we have that} $\Leav((\T\setminus \{z\})((x_1,y_1),\ldots,(x_i,y_i))) \not\subseteq   \Leav(\T((x_1,y_1),\ldots,(x_i,y_i))\setminus \{z\})$.

	\begin{sloppypar}
		This implies that $x_i \in \Leav((\T\setminus \{z\})((x_1,y_1),\ldots,(x_i,y_i)))$ but $x_i\notin \Leav(\T((x_1,y_1),\ldots,(x_i,y_i))\setminus \{z\})$, so $(x_i,y_i)\notin (\T\setminus \{z\})((x_1,y_1),\ldots,(x_{i-1},y_{i-1}))$, but  $(x_i,y_i)\in \T((x_1,y_1),\ldots,(x_{i-1},y_{i-1}))\setminus\{z\}$.
		Let $p$ be the lowest vertex that is an ancestor of both $x_i$ and $y_i$ in the tree $(\T\setminus~\{z\})((x_1,y_1),\ldots,(x_{i-1},y_{i-1}))$.
		Because $x_i$ and $y_i$ do not form a cherry in this tree, there is another leaf $q$ that is reachable from $p$.
		Because $q\in \Leav(\T((x_1,y_1),\ldots,(x_{i-1},y_{i-1}))\setminus \{z\})$, $q$ is also reachable from the lowest common ancestor $p'$ in $\T((x_1,y_1),\ldots,(x_{i-1},y_{i-1}))\setminus \{z\}$, contradicting the fact that $(x_i,y_i)$ is a cherry in this tree.
	\end{sloppypar}
\end{proof}

\subsection{Constraint sets}
The new algorithm also uses constraint sets.
However, because the algorithm searches for a generalized cherry picking sequence, we need to define what it means for such a sequence to satisfy a constraint set.
\begin{defn}
	A generalized cherry picking sequence $s=((x_1,y_1),\ldots, (x_k,y_k))$ satisfies constraint set $C$ if for every $(a,b)\in C$ there is an $i$ with  $(x_i,y_i)=(a,b)$ and there is some $j\neq i$ with $x_j=a$.
\end{defn}

In \cref{def:ht} the function $H(T)$ was defined for sets of binary trees with the same leaves.
After applying a tree-child sequence not all trees will necessarily have the same leaves.
Because of this, we generalize the definition of $H(T)$ to sets of binary trees.

\begin{defn}
	For a set of binary trees $T$ define $H(T) =\{x\in\Leav(T): \forall \T \in T \text{ if } x\in \T \text{ then }x\text{ is in a cherry in }\T \}$.
\end{defn}

\begin{lem}
	If $s=((x_1,y_1),\ldots,(x_{r+1},-))$ is a tree-child sequence for $T$ and $(a,b)\in T$, then there is an $i$ such that $(x_i,y_i)=(a,b)$ or $(x_i,y_i)=(b,a)$.
	\label{lem:tc:leastonepicked}
\end{lem}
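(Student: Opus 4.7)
The plan is to fix a tree $\mathcal{T} \in T$ in which $(a,b)$ is a cherry and to track what happens to $a$ and $b$ as the sequence $s$ is applied to $\mathcal{T}$. The core observation is that the pair $\{a,b\}$ remains a cherry of $\mathcal{T}^{(i)}$ for every $i$ such that both $a$ and $b$ still belong to $\mathcal{T}^{(i)}$. This follows by a straightforward induction on $i$: if at step $i$ we remove some leaf $x_i \notin \{a,b\}$, then $x_i$'s parent is distinct from the common parent of $a$ and $b$, so suppressing it leaves the parent of $a,b$ (and thus the cherry $\{a,b\}$) intact.

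Next I would argue that at least one of $a,b$ must actually be removed by $s$. Since $s$ is a tree-child sequence with $|s| = r+1$, the set $\{x_{r+1},\ldots,x_t\}$ equals $\{x_{r+1}\}$, and by definition of a generalized CPS the tree $\mathcal{T}(s)$ consists of the single leaf $x_{r+1}$. Hence every leaf of $\mathcal{T}$ other than possibly $x_{r+1}$ is removed by some step of $s$, so in particular at least one of $a,b$ is removed.

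Let $j$ be the smallest index with $x_j \in \{a,b\}$ and $\mathcal{T}^{(j)} \neq \mathcal{T}^{(j-1)}$, i.e.\ the first step at which a member of $\{a,b\}$ is actually deleted from $\mathcal{T}$. By minimality of $j$, neither $a$ nor $b$ has been deleted in steps $1,\ldots,j-1$, so both remain in $\mathcal{T}^{(j-1)}$. By the observation above, $(a,b)$ is a cherry of $\mathcal{T}^{(j-1)}$. Because $\mathcal{T}^{(j-1)}$ is a binary tree, each leaf has a unique sibling, so the only cherry containing $a$ is $\{a,b\}$ and likewise the only cherry containing $b$ is $\{a,b\}$.

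Finally, since the actual removal happens at step $j$, the pair $(x_j,y_j)$ must be a cherry in $\mathcal{T}^{(j-1)}$. If $x_j=a$, then necessarily $y_j=b$ and so $(x_j,y_j)=(a,b)$; if $x_j=b$, then $y_j=a$ and $(x_j,y_j)=(b,a)$. Either way we obtain the desired $i=j$. I do not anticipate a serious obstacle; the only delicate point is the preservation-of-cherry observation in Step~1, which must be stated carefully to cover the case where suppressing the parent of a removed leaf could in principle modify the neighborhood of surviving leaves, but this only happens when the removed leaf is a sibling of $a$ or $b$, which is impossible while $a,b$ are both present.
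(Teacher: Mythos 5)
Your proof is correct and takes essentially the same route as the paper's: since $s$ fully reduces $\T$, one of $a,b$ must be deleted, and the first such deletion can only be effected by the pair $(a,b)$ or $(b,a)$ because $\{a,b\}$ remains the unique cherry containing either leaf while both survive. You merely spell out the cherry-preservation induction that the paper's (much terser) proof leaves implicit.
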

\begin{proof}
	Let $T\in T$ be a tree in $T$ containing cherry $(a,b)$.
	Because $s$ fully reduces $T$, $\T(s)$ consists of only the leaf $x_{r+1}$.
	So $a$ or $b$ has to be removed from $\T$ by applying $s$.
	Without loss of generality we can assume $a$ is removed first.
	This can only happen if there is an $i$ with $(x_i, y_i)=(a,b)$.
\end{proof}

Now we prove that if there are two cherries $(a,z)$ and $(b,z)$ in $T$, then we can branch on three possible additions to the constraint set, just like we did for cherry picking sequences.
\begin{lem}
	Let $s$ be a tree-child sequence for $T$ and $a,b\in N_T(z)$ with $a\neq b$.
	Then $s$ satisfies one of the following constraint sets: \\$\{(a,z)\}, \{(b,z)\}, \{(z,a),(z,b)\}$.
	\label{lem:tc:branch_in_three}
\end{lem}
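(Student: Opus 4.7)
The plan is to mimic the proof of Observation~\ref{obs:ConstraintThreeCases} and adapt it to the tree-child setting. Since $a,b \in N_T(z)$ with $a \neq b$, there exist distinct trees $\T_1, \T_2 \in T$ with $(a,z) \in \T_1$ and $(b,z) \in \T_2$; these are distinct because $z$ has a unique cherry partner in any binary tree. Applying \cref{lem:tc:leastonepicked} to each of these cherries tells me that $s$ contains one pair from $\{(a,z),(z,a)\}$ and one pair from $\{(b,z),(z,b)\}$, and by \cref{obs:tc:either_ab_or_ba} not both pairs within either family. This yields four cases, the easy one being immediate: if $(z,a),(z,b) \in s$, then $z$ appears as the first coordinate at two distinct positions, so $s$ satisfies $\{(z,a),(z,b)\}$.

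For the three remaining cases, namely (i) $(a,z),(b,z) \in s$, (ii) $(a,z),(z,b) \in s$, and (iii) $(z,a),(b,z) \in s$, my plan is to argue by contradiction that the relevant leaf appears as first coordinate at more than one position. For case (i), let $\alpha,\beta$ be the positions of $(a,z),(b,z)$ respectively and, by symmetry in $a$ and $b$, assume $\alpha<\beta$. Suppose $a$ occurs as first coordinate only at $\alpha$ and $b$ only at $\beta$. Because the trees share a leaf set and $s$ fully reduces each of them, $a$ must be removed from $\T_2$, which can only happen at step $\alpha$; hence $(a,z)$ must be a cherry in $\T_2^{(\alpha-1)}$. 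But $z$'s cherry partner in $\T_2$ is initially $b$, and it remains $b$ as long as $b$ is not removed from $\T_2$; by the uniqueness assumption $b$ is not removed before $\beta>\alpha$, so $b$ is still $z$'s partner in $\T_2^{(\alpha-1)}$, contradicting that $(a,z)$ is a cherry there. Hence $a$ or $b$ must appear as first coordinate more than once, giving satisfaction of $\{(a,z)\}$ or $\{(b,z)\}$.

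For cases (ii) and (iii) the same contradiction pattern applies, with the tree-child condition $y_j \neq x_i$ for $i<j$ pinning down the ordering: in case (ii), if $(a,z)$ is at $\alpha$ and $(z,b)$ at $\beta$, then $\alpha<\beta$ (otherwise $y_\alpha=z=x_\beta$ is forbidden), and an analogous uniqueness-based argument shows $b$ cannot leave $\T_2$ before $\beta$ since $(b,z)\notin s$ and $b$'s partner in $\T_2$ stays equal to $z$ until $z$ is picked at $\beta$; this again forces an extra first-coordinate occurrence of $a$, so $s$ satisfies $\{(a,z)\}$. Case (iii) is symmetric to case (ii) with the roles of $a$ and $b$ swapped. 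The main obstacle will be the careful case analysis and the bookkeeping around the tree-child ordering constraints, but the uniform structural observation---that in a tree-child sequence the cherry partner of a leaf in a given tree cannot change until that partner is removed---makes each sub-case reduce to the same contradiction and closes the argument.
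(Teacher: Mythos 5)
Your proof is correct and follows essentially the same route as the paper's: both start from \cref{lem:tc:leastonepicked} and exploit the fact that the cherry $(b,z)$ persists in its tree until $b$ or $z$ is removed, so that $(a,z)$ cannot reduce that tree and $a$ must therefore recur as a first coordinate (with the tree-child condition handling the cases involving $(z,a)$ or $(z,b)$). The paper organizes the argument around whichever of the four candidate pairs occurs first in $s$ (three direct cases) rather than around which pairs are present (your four contradiction cases), but the substance is identical; note also that your explicit appeal to a shared leaf set is an assumption the paper's own proof makes implicitly as well.
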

\begin{proof}
	From \cref{lem:tc:leastonepicked} it follows that either $(a,z)$ or $(z,a)$ is in $s$ and that either $(b,z)$ or $(z,b)$ is in $s$.
	Now let $s_i=(x_i,y_i)$ be the element of these that appears first in $s$.
	Now we have three cases:
	\begin{enumerate}
		\item If $x_i=a$, then $s_i=(a,z)$.
		      Let $\T \in T$ be the tree in which $(b,z)$ is a cherry.
		      Now $(b,z)\in \T(s_1,\ldots, s_i)$.
		      Because $(s_{i+1},\ldots, s_{r+1})$ is a tree-child sequence for $\T(s_1,\ldots, s_i)$, this implies that there is some $j>i$ with $x_j=a$.
		      Consequently $\{(a,z)\}$ is satisfied by $s$.
		\item If $x_i=b$, then the same argument as in (1) can be applied to show that $\{(b,z)\}$ is satisfied by $s$.
		\item If $x_i=z$, then we either have $y_i=a$ or $y_i=b$.
		      Without loss of generality we can assume $y_i=a$.
		      We still have $(b,z)\in T(s_1,\ldots, s_i)$, which implies that there is some $j>i$ with $(x_j,y_j)=(b,z)$ or $(x_j,y_j)=(z,b)$.
		      Because $j>i$ and $s$ is tree-child, we know that $y_j\neq z$.
		      So $(x_j,y_j)=(z,b)$, and consequently $\{(z,a),(z,b)\}$ is satisfied by $s$.
	\end{enumerate}
\end{proof}

We also prove that if $a\in \pi_1(C)$ and $(a,b)\in T$, then we only need to do two recursive calls.
\begin{lem}
	\label{lem:tc:branch_in_two}
	Let $s$ be a tree-child sequence for $T$ that satisfies constraint set $C$ and $a,b\in N_T(z)$ with $(z,b)\in C$.
	Then $s$ satisfies one of the following constraint sets: \\$\{(a,z)\}, \{(z,a)\}$.
\end{lem}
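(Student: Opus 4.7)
The plan is to derive this as a direct consequence of \cref{lem:tc:branch_in_three}, combined with the hypothesis that $(z,b) \in C$ together with \cref{obs:tc:either_ab_or_ba}.

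First I would apply \cref{lem:tc:branch_in_three} to the pair $a,b \in N_T(z)$. This tells us that $s$ satisfies at least one of the three constraint sets $\{(a,z)\}$, $\{(b,z)\}$, or $\{(z,a),(z,b)\}$. In the first case, $s$ already satisfies $\{(a,z)\}$ and we are done. In the third case, $s$ satisfies $\{(z,a),(z,b)\}$, which trivially implies that $s$ satisfies $\{(z,a)\}$, again giving the conclusion.

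The remaining case, that $s$ satisfies $\{(b,z)\}$ but neither $\{(a,z)\}$ nor $\{(z,a)\}$, must be ruled out, and this is where the hypothesis $(z,b)\in C$ enters. If $s$ satisfies $\{(b,z)\}$ then by definition there is an index $i$ with $(x_i,y_i)=(b,z)$. On the other hand, since $s$ satisfies $C$ and $(z,b)\in C$, there is also an index $j$ with $(x_j,y_j)=(z,b)$. But this contradicts \cref{obs:tc:either_ab_or_ba}, which states that a tree-child sequence cannot contain both $(b,z)$ and $(z,b)$. Hence this case is impossible, completing the proof.

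I do not expect any real obstacles: the argument is essentially a three-way case analysis where two cases are immediate and the third is excluded by the tree-child structural constraint already recorded in \cref{obs:tc:either_ab_or_ba}. The only thing to double-check is the definition of ``satisfies'' to make sure that satisfying $\{(z,a),(z,b)\}$ does formally entail satisfying the smaller set $\{(z,a)\}$, which is immediate from the definition.
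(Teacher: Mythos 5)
Your proposal is correct and follows essentially the same route as the paper's proof: apply \cref{lem:tc:branch_in_three} to get the three-way case split, eliminate the $\{(b,z)\}$ case via $(z,b)\in C$ and \cref{obs:tc:either_ab_or_ba}, and note that satisfying $\{(z,a),(z,b)\}$ implies satisfying $\{(z,a)\}$. The only difference is that you spell out more explicitly why satisfying $C$ forces $(z,b)$ to occur in $s$, which the paper leaves implicit.
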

\begin{proof}
	From \cref{lem:tc:branch_in_three} it follows that $s$ satisfies one of the constraint sets $\{(a,z)\}$, $\{(b,z)\}$ and $\{(z,a),(z,b)\}$.
	However, because $s$ satisfies $C$ and $(z,b)\in C$, from \cref{obs:tc:either_ab_or_ba} it follows that $(b,z)$ does not appear in $s$.
	Therefore $s$ has to satisfy either $\{(a,z)\}$ or $\{(z,a),(z,b)\}$.
	If $s$ satisfies $\{(z,a),(z,b)\}$, then it also satisfies $\{(z,a)\}$.
\end{proof}

\begin{lem}
	If a tree-child sequence $s=((x_1,y_1),\ldots,(x_r,x_y),(x_{r+1},-))$ for $T$ satisfies constraint set $C$, then $w_T(s)\geq P(C)$.
	\label{lem:bound_w_pc_nontemp}
\end{lem}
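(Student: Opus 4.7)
The plan is to adapt the style of the earlier proof of \cref{lem:bound_w_pc}, but exploit the fact that here the weight of a generalized tree-child sequence is counted in a different way, namely $w_T(s)=|s|-|X|$, i.e.\ as the number of ``extra'' occurrences of leaves in the first coordinate of $s$. So instead of bounding the weight index-by-index as in the CPS case, I would bound it leaf-by-leaf, counting how many times each $x\in\pi_1(C)$ is forced to appear as a first coordinate of $s$.

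Concretely, for each $x\in \pi_1(C)$ let $C_x=\{(a,b)\in C : a=x\}$. I would first show the key combinatorial bound: every $x\in\pi_1(C)$ occurs at least $\max(|C_x|,2)$ times as a first coordinate of $s$. The argument is:  if $|C_x|\geq 2$ then each pair $(x,b)\in C_x$ must appear as some $(x_i,y_i)$ in $s$, giving $|C_x|$ distinct occurrences already. If $|C_x|=1$ with $C_x=\{(x,b)\}$, then the definition of satisfying the constraint forces the pair $(x,b)$ to appear at some index $i$ and additionally some index $j\neq i$ with $x_j=x$, producing at least $2$ occurrences. Combined with the fact that every leaf of $X$ must appear at least once in the first coordinate of a full generalized CPS, this yields
\begin{align*}
|s| \geq \sum_{x\in\pi_1(C)}\max(|C_x|,2) + (|X|-|\pi_1(C)|),
\end{align*}
and hence
\begin{align*}
w_T(s)=|s|-|X| \geq \sum_{x\in\pi_1(C)}(\max(|C_x|,2)-1).
\end{align*}

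Next I would verify the inequality $\max(|C_x|,2)-1 \geq \psi|C_x|+(1-2\psi)$ for every $x\in\pi_1(C)$, split into the two cases $|C_x|=1$ and $|C_x|\geq 2$. For $|C_x|=1$ the left side is $1$ and the right side is $1-\psi$, so the bound holds because $\psi>0$. For $|C_x|\geq 2$ the bound reduces to $(1-\psi)|C_x|\geq 2(1-\psi)$, which is immediate. Summing over $x\in\pi_1(C)$ then gives
\begin{align*}
w_T(s)\geq \psi\sum_{x\in\pi_1(C)}|C_x| + (1-2\psi)|\pi_1(C)| = \psi|C|+(1-2\psi)|\pi_1(C)| = P(C),
\end{align*}
as required.

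The only subtlety, and the step most likely to need care, is the lower bound $\max(|C_x|,2)$ on the number of occurrences of $x$ as a first coordinate: one has to check that the definition of ``$s$ satisfies $C$'' in the generalized (tree-child) sense really does force a second occurrence of $x$ even when $|C_x|=1$, and to argue that when $|C_x|\geq 2$ the $|C_x|$ occurrences coming from the constraints themselves already make the ``$j\neq i$ with $x_j=a$'' clause automatic without demanding a further occurrence. The remaining arithmetic is routine once this counting bound is in place.
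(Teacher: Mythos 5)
Your proof is correct and follows essentially the same route as the paper's: the paper also decomposes $w_T(s)=|s|-|\Leav(T)|$ leaf-by-leaf, defining $S_z=\{(x_i,y_i):i\leq r \land x_i=z\}$ and showing $|S_z|-1\geq P(C_z)$ via exactly your case split ($|C_z|=0$, $|C_z|=1$ forcing a second occurrence from the ``$j\neq i$'' clause, and $|C_z|\geq 2$ using $C_z\subseteq S_z$). The counting subtlety you flag is handled in the paper in precisely the way you propose.
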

\begin{proof}
	For $z\in \Leav(T)\setminus \{x_{r+1}\}$, let $C_z:=\{(a,b):(a,b)\in C \land a=z \}$ and let $S_z:=\{(x_i,y_i):i\leq r \land x_i=z \}$.
	We show that we have $|S_z|-1\geq P(C_x)$.
	If $|C_z|=0$, then $P(C_z)=0$ and the inequality is trivial.
	If $|C_z|=1$, then from the definition of constraint sets it follows that $|S_z|\geq 2$, so $|S_z|-1\geq 1\geq P(C_z)$.
	Otherwise if $|C_z|\geq 2$, then because $C_z\subseteq S_z$, $|S_z|-1\geq |C_z|-1= \psi \cdot  |C_z|-1 + (1-\psi)|C_z| \geq |C_z|-1 + 2(1-\psi)=|C_z|+(1-2\psi)=P(C_z)$.
	Now the result follows because $w_T(s)=|s|-|\Leav(T)|=\sum_{z\in \Leav(T)\setminus \{x_{r+1}\}}(|S_z|-1)\geq \sum_{z\in \Leav(T)\setminus \{x_{r+1}\}}P(C_z)=P(C)$.
\end{proof}
Next we prove that if a leaf $z$ is in $H(T)$ and appears in $s$ with all of its neighbors, then we can move all elements containing $z$ to the start of the sequence.
\begin{lem}
	\label{lem:tc:remove_immediately}
	If $s=((x_1,y_1),\ldots,(x_{r+1},-))$ is a tree-child sequence for $T$, $z\in H(T)$ and $I$ is a set of indices such that $\{y_i: i\in I \}=N_T(z)$ and $x_i=z$ for all $i\in I$.
	Then the sequence $s'$ obtained by first adding the elements from $s$ with an index in $I$ and then adding elements $(x,y)$ of $s$ for which $x\neq z$ is a tree-child sequence for $T$.
	We have $d(s')\leq d(s)$.
\end{lem}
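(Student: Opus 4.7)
The plan is to prove (i) $s'$ reduces every tree in $T$ to the single leaf $x_{r+1}$, (ii) $s'$ satisfies the tree-child property, and (iii) $d(s') \leq d(s)$, after first establishing a key structural fact about how $z$ interacts with $s$.

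The first key observation I would prove is that for every $\T \in T$ with $z \in \T$, the cherry partner of $z$ in $\T$, call it $y^*_\T$, is fixed throughout $s$ until $z$ itself is removed. Such a $y^*_\T$ exists because $z \in H(T)$. Since $y^*_\T \in N_T(z) = \{y_i : i \in I\}$ we have $(z, y^*_\T) \in s$, so by \cref{obs:tc:either_ab_or_ba} $(y^*_\T, z) \notin s$; since $y^*_\T$'s only cherry in $\T$ initially is $(z, y^*_\T)$, this means $y^*_\T$ cannot be removed before $z$. From this I would deduce a second key fact: $y_j \neq z$ for every index $j$ of $s$. Indeed, suppose $(x, z) \in s$ at some index $j$ with $x \neq z$; the tree-child property of $s$ forces every $(z, \cdot)$ element to lie at an index greater than $j$, so $z$ is still present in each tree that contains it when $(x, z)$ is applied. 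For $(x, z)$ to be a cherry in some $\T^{(j-1)}$ we would need $x$ to be $z$'s current partner, i.e.\ $x \in N_T(z)$, but then $(z, x) \in s$ and \cref{obs:tc:either_ab_or_ba} gives a contradiction.

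Let $B_1$ denote the first block of $s'$ (the elements indexed by $I$) and $B_2$ the second block (the subsequence of $s$ with $x_j \neq z$). For each $\T \in T$ with $z \in \T$, applying $B_1$ to $\T$ removes $z$ at the first element $(z, y_i)$ of $B_1$ with $y_i = y^*_\T$ (such an element exists since $y^*_\T \in \{y_i : i \in I\}$) and all other elements of $B_1$ are no-ops on $\T$; thus $B_1$ transforms $\T$ into $\T \setminus \{z\}$, while for $\T$ with $z \notin \T$ the block $B_1$ is entirely a no-op. Since $y_j \neq z$ for all $j$, \cref{lem:tc:minus_still_seq} tells us $s$ is a generalized CPS for $\T \setminus \{z\}$ with the same final leaf $x_{r+1}$; on $\T \setminus \{z\}$ every $(z, \cdot)$ element of $s$ is a no-op, so applying $B_2$ to $\T \setminus \{z\}$ produces exactly the same tree as applying all of $s$, namely $\{x_{r+1}\}$. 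This establishes (i).

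For (ii) I would do a case analysis depending on whether the two indices in the tree-child condition $y_{j'} \neq x_{i'}$ both lie in $B_1$, both in $B_2$, or one in each: within each block the relative order from $s$ is preserved so the condition is inherited from $s$, and between $B_1$ and $B_2$ any second-block $y_{j'}$ equals some $y_j \neq z = x_{i'}$ by the second structural fact. For (iii) I would note that no element of $B_1$ can be non-temporal in $s'$, since the $(z, \cdot)$ elements appear consecutively at the start of $s'$; and any $B_2$-element that is non-temporal in $s'$, witnessed by later indices in $B_2$, is also non-temporal in $s$ via the corresponding originals, since the relative order of $B_2$-elements is preserved from $s$. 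The main obstacle is the structural fact $y_j \neq z$ for every $j$, which relies crucially on \cref{obs:tc:either_ab_or_ba} combined with the hypothesis $\{y_i : i \in I\} = N_T(z)$; once this is in hand, the remaining steps amount to careful bookkeeping.
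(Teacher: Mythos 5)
Your proof follows essentially the same route as the paper's: split $s'$ into the block $B_1$ of elements indexed by $I$ and the block $B_2$ of elements whose first coordinate is not $z$, show $T(B_1)=T\setminus\{z\}$, invoke \cref{lem:tc:minus_still_seq} to see that $B_2$ still fully reduces $T\setminus\{z\}$ to $x_{r+1}$, check the tree-child condition by cases on which blocks the two offending indices lie in, and observe that non-temporality pulls back from $s'$ to $s$ because each block preserves the relative order of $s$. You are in fact more explicit than the paper at every step, in particular on the count of non-temporal elements, which the paper dispatches in a single sentence.

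The one place where your argument does not quite close is the ``second key fact'' that $y_j\neq z$ for every index $j$ of $s$. What your argument actually establishes is that an element $(x,z)$ with $x\neq z$ could never be a cherry at the moment it is applied; but a generalized CPS is permitted to contain elements that never reduce anything (the definition of $\T^{(j)}$ simply skips them), so no contradiction follows and such a no-op pair $(x,z)$ could in principle occur in $s$, necessarily placed before every index of $I$. This is not a harmless omission: such an element would land in $B_2$, hence after all of $B_1$, and since every element of $B_1$ has first coordinate $z$ the sequence $s'$ would genuinely violate the tree-child condition; the same element also breaks the hypothesis $y_i\neq z$ of \cref{lem:tc:minus_still_seq}. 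You should know, however, that the paper's own proof has exactly the same blind spot -- it asserts that $y'_j\neq z$ ``because of how we constructed $s'$'', even though the construction only filters on first coordinates, and it applies \cref{lem:tc:minus_still_seq} without verifying its hypothesis. In the contexts where the lemma is invoked (sequences produced by the algorithm, or of minimal length) such redundant elements do not arise, but strictly speaking both your proof and the paper's need either that extra hypothesis or a preliminary step deleting no-op elements from $s$.
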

\begin{proof}
	We can write $s'=((x'_1,y'_1),\ldots,(x_{r+1},-))=s^a|s^b$ where $s^a$ consists of the elements $\{s_i:i\in I\}$ and $s^b$ is $s$ with the elements at indices in $I$ removed. First we prove that $s'$ is a tree-child sequence.
	Suppose that $s'$ is not a tree-child sequence.
	Then there are $i,j$ with $i<j$ such that $x'_i = y'_j$.
	Note that we can not have that $y'_j=z$, because of how we constructed $s'$.
	This implies that both indices $i$ and $j$ are in $s^b$, implying that $s^b$ is not tree-child.
	But because $s^b$ is a subsequence of $s$ this implies that $s$ is not tree-child, which contradicts the conditions from the lemma.
	So $s'$ is tree-child.

	We now prove that $s'$ fully reduces $T$.
	Because $T(s^a)=T\setminus \{z\}$ from \cref{lem:tc:minus_still_seq} it follows that $s^a|s$ is a generalized CPS for $T$.
	Because $z\notin \Leav(T(s^a))$, $T(s^a|s)=T(s^a|s^b)$.
	So $s'$ is a generalized CPS for $T$.

	Finally since for every non-temporal element in $s'$ the corresponding element in $s$ is also non-temporal.
	We conclude that $d(s')\leq d(s)$.

\end{proof}

\subsection{Trivial cherries}
We will call a pair $(a,b)$ a \emph{trivial cherry} if there is a $\T\in T$ with $a\in \Leav(\T)$ and for every tree $\T\in T$ that contains $a$, we have $(a,b)\in \T$.
They are called trivial cherries because they can be picked without limiting the possibilities for the rest of the sequence, as stated in the following lemma.

\begin{lem}
	\label{lem:tc:trivial_cherries}
	If $s=((x_1,y_1),\ldots,(x_{r+1},-))$ is a tree-child sequence for $T$ of minimum length and $(a,b)$ is a trivial cherry in $T$, then there is an $i$ such that $(x_i,y_i)=(a,b)$ or $(x_i,y_i)=(b,a)$.
	Also, there exists a tree-child sequence $s'$ for $T$ with $|s|=|s'|$, $d(s')=d(s)$ and $s'_1=(a,b)$.
\end{lem}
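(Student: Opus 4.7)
My plan is to handle the two claims of the lemma separately. The first claim follows quickly from earlier results: by definition of a trivial cherry, $(a,b)$ is a cherry of some tree in $T$, so $(a,b)\in T$, and \cref{lem:tc:leastonepicked} then guarantees an index $i$ with $(x_i,y_i)\in\{(a,b),(b,a)\}$.

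For the second claim, let $i$ be the smallest index with $x_i\in\{a,b\}$. I would first use the minimum-length assumption on $s$ to argue that every non-final element $s_j$ must actually be a cherry in some tree at the moment it is applied (otherwise it could be deleted from $s$ without breaking either the tree-child property or the reduction of $T$). Combined with the trivial-cherry property, which forces the only neighbour of $a$ in any tree containing $a$ to be $b$, this implies that $y_j\neq a$ for all $j<i$, that $(x_i,y_i)$ must be either $(a,b)$ or $(b,a)$, and---in the case $(x_i,y_i)=(a,b)$---that $\{j:x_j=a\}=\{i\}$, because after $s_i$ removes $a$ from every tree containing $a$ any later $(a,\cdot)$ element would be redundant.

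If $(x_i,y_i)=(a,b)$, I would then apply \cref{lem:tc:remove_immediately} with $z=a$ and $I=\{i\}$, verifying the hypotheses $a\in H(T)$ and $\{y_i\}=\{b\}=N_T(a)$ directly from the trivial-cherry property. The lemma then produces $s'=((a,b),s_1,\ldots,s_{i-1},s_{i+1},\ldots,s_{|s|})$, a tree-child sequence for $T$ with exactly the same multiset of elements as $s$, so $|s'|=|s|$ and $w_T(s')=w_T(s)$, and with $d(s')\leq d(s)$.

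The main obstacle is the case $(x_i,y_i)=(b,a)$: a naive in-place replacement of $(b,a)$ by $(a,b)$ swaps which of the two leaves is deleted and therefore breaks the reduction of the subsequent trees. My plan here is to pass through the associated network. I would use \cref{lem:SemiTemporalTequenceToNetwork} to obtain a tree-child network $\N$ with a semi-temporal labelling $t$ satisfying $r(\N)\leq w_T(s)$ and $d(\N,t)\leq d(s)$. The trivial-cherry property forces $a$ and $b$ to share their unique parent in $\N$, so $\{a,b\}$ is a cherry of $\N$ itself. Running the construction underlying \cref{lem:SemiTemporalNetworkToSequence} on $\N$ while electing to process this cherry first yields a tree-child sequence $s'$ with $s'_1=(a,b)$, $w_T(s')\leq r(\N)\leq w_T(s)$ and $d(s')\leq d(\N,t)\leq d(s)$; the minimum-length hypothesis on $s$ then forces $|s'|=|s|$, delivering the lemma.
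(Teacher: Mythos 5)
Your first claim and your handling of the case $(x_i,y_i)=(a,b)$ are sound, and the latter is in fact more careful than the paper: the paper's entire proof is the single sentence ``This follows from \cref{lem:tc:remove_immediately}'', which implicitly assumes that $(a,b)$ occurs in $s$ with $a$ as the first coordinate, so that \cref{lem:tc:remove_immediately} can be applied with $z=a$ and $N_T(a)=\{b\}$. You correctly identify that the genuinely hard case is when only $(b,a)$ occurs (so that $b$ is deleted rather than $a$), a case the paper's one-line proof silently ignores. One small repair first: your claim that the first index $i$ with $x_i\in\{a,b\}$ must satisfy $(x_i,y_i)\in\{(a,b),(b,a)\}$ is false as stated, since $b$ may have other neighbours in trees not containing $a$, so $(x_i,y_i)$ could be $(b,c)$ with $c\neq a$; you should instead take $i$ to be the first index with $(x_i,y_i)\in\{(a,b),(b,a)\}$, and note that earlier elements $(b,c)$ do not disturb the cherry $\{a,b\}$ in the trees containing $a$.

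The genuine gap is in your network detour for the $(b,a)$ case. First, the assertion that the trivial-cherry property forces $a$ and $b$ to share a parent in the network $\N$ produced by \cref{lem:SemiTemporalTequenceToNetwork} is not justified: $\N$ may display trees beyond those in $T$, and $\{a,b\}$ can perfectly well be a \emph{reticulated} cherry rather than a cherry. Indeed, if $(b,a)$ occurs early and $a$ is subsequently deleted via several elements $(a,c_1),(a,c_2),\dots$ (one for each tree containing $a$, whose post-deletion neighbours of $a$ may differ), the construction gives $a$ a reticulation parent, and $b$ may likewise acquire one from its occurrences with other partners; then $\{a,b\}$ is neither a cherry nor a reticulated cherry of $\N$. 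Second, even when $\{a,b\}$ is a (reticulated) cherry, ``electing to process this cherry first'' in the construction behind \cref{lem:SemiTemporalNetworkToSequence} is not a free choice: that construction prescribes a priority order (cherries, then non-temporal reticulated cherries, then temporal ones, with further restrictions involving the leaves $\ell_1,\dots,\ell_r$), and the bound $d(s')\leq d(\N,t)$ depends on respecting it. You would need to argue that $\{a,b\}$ can be taken first without violating these rules. Finally, note that both your argument and the paper's only yield $d(s')\leq d(s)$ (this is all \cref{lem:tc:remove_immediately} provides), not the equality asserted in the statement; that discrepancy is arguably a defect of the statement rather than of your proof, but it is worth flagging.
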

\begin{proof}
	This follows from \cref{lem:tc:remove_immediately}.
\end{proof}

\begin{algorithm}
	\caption{}
	\label{alg:better_temporal2}
	\begin{algorithmic}[1]
		\Procedure{SemiTemporalCherryPicking}{$T,k, k^\star, p, C$}
		\If {$k-P(C) < 0$} \label{line2:first_if}
		\State \Return $\emptyset$ \label{line2:first_return}
		\EndIf
		\State $T',k', C',f\gets $\Call{Pick}{$T,k,C$} \label{line2:callpick}
		\If{$|\mathcal{L}(T')| =1  $}
		\State\Return $\{f\}$ \label{line2:emptysequence}
		\ElsIf{$k'-P(C')\leq 0 \lor \pi_1(C')\nsubseteq \mathcal{L}(T') $}\label{line2:third_if}
		\State \Return $\emptyset$  \label{line2:second_return}
		\EndIf
		\\
		\State $R\gets \emptyset$
		\If{$\exists (x, y) \in T: w_T(x)>0 \land x\in \pi_1(C')$ } \label{line2:if-statement}
		\State $R\gets R \cup $ \Call{SemiTemporalCherryPicking}{$T'$, $k'$, $k^\star$, $p$, $C'\cup \{ (x,y ) \}$}
		\State $R\gets R \cup $ \Call{SemiTemporalCherryPicking}{$T'$, $k'$, $k^\star$, $p$, $C'\cup \{ (y,x ) \}$}

		\ElsIf{$\exists (x,a) \in G(T',C'): w_{T'}(x)>0 \land x\notin \pi_2(C')$ }  \label{line2:else-if-statement}
		\State \text{Choose} $b\neq a$ such that $(x,b)\in G(T',C')$
		\label {line2:two_elements}
		\State $R\gets R \cup $ \Call{SemiTemporalCherryPicking}{$T'$, $k'$, $k^\star$, $p$, $C'\cup \{ (a,x ) \}$}
		\State $R\gets R \cup $ \Call{SemiTemporalCherryPicking}{$T'$, $k'$, $k^\star$, $p$, $C'\cup \{ (b,x ) \}$}
		\State $R\gets R \cup $ \Call{SemiTemporalCherryPicking}{$T'$, $k'$, $k^\star$, $p$, $C'\cup \{ (x,a ), (x,b) \}$}
		\ElsIf{$p> 0$}
		\label{alg:line:alternative_algorithm}
		\State $P\gets \{(x,y) \in T': y\in \T'\ \forall \T'\in T' \land x\notin \pi_2(C) \}$
		\If {$|P|>8k^\star$}
		\State\Return $\emptyset$
		\EndIf
		\For{$(x,y)\in P$}
		\State $C''\gets C\setminus \{(x,y)\}$
		\If{$|\{(x,z)\in C \}|=1$ }
		\State $C''\gets C''\setminus \{(x,z)\in C \}$
		\EndIf

		\State $R\gets R\ \cup\ \{(x,y)|r:r\in $ \Call{SemiTemporalCherryPicking}{$T'((x,y))$, $k'-1$, $k^\star$, $p-1$, $C''$}$\}$.\label{line:non_temporal_return}
		\EndFor
		\EndIf
		\State\Return $\{f|r: r \in R\}$
		\EndProcedure
	\end{algorithmic}
\end{algorithm}

\begin{algorithm}
	\begin{algorithmic}
		\Procedure{Pick}{$T', k', C'$}
		\State $(T^{(1)},k_1,C_1)\gets (T',k',C')$
		\State $p^{(1)}\gets ()$
		\State $i\gets 1$
		\While{$\exists x_i\in H(T^{(i)}): \left(w_{T^{(i)}}(x_i)=0\lor \{(x_i,n) : n \in N_{T^{(i)}}(x_i) \} \subseteq C_i \right) \land (\forall y\in N_{T^{(i)}}\forall \T\in T:y\in \T)$}
		\State $(n_1,\ldots, n_t)\gets N_T(x_i)$
		\State $p^{(i+1)}\gets p^{(i)}|((x_i, n_1),\ldots, (x_i,n_t))$
		\State $k_{i+1}\gets k_{i}-w_{T^{(i)}}(x_i)$
		\State $T^{(i+1)}\gets  T^{(i)}$
		\State $C_{i+1}\gets \{c\in C_{i}: x \notin \{c_1, c_2\}\}$
		\State $i\gets i+1$
		\EndWhile
		\State \Return $T^{(i)},k_{i},C_{i},p_{i}$
		\EndProcedure
	\end{algorithmic}
	\label{alg:non_temporal_pick}
	\caption{}
\end{algorithm}

\begin{lem}[Correctness of \texttt{Pick}] Suppose \texttt{Pick}$(T',k', C')$ in \cref{alg:non_temporal_pick} returns $(T,k,C,p)$.
	Then a tree-child sequence $s$ of weight at most $k$ for $T$ that satisfies $C$ exists if and only if a tree-child sequence $s'$ of weight at most $k'$ for $T'$ that satisfies $C'$ exists.
	In this case $p|s$ is a tree-child sequence for $T'$ of weight at most $k'$ and satisfying $C'$.
	\label{lem:tc:correctness_pick}
\end{lem}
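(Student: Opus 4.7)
The plan is to mirror the structure of the proof of \cref{lem:correctness_pick} by induction on the iteration index of the while loop in \texttt{Pick}. I will establish the invariant that at each iteration $i$, a tree-child sequence of weight at most $k_i$ for $T^{(i)}$ satisfying $C_i$ exists if and only if a tree-child sequence of weight at most $k'$ for $T'$ satisfying $C'$ exists, and moreover if $s$ is such a sequence for $T^{(i)}$ then $p^{(i)}|s$ is one for $T'$. The base case $i=1$ holds trivially since $(T^{(1)},k_1,C_1,p^{(1)}) = (T',k',C',())$.

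For the inductive step, consider iteration $i$ in which the loop condition picks a leaf $x_i \in H(T^{(i)})$. By the loop condition, all of $x_i$'s neighbors lie in every tree of $T^{(i)}$, and either (a) $w_{T^{(i)}}(x_i) = 0$, or (b) all pairs $(x_i,n)$ with $n \in N_{T^{(i)}}(x_i)$ already lie in $C_i$. Writing $N_{T^{(i)}}(x_i) = (n_1,\ldots,n_t)$, the algorithm appends $((x_i,n_1),\ldots,(x_i,n_t))$ to $p^{(i)}$, subtracts $w_{T^{(i)}}(x_i)$ from $k_i$, and strips $x_i$-pairs from $C_i$ (and conceptually removes $x_i$ from the trees for the next iteration, since it is now fully reduced).

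For the ``forward'' direction of the equivalence, suppose $s$ is a valid tree-child sequence for $T^{(i+1)}$ of the required weight satisfying $C_{i+1}$. Then prepending $(x_i,n_1),\ldots,(x_i,n_t)$ to $s$ produces a sequence of weight exactly $w_{T^{(i)}}(x_i)$ more, which matches $k_i - k_{i+1}$; the tree-child property follows from \cref{lem:tc:trivial_cherries} together with \cref{obs:tc:either_ab_or_ba}, and the discarded constraints are satisfied by the prepended prefix. For the ``backward'' direction, suppose $s$ is a tree-child sequence for $T^{(i)}$ of weight at most $k_i$ satisfying $C_i$. In case (b), \cref{lem:tc:remove_immediately} applies directly: every pair of the form $(x_i,n)$ with $n \in N_{T^{(i)}}(x_i)$ appears somewhere in $s$ (by \cref{lem:tc:leastonepicked} and the hypothesis that $(n,x_i) \notin C_i$), and we may move exactly these pairs to the front to obtain an equivalent sequence whose suffix is a tree-child sequence for $T^{(i+1)}$ satisfying $C_{i+1}$. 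In case (a), the hypothesis $w_{T^{(i)}}(x_i)=0$ combined with $x_i \in H(T^{(i)})$ and the fact that $x_i$'s unique neighbor $y$ appears in every tree of $T^{(i)}$ means that $x_i$ and $y$ occupy symmetric positions in $T^{(i)}$; this lets us argue as in \cref{lem:remove_trivial} that the roles of $x_i$ and $y$ can be exchanged without affecting weight, the tree-child property, or constraint satisfaction.

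The main obstacle I expect is case (a) in the backward direction. The original \cref{lem:remove_trivial} is phrased in the cherry-picking-sequence setting, so I must carefully re-derive the symmetry argument for tree-child sequences, taking into account that pairs $(a,x_i)$ and $(a,y)$ (for other leaves $a$) are not interchangeable inside a tree-child sequence without risk of violating the $y_j \neq x_i$ condition. The key observation is that because $x_i$ and $y$ appear together as a cherry in every tree containing either of them, swapping their roles globally in the sequence preserves tree-child-ness, and the symmetry ensures the constraint set $C_i$ (which contains no pair involving $x_i$ in this case) remains satisfied. The weight accounting and constraint-stripping step then align exactly with the updates performed by the loop body, completing the induction.
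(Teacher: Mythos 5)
Your proposal is correct and follows essentially the same route as the paper, which simply reruns the induction from \cref{lem:correctness_pick} with \cref{lem:tc:remove_immediately} substituted for \cref{lem:remove_immediately}. You go somewhat further than the paper by explicitly flagging and handling the zero-weight case, where the swap argument of \cref{lem:remove_trivial} must be re-justified for tree-child sequences; that is a real subtlety the paper's one-line proof glosses over, and your treatment of it is sound.
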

The proof for this lemma is the same as for \cref{lem:correctness_pick}, but uses \cref{lem:tc:remove_immediately} instead of \cref{lem:remove_immediately}.
The following lemma was proven in \cite[Lemma 11]{van_iersel_practical_2019}.
\begin{lem}
	Let $s^a|s^b$ be a tree-child sequence for $T$ with weight $k$.
	If $T(s^a)$ contains no trivial cherries, then the number of unique cherries is at most $4k$.
	\label{lem:tc:unique_cherries}
\end{lem}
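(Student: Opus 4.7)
My plan is to charge each unique cherry of $T(s^a)$ to a fractional weight contribution of the sequence $s^a|s^b$, showing that each unit of weight can be blamed by at most four cherries.

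As a preliminary reduction, I would first observe that $s^b$, viewed as a tree-child sequence for $T' := T(s^a)$, has weight at most $k$. Indeed $w_{T'}(s^b) = |s^b| - |\Leav(T')|$ while $w_T(s^a|s^b) = |s^a| + |s^b| - |\Leav(T)| = k$, and $s^a$ must contain at least $|\Leav(T)| - |\Leav(T')|$ elements (since each element removes at most one leaf). So it suffices to bound the unique cherries of $T'$ in terms of $w_{T'}(s^b)$, which I denote $k' \leq k$.

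For each cherry $\{a,b\}$ in the set $U$ of unique cherries of $T'$, Lemma~\ref{lem:tc:leastonepicked} combined with Observation~\ref{obs:tc:either_ab_or_ba} ensures that $s^b$ contains exactly one of $(a,b),(b,a)$; call its first coordinate the \emph{resolver} of $\{a,b\}$. Letting $R_\ell \subseteq U$ be the cherries with resolver $\ell$ and $b(\ell)$ the number of elements of $s^b$ starting with $\ell$, we have $b(\ell) \geq |R_\ell|$ and $\sum_\ell (b(\ell)-1) = k'$. The crucial step is to exploit non-triviality: for each $\{a,b\} \in U$ with resolver $a$, there exists a tree $\mathcal{T}_a \in T'$ in which $a$ is present but $\{a,b\}$ is not a cherry, and removing $a$ from $\mathcal{T}_a$ forces $s^b$ to contain either some $(a,z)$ with $z \neq b$ (contributing an additional pick to $b(a)$) or some $(z',a)$ with $z' \neq b$ (tying the charge to the pick at $z'$). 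Assigning each cherry a fractional charge of $1/4$ to the appropriate ``extra'' picks and verifying that no pick is charged by more than four cherries would yield the desired bound $|U| \leq 4k' \leq 4k$.

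The main obstacle is justifying the charging constant $1/4$. A single extra pick of a leaf $\ell$ may plausibly serve as the non-triviality witness for multiple cherries simultaneously, so the bookkeeping must carefully ensure the fourfold cap is respected. I expect this will require a case split according to whether the non-triviality witness tree $\mathcal{T}_a$ contains a distinct cherry partner for $a$ or only an internal sibling, together with an exchange argument exploiting the specific order of picks in $s^b$ to show that overlaps beyond four are structurally impossible for tree-child sequences.
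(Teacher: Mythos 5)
The paper does not actually prove this lemma itself --- it is imported from \cite[Lemma 11]{van_iersel_practical_2019} --- so there is no internal proof to compare against, and your argument must stand on its own. It does not: there is a genuine gap. Your preliminary reduction (passing to $T' = T(s^a)$ and checking $w_{T'}(s^b) \le k$) and the observation that each unique cherry of $T'$ is resolved by exactly one element of $s^b$ are fine, but by themselves they only give the trivial bound $|U| \le |s^b| = |\Leav(T')| + k'$, which depends on $n$ rather than on $k$ alone. Everything that would turn this into $4k$ is missing.

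Concretely, two steps fail. First, the dichotomy you extract from non-triviality is unjustified: if $\{a,b\}$ is resolved by $(a,b)$ and $\mathcal{T}_a$ is a tree containing $a$ in which $(a,b)$ is not a cherry, it is entirely possible that earlier elements of $s^b$ not involving $a$ at all (e.g.\ a pair $(c,b)$ removing a leaf $c$ that separates $a$ from $b$ below their least common ancestor in $\mathcal{T}_a$) reduce $\mathcal{T}_a$ so that $(a,b)$ \emph{becomes} a cherry there, after which the single element $(a,b)$ removes $a$ from $\mathcal{T}_a$ as well. In that case $s^b$ need contain neither a second pair $(a,z)$ with $z \ne b$ nor any pair $(z',a)$, so your local argument generates no ``extra pick'' for this cherry. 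Second, and more fundamentally, the entire content of the lemma is the claim that each unit of weight is charged by at most four cherries, and your proposal explicitly defers exactly this point (``I expect this will require a case split\ldots''). Without a concrete assignment of cherries to weight-contributing picks together with a verified multiplicity cap, the constant $4$ --- and hence the lemma --- is not established.
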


\begin{lem}
	If $((x_1,y_1),\ldots, (x_2,y_2),(x_{r+1},-), ,(x_{t},-))$ is a full tree child-sequence of minimal length for $T$ satisfying $C$ and  $H(T)\setminus \pi_2(C)=\emptyset$, then $(x_1,y_1)$ is a non-temporal element.
	\label{lem:first_non_temporal}
\end{lem}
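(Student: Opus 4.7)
I would argue by contradiction: suppose $(x_1, y_1)$ is temporal. Unwinding the definition, this means that $P := \{i \in [t] : x_i = x_1\}$ is a prefix $\{1, \ldots, \ell\}$ of $[t]$ for some $\ell \geq 1$ (indeed, if $m := \max P > 1$, the temporal condition applied with $k = m$ forces $x_j = x_1$ for every $1 < j < m$). The strategy is to show $x_1 \in H(T)$, then use the hypothesis $H(T) \setminus \pi_2(C) = \emptyset$ to obtain a constraint $(z, x_1) \in C$, and finally to contradict the tree-child property of $s$.

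First I would rule out $x_{r+1} = x_1$. If $x_{r+1} = x_1$, then $r+1 \in P$, so $P = [t]$ and every $x_i$ equals $x_1$. Fullness of $s$ then gives $X = \{x_1\}$, contradicting the existence of $y_1 \in X$ with $y_1 \neq x_1$. Hence $x_1 \neq x_{r+1}$, so $s$ must remove $x_1$ from every tree of $T$ in which it occurs.

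Next I would show $x_1 \in H(T)$. Fix any $\T \in T$ containing $x_1$. Since $x_1$ is eventually removed from $\T$ and $x_i = x_1$ only for $i \leq \ell$, there is a first step $j \leq \ell$ that deletes $x_1$ from $\T$. Every earlier step $i < j$ satisfies $x_i = x_1$, and such a step acts on $\T$ only by deleting $x_1$ from it (and only when $(x_1, y_i)$ is currently a cherry). Since $x_1$ is still in $\T$ just before step $j$, none of these earlier steps modified $\T$, so $\T^{(j-1)} = \T$. Consequently $(x_1, y_j)$ is a cherry in $\T$ itself, and $x_1$ lies in a cherry in every tree of $T$ containing it; that is, $x_1 \in H(T)$.

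By the hypothesis $x_1 \in \pi_2(C)$, so some $(z, x_1) \in C$. Since $s$ satisfies $C$, there is an $i$ with $(x_i, y_i) = (z, x_1)$, giving $y_i = x_1$. But the tree-child condition $y_i \neq x_k$ for $k < i$, applied with $k = 1$, rules out $y_i = x_1$ whenever $i > 1$; and $i = 1$ is excluded because then $z = x_1$, contradicting that a constraint is a pair of distinct leaves. This contradiction completes the argument. The only subtle step I anticipate is the inductive observation $\T^{(j-1)} = \T$, which uses that every step prior to $j$ has first coordinate $x_1$ and therefore either removes $x_1$ from a tree or does nothing — everything else is straightforward bookkeeping.
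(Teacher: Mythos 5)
Your proposal is correct and follows essentially the same route as the paper's proof: both arguments use temporality of $(x_1,y_1)$ to conclude that the indices $i$ with $x_i=x_1$ form a prefix of the sequence, deduce that $x_1$ must therefore lie in a cherry of every tree containing it (so $x_1\in H(T)$), and combine this with the fact that the tree-child and constraint-satisfaction conditions force $x_1\notin\pi_2(C)$ to contradict $H(T)\setminus\pi_2(C)=\emptyset$. Your write-up merely supplies more detail than the paper does on the intermediate claim that the trees are unchanged before the first deletion of $x_1$ and on why $x_1\notin\pi_2(C)$.
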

\begin{proof}
	First observe that $x_1\notin \pi_2(C)$ because the sequence satisfies $C$.
	Suppose $(x_1,y_1)$ is a temporal element.
	This implies that there is an $i$ such that for all $j<i$ we have $x_j=x_1$ and $x_k\neq x_1$ for all $k\geq i$.
	This implies that for every $\T \in T$ there is a $j<i$ such that $x_1$ is not in $\T((x_j,y_j))$.
	Consequently $(x_j,y_j)$ is a cherry in $\T$.
	Because this holds for every tree $\T \in T$ we must have $H(T)\setminus \pi_2(C)$, contradicting the assumption that $H(T)\setminus \pi_2(C)=\emptyset$.
\end{proof}
\subsection{The algorithm}

\markj{
We now present our algorithm for \textsc{Semi-temporal hybridization}. As with \textsc{Tree-child hybridization}, we split the algorithm into two parts: \texttt{SemiTemporalCherryPicking}(\cref{alg:better_temporal2}) is the main recursive procedure, and \texttt{Pick}(\cref{alg:non_temporal_pick}) is the auxiliary procedure.}

\markj{The key idea is that we try to follow the procedure for temporal sequences as much as possible. \cref{alg:better_temporal2} only differs from \cref{alg:better_temporal} in the case where neither of the recursion conditions of \cref{alg:better_temporal} apply, but there are still cherries to be processed.
In this case, we can show that there are no trivial cherries, and hence \cref{lem:tc:unique_cherries} applies. 
Then we may assume there are at most $4k^*$ unique cherries, where $k^*$ is the original value of $k$ that we started with. In this case, we branch on adding $(x,y)$ or $(y,x)$ to the sequence, for any $x$ and $y$ that form a cherry. Any such pair will necessarily be a non-temporal element, and so we decrease $p$ by $1$ in this case.
A full proof of the following lemma is given in the appendix.}

\FloatBarrier

\newcommand{\lemTCProcedureReturnsSequenceText}{
	Let $s^\star$ be a tree-child sequence prefix, $T^\star$ a set of trees with the same leaves and define $T:=T^\star(s)$.
	Suppose $k,p\in \mathbf{N}$ and $C\in \mathcal{L}(T)^2$.
	When a generalized cherry picking sequence $s$ exists that satisfies $C$ and such that $s^ \star|s$ is a tree-child sequence for $T^\star$ with $w_{T^\star}(s^\star|s)\leq k^\star$ and $d(s)\leq p$ exists, \texttt{SemiTemporalCherryPicking}$(T, k, k^\star, p, C)$ from  \Cref{alg:better_temporal2} returns a non-empty set. }
\newtheorem*{lemTCProcedureReturnsSequence}{Lemma~\ref{lem:tc:procedure_returns_sequence}}
\begin{lem}
	\label{lem:tc:procedure_returns_sequence}
   \lemTCProcedureReturnsSequenceText
\end{lem}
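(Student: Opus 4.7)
The plan is to extend the induction used in \cref{lem:procedure_returns_sequence} by adding $p$ as an additional parameter. Define the claim $W(p, k, u)$ to be: if there exists a generalized cherry picking sequence $s$ satisfying $C$ such that $s^\star|s$ is a tree-child sequence for $T^\star$ with $w_{T^\star}(s^\star|s)\leq k^\star$, $d(s)\leq p$, $w_T(s) \le k$, and $n^2 - |C| \leq u$, then \texttt{SemiTemporalCherryPicking}$(T, k, k^\star, p, C)$ returns a non-empty set. I would prove $W(p, k, u)$ by induction on the lexicographic triple $(p, k, u)$.

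The base cases ($k=0$, which forces $|\mathcal{L}(T')| = 1$ after \texttt{Pick} and a return on line~\ref{line2:emptysequence}; and $u < 0$, which is vacuous since $C \subseteq \mathcal{L}(T)^2$) are straightforward. The first two branching cases on lines~\ref{line2:if-statement} and~\ref{line2:else-if-statement} mirror the corresponding cases in \cref{lem:procedure_returns_sequence}: \cref{lem:tc:branch_in_two} shows that $s$ satisfies one of the two augmented constraint sets, and \cref{lem:tc:branch_in_three} shows it satisfies one of three. In both cases $|C|$ strictly increases while $p$ and $k$ are unchanged, so the induction hypothesis $W(p, k, u')$ with $u' < u$ applies to at least one recursive call. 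Correctness of the tree-child \texttt{Pick} (\cref{lem:tc:correctness_pick}) and the lower bound \cref{lem:bound_w_pc_nontemp} on $k - P(C)$ ensure the algorithm does not abort prematurely on lines~\ref{line2:first_return} or~\ref{line2:second_return}.

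The hard part is the non-temporal branching case on line~\ref{alg:line:alternative_algorithm}. First, I would argue that when both of the preceding branching conditions fail, $H(T') \setminus \pi_2(C') = \emptyset$: for any $x \in H(T')$, \texttt{Pick} guarantees $w_{T'}(x) > 0$ and $\{(x,n) : n \in N_{T'}(x)\} \not\subseteq C'$, so some $(x, n) \in G(T', C')$ exists; the failure of the condition on line~\ref{line2:else-if-statement} then forces $x \in \pi_2(C')$. Applying \cref{lem:first_non_temporal} to a minimal-length tree-child continuation $s'$ satisfying $C'$, its first element $(x_1, y_1)$ is non-temporal. After a rearrangement of $s'$ analogous to \cref{lem:tc:remove_immediately}, we may furthermore assume $y_1 \in \T'$ for every $\T' \in T'$, so that $(x_1, y_1) \in P$.

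To finish, I would bound $|P|$ via \cref{lem:tc:unique_cherries}: since the trivial cherries that would trigger the earlier branching conditions have been absorbed, the remaining trees have at most $4k^\star$ distinct unordered cherries, giving $|P| \leq 8k^\star$, so the early-exit check does not fire. The recursive call on $(x_1, y_1) \in P$ uses parameters $(T'((x_1, y_1)), k'-1, k^\star, p-1, C'')$, with $C''$ constructed so that $s'$ with its first element stripped satisfies it (discarding any constraint fulfilled by $(x_1, y_1)$). Since $p$ strictly decreases, the induction hypothesis $W(p-1, k-1, u')$ yields a sequence which, when prepended with $(x_1, y_1)$, gives the desired output. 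The main obstacle is the bookkeeping for $C''$: one must check that the precise rules in the algorithm for removing constraints from $C$ do indeed produce a constraint set that is satisfied by the rearranged tail of $s'$, so that the induction hypothesis genuinely applies to the recursive subcall.
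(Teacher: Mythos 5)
Your proposal follows essentially the same route as the paper's proof: the same induction (the paper inducts on $(k,u)$ with $p$ carried implicitly, which works since $k$ also drops in the non-temporal branch), the same three-case analysis via \cref{lem:tc:branch_in_two}, \cref{lem:tc:branch_in_three}, \cref{lem:first_non_temporal} and \cref{lem:tc:unique_cherries}, and the same argument that the first element of the continuation lies in $P$. The one point you flag as an obstacle --- verifying that the tail of $s'$ satisfies $C''$ --- is in fact also left unaddressed in the paper's proof, so your account is, if anything, slightly more candid about that step.
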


\begin{lem}
	\label{lem:tc:returned_sequences_satisfy_demands}
	Let $s^\star$ be a tree-child sequence prefix, $T^\star$ a set of trees with the same leaves and define $T:=T^\star(s)$.
	Suppose $k,p\in \mathbf{N}$ and $C\in \mathcal{L}(T)^2$.
	If \leo{$S$ is returned by} a call to \texttt{SemiTemporalCherryPicking}$(T, k, k^\star, p, C)$, then for every $s\in S$, the sequence $s'=s^\star | s$ is a tree-child sequence for $T^\star$ with $d(s)\leq p$ and $w(s)\leq k$. 
\end{lem}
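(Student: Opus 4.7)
The plan is to mirror the structure of Lemma~\ref{lem:returned_sequences_satisfy_demands}, proceeding by induction on the depth of the recursion in \texttt{SemiTemporalCherryPicking}, and reducing all the easy cases to Lemma~\ref{lem:tc:correctness_pick} and the induction hypothesis. Fix a call \texttt{SemiTemporalCherryPicking}$(T,k,k^\star,p,C)$ and assume the statement holds for every recursive subcall. Nothing needs to be checked if the procedure returns $\emptyset$ on \cref{line2:first_return} or \cref{line2:second_return}. If the procedure returns $\{f\}$ on \cref{line2:emptysequence}, then by Lemma~\ref{lem:tc:correctness_pick} applied to the preceding call to \texttt{Pick}, the sequence $s^\star|f$ is a tree-child sequence for $T^\star$ with $w(s^\star|f) \le w(s^\star)+k$; moreover, because \texttt{Pick} appends only pairs of the form $(x,n)$ with $x\in H(T^{(i)})$ and every neighbour of $x$ appearing in every tree, no element added inside \texttt{Pick} can be non-temporal in the final sequence, so $d(f)=0\le p$.

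For the inductive step every returned sequence has the form $f|r$ where $r$ comes from a recursive subcall. If the subcall lies inside the body of \cref{line2:if-statement} or \cref{line2:else-if-statement}, the arguments $(T',k',k^\star,p,C'')$ satisfy $C''\supseteq C'$ and all other parameters are inherited, so applying the induction hypothesis with prefix $s^\star|f$ yields that $(s^\star|f)|r = s^\star|(f|r)$ is a tree-child sequence for $T^\star$, with $w(f|r)\le w(f)+k'\le k$ and $d(f|r)=d(r)\le p$; concatenation with $f$ contributes no non-temporal elements, by the same observation on \texttt{Pick} used in the base case.

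The main obstacle is the third, non-temporal branch beginning at \cref{alg:line:alternative_algorithm}. Here, for each chosen $(x,y)\in P$, the algorithm appends $(x,y)$ to a sequence returned by a recursive call on $(T'((x,y)),k'-1,k^\star,p-1,C'')$. Applying the induction hypothesis with prefix $s^\star|f|(x,y)$ gives that $s^\star|f|(x,y)|r$ is a tree-child sequence for $T^\star$ provided we can first check that $s^\star|f|(x,y)$ is itself a tree-child sequence prefix: this follows because $(x,y)\in P$ forces $y$ to be present in every tree of $T'$ (so $(x,y)$ is a cherry or a newly created cherry after the preceding pickings) and $x\notin\pi_2(C)$ (so appending $x$ on the left does not violate the tree-child condition $y_i\neq x_j$). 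The weight bookkeeping is immediate: $w(f|(x,y)|r) = w(f)+1+w(r) \le w(f)+1+(k'-1) = k$. For $d(f|(x,y)|r)\le p$, the key observation is that the third branch is reached only when both previous branches fail, which in particular means $H(T')\setminus\pi_2(C')=\emptyset$; Lemma~\ref{lem:first_non_temporal} then guarantees that the next element appended, namely $(x,y)$, is non-temporal in any extension to a tree-child sequence for $T'$. Combined with the inductive bound $d(r)\le p-1$, this gives $d(f|(x,y)|r)\le 1+(p-1)=p$. The adjustment from $C$ to $C''$ removes precisely those constraints that are either fulfilled by placing $(x,y)$ now (namely $(x,y)$ itself) or rendered vacuous afterwards (the remaining $(x,z)$ constraints when $(x,y)$ was the unique witness that $x\in\pi_1(C)$), ensuring that the induction hypothesis is legitimately applicable to the subcall. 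This completes the inductive step and hence the proof.
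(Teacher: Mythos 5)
Your proof is correct and follows essentially the same route the paper intends: the paper only gives a one-line sketch for this lemma (induction over the recursive calls, handling the \texttt{Pick} prefix via \cref{lem:tc:correctness_pick} and the branches via the induction hypothesis, mirroring the soundness proof of the temporal case), and your write-up fleshes out exactly that argument, including the only genuinely new case, the non-temporal branch at \cref{alg:line:alternative_algorithm}. Two small remarks: the appeal to \cref{lem:first_non_temporal} is unnecessary for this direction (for the bound $d(f|(x,y)|r)\leq p$ it suffices that the appended pair contributes at most one non-temporal element, which is trivial), and the discussion of the constraint-set adjustment $C''$ is not needed here since, unlike the temporal analogue, this lemma makes no claim that the returned sequences satisfy $C$.
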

The proof of this lemma is similar to the proof of \cref{lem:procedure_returns_sequence} using \cref{lem:tc:correctness_pick}.

\begin{lem}
	\Cref{alg:better_temporal2} has a running time of $O(5^k\cdot (8k)^p\cdot k \cdot n\cdot m)$. \label{lem:tc:running_time_better_temporal} 
\end{lem}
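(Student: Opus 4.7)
The plan is to perform a double induction on $(p,\,k-P(C))$, building on the analysis of \cref{lem:running_time_better_temporal}. Concretely, I aim to show that the running time of \texttt{SemiTemporalCherryPicking}$(T,k,k^\star,p,C)$ is bounded by
\[
c\cdot 5^{k-P(C)+1}(k-P(C)+1)(8k^\star)^{p}\,n\,m
\]
for some constant $c$. Evaluating this at the top-level call, where $k=k^\star$ and $C=\emptyset$ (hence $P(C)=0$), specialises to the target bound $O(5^k\cdot(8k)^p\cdot k\cdot n\cdot m)$.

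The outer base case $p=0$ is immediate: when $p=0$, the guard of the non-temporal \texttt{ElsIf}-clause at \cref{alg:line:alternative_algorithm} is never satisfied, so \cref{alg:better_temporal2} behaves exactly like \cref{alg:better_temporal}, and \cref{lem:running_time_better_temporal} supplies the bound. The inner base case $k-P(C)\leq 0$ is handled by the early-return guards near the top of the procedure, which make no recursive calls. For the inductive step I split on which branch is taken. If the \texttt{if}-clause on \cref{line2:if-statement} or the first \texttt{ElsIf}-clause on \cref{line2:else-if-statement} triggers, then $p$ is unchanged in every recursive subcall, and the arithmetic from the proof of \cref{lem:running_time_better_temporal} transfers verbatim, with the factor $(8k^\star)^p$ riding along unchanged through the 2 or 3 subcalls.

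In the novel case where the non-temporal \texttt{ElsIf}-clause fires, I first invoke \cref{lem:tc:unique_cherries} to obtain $|P|\leq 8k^\star$. This requires that the current $T'$ contain no trivial cherries, which follows because the preceding call to \texttt{Pick} (\cref{alg:non_temporal_pick}) has already peeled every such cherry off; since each unordered cherry contributes at most two ordered pairs to $P$, the $4k^\star$ bound from the lemma gives $|P|\leq 8k^\star$. Each of the at most $8k^\star$ recursive subcalls is invoked with parameters $k'-1$, $p-1$, and a constraint set $C''$ obtained from $C$ by deleting up to two entries involving $x$, so a direct case analysis yields $P(C'')\geq P(C)-1$ and thus $(k-1)-P(C'')\leq k-P(C)$. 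The outer induction hypothesis, applied at $p-1$, bounds each subcall by $c\cdot 5^{k-P(C)+1}(k-P(C)+1)(8k^\star)^{p-1}\,n\,m$; summing over the $8k^\star$ subcalls gives exactly the claimed overall bound, with the additive $O(nm)$ overhead absorbed into $c$.

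The main obstacle is this final case: one has to verify that $k-P(C)$ cannot strictly increase across the transition $C\mapsto C''$ (so that the outer induction hypothesis is actually applicable), and one has to justify the hypotheses of \cref{lem:tc:unique_cherries} at the moment the non-temporal branch fires. Both points hinge on the modified \texttt{Pick} of \cref{alg:non_temporal_pick} being more aggressive than its counterpart in \cref{alg:better_temporal}: it guarantees that every trivial cherry has already been eliminated before any non-temporal branching happens, which is precisely the condition needed for \cref{lem:tc:unique_cherries} and also why every pair $(x,y)\in P$ is genuinely a non-temporal element (via \cref{lem:first_non_temporal}), legitimising the decrement $p\to p-1$.
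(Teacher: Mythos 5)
Your proposal is correct and follows essentially the same route as the paper, whose own proof is only the one-line remark that the bound follows by combining the analysis of \cref{lem:running_time_better_temporal} with that of \cite[Lemma 11]{van_iersel_practical_2019}; your double induction on $(p,\,k-P(C))$, with the $(8k^\star)^p$ factor riding along unchanged in the temporal branches and the outer parameter dropping in the non-temporal branch, is the natural way to make that combination explicit. The only loose end is that the additive $O(nm)$ overhead cannot literally be ``absorbed into $c$'' when the $8k^\star$ subcalls already saturate the bound exactly, so a small amount of slack must be built into the inductive hypothesis (as the paper's proof of \cref{lem:running_time_better_temporal} does via its arithmetic), but this is a routine fix.
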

\begin{proof}

	This can be proven by combining the proofs from \cref{lem:running_time_better_temporal} and \cite[Lemma 11]{van_iersel_practical_2019}.

\end{proof}

\begin{thm}
	\texttt{SemiTemporalCherryPicking}$(T, k, k, p, \emptyset)$ from  \Cref{alg:better_temporal2} returns a cherry picking sequence of weight at most $k$ if and only if such a sequence exists.
	The algorithm terminates in $O(5^k\cdot (8k)^p\cdot k \cdot n\cdot m)$ time.
\end{thm}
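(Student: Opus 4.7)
The plan is to deduce this theorem immediately from the three preceding lemmas for \texttt{SemiTemporalCherryPicking}, in the same way that the analogous theorem for the purely temporal algorithm was deduced from \cref{lem:returned_sequences_satisfy_demands}, \cref{lem:procedure_returns_sequence}, and \cref{lem:running_time_better_temporal}. Concretely, I would specialise those lemmas to the parameters of the top-level call, namely $s^\star = ()$, $T^\star = T$, $C = \emptyset$, and $k^\star = k$.

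For the ``only if'' direction (soundness), I would invoke \cref{lem:tc:returned_sequences_satisfy_demands} at these parameters, which directly asserts that every sequence $s$ in the returned set is a tree-child sequence for $T$ with $w_T(s) \leq k$ and $d(s) \leq p$. For the ``if'' direction (completeness), I would invoke \cref{lem:tc:procedure_returns_sequence}: with $s^\star = ()$ and $C = \emptyset$ the premise reduces to the bare existence of a tree-child sequence $s$ for $T$ with $w_T(s) \leq k$ and $d(s) \leq p$, and the conclusion is exactly that the top-level call returns a non-empty set. Combining the two directions establishes the ``if and only if''. For the running time bound $O(5^k \cdot (8k)^p \cdot k \cdot n \cdot m)$, I would quote \cref{lem:tc:running_time_better_temporal} verbatim.

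At the level of the theorem itself, there is essentially no obstacle: the proof is a one-line invocation of the three lemmas. The substantive work is entirely hidden inside \cref{lem:tc:procedure_returns_sequence}, whose proof must extend the inductive argument from \cref{lem:procedure_returns_sequence} to account for the new \texttt{else if} branch of \cref{alg:better_temporal2} (the one guarded by $p > 0$ starting at \cref{alg:line:alternative_algorithm}). The expected difficulty in that lemma is to set up the induction jointly on $k - P(C)$ and $p$ so that every recursive call makes progress in exactly one of these two measures, and to use \cref{lem:tc:unique_cherries} together with \cref{lem:first_non_temporal} to justify that branching on at most $8k^\star$ candidates and decrementing $p$ by $1$ is sufficient whenever the temporal reductions are exhausted. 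Once that lemma is in place, the theorem falls out immediately.
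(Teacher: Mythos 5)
Your proposal matches the paper's proof exactly: the theorem is deduced in one line by specialising \cref{lem:tc:returned_sequences_satisfy_demands}, \cref{lem:tc:procedure_returns_sequence}, and \cref{lem:tc:running_time_better_temporal} to the top-level call with $s^\star = ()$, $C = \emptyset$, and $k^\star = k$. Your remarks about where the real work lives (the induction in \cref{lem:tc:procedure_returns_sequence} handling the $p>0$ branch via \cref{lem:tc:unique_cherries} and \cref{lem:first_non_temporal}) accurately describe the paper's structure as well.
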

\begin{proof}
	This follows directly from \cref{lem:tc:returned_sequences_satisfy_demands}, \cref{lem:tc:procedure_returns_sequence} and \cref{lem:tc:running_time_better_temporal}.
\end{proof}


\section{Constructing temporal networks from two non-binary trees}
\label{sec:non_binary_trees}
The algorithms described in the previous sections only work when all input trees are binary.
In this section we introduce the first algorithm for constructing a minimum temporal hybridization number for a set of two non-binary input trees.
The algorithm is based on \cite{piovesan_simple_2013} and has time complexity $O(6^kk!\cdot k
	\cdot n^2)$. 

We say that a binary tree $\T'$ is a refinement of a non-binary tree $\T$ when $\T$ can be obtained from $\T'$ by contracting some of the edges.
Now we say that a network $\N$ displays a non-binary tree $\T$ if there exists a binary refinement $\T'$ of $\T$ such that both $\N$ displays $T'$.
Now the hybridization number $h_t(T)$ can be defined for a set of non-binary trees $T$ like in the binary case.

\begin{defn}
	A set $S\subseteq N_T(x)$ is a \emph{neighbor cover} for $x$ in $T$ if $S\cap N_\T(x) \neq \emptyset$ for all $\T\in T$.
\end{defn}

\begin{defn}
	For a set of non-binary trees $T$, define $w_T(x)$ as the minimum size of a neighbor cover of $x$ in $T$ minus one.
\end{defn}
Note that computing the minimum size of a neighbor cover is a NP-hard problem itself.
However if $|T|$ is constant the problem can be solved in polynomial time.
Note that for binary trees this definition is equivalent to the definition given in \cref{def:weight}.

Next \cref{def:ht} is generalized to non-binary trees.
\begin{defn}
	For a set of non-binary trees $T$ on the same taxa define $H(T) =\{x\in\Leav(T): \forall \T \in T \text{ }N_\T(x)\neq \emptyset  \}$.
\end{defn}

The non-binary analogue of \cref{def:cps} is given by the following lemma.
\begin{defn}
	For a set of non-binary trees $T$ with $n=\Leav(T)$, let $s=(s_1,\ldots, s_{n-1})$ be a sequence of leaves.
	Let $T_0=T$ and $T_i=T_{i-1}\setminus \{s_1, \ldots, s_i\}$.
	The sequence $s$ is a \emph{cherry picking sequence} if for all $i$, $s_i\in H(T\setminus\{s_1, \ldots, s_{i-1}\})$.
	Define the \emph{weight} of the sequence as $w_T(s)=\sum_{i=1}^{n-1} w_{T_{i-1}}(s_i)$.
\end{defn}

\begin{lem}
	A temporal network $\N$ that displays a set of nonbinary trees $T$ with reticulation number $r(\N)=k$ exists if and only if a cherry picking sequence of weight at most $k$ exists.
\end{lem}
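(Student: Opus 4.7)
The plan is to generalize the bidirectional construction from \cref{lem:exists_cherry_sequence} to the non-binary setting, using binary refinements as a bridge. Both directions will reduce to the binary case via a careful translation between non-binary cherry-picking sequences and binary ones.

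I would begin with the easier direction: a temporal network $\N$ that displays $T$ with $r(\N)=k$ yields a cherry-picking sequence of weight at most $k$. By definition, $\N$ displays some set $T^\star=\{\T_1^\star,\ldots,\T_m^\star\}$ of binary refinements of the trees in $T$. Applying \cref{lem:exists_cherry_sequence} to $T^\star$ gives a binary CPS $s=(s_1,\ldots,s_{n-1})$ with $w_{T^\star}(s)\le k$. I would then verify that $s$ is also a CPS for $T$ with $w_T(s)\le w_{T^\star}(s)$. Indeed, if $s_i$ is in a cherry in every tree of $T^\star\setminus\{s_1,\ldots,s_{i-1}\}$, then $s_i$ has a non-empty neighbourhood in every tree of $T\setminus\{s_1,\ldots,s_{i-1}\}$, so $s_i\in H(T\setminus\{s_1,\ldots,s_{i-1}\})$. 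Moreover, at step $i$ the single-per-tree cherry partners picked out in the binary refinements collectively form a neighbour cover of $s_i$ in $T\setminus\{s_1,\ldots,s_{i-1}\}$, so $w_{T_{i-1}}(s_i)\le w_{T^\star_{i-1}}(s_i)$. Summing yields $w_T(s)\le k$.

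For the forward direction, given a CPS $s=(s_1,\ldots,s_{n-1})$ for $T$ with $w_T(s)=k$, for each $i$ I would fix a minimum neighbour cover $S_i\subseteq N_{T_{i-1}}(s_i)$, so $|S_i|-1=w_{T_{i-1}}(s_i)$. I then mimic the construction following \cref{lem:exists_cherry_sequence}: start with $\N_{n-1}$ consisting only of a root and the leaf $s_{n-1}$, and obtain $\N_i$ from $\N_{i+1}$ by adding $s_i$ with a new parent $p_{s_i}$, subdividing each edge $(p_x,x)$ with a node $q_x$ for every $x\in S_i$, and adding the arc $(q_x,p_{s_i})$; finally suppress in-and-out degree one vertices. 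The network $\N=\N_1$ has reticulation number $\sum_i(|S_i|-1)=w_T(s)=k$, and the same argument as in the binary case shows it is temporal, since at stage $i$ all newly added hybridization arcs terminate at the same level and $p_{s_i}$ sits strictly above the rest of the current network.

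The main obstacle will be showing that $\N$ displays every $\T_j\in T$, since $\N$ is binary while $\T_j$ is not. I would do this by exhibiting, for each $j$, a binary refinement $\T^\star_j$ of $\T_j$ that $\N$ displays. For each $i$ with $s_i\in\Leav(\T_{j,i-1})$ choose a witness $y^{(j)}_i\in S_i\cap N_{\T_{j,i-1}}(s_i)$, which is non-empty because $S_i$ is a neighbour cover. The sequence of pairs $(s_i,y^{(j)}_i)$ specifies, for each non-binary vertex $v$ of $\T_j$, a caterpillar-style resolution of its children in the order in which they are picked off; this yields a well-defined binary refinement $\T^\star_j$ in which $(s_i,y^{(j)}_i)$ is a cherry of $\T^\star_{j,i-1}$ for every $i$. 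A straightforward induction on $i$ from $n-1$ down to $1$, identical in structure to the binary case and tracking the embedding of $\T^\star_j\setminus\{s_1,\ldots,s_{i-1}\}$ inside $\N_i$, then shows that $\N$ displays $\T^\star_j$ and hence $\T_j$. Combining both directions gives the lemma.
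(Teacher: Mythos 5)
Your forward direction (sequence $\to$ network) matches the paper's construction via minimum neighbour covers, and your explicit construction of the displayed binary refinements from the witnesses $y^{(j)}_i\in S_i\cap N_{\T_{j,i-1}}(s_i)$ is a detail the paper leaves implicit; that half is fine. The problem is in the backward direction. You apply \cref{lem:exists_cherry_sequence} to a set $T^\star$ of binary refinements displayed by $\N$, obtain \emph{some} binary CPS $s$ for $T^\star$, and then assert that if $s_i$ is in a cherry in every tree of $T^\star\setminus\{s_1,\ldots,s_{i-1}\}$ then $N_{\T\setminus\{s_1,\ldots,s_{i-1}\}}(s_i)\neq\emptyset$ for every $\T\in T$. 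That inference is false. Under the paper's definitions, $N_\T(x)$ is nonempty only when the parent of $x$ in $\T$ has \emph{only leaves} as children, and a refinement can create a cherry $\{x,y\}$ at a resolved vertex whose remaining children in the original tree include internal vertices. Concretely, take $\T$ with root $p$ having children $x$, $y$ and an internal child $v$ whose children are leaves $c,d$; the refinement $\T^\star=((x,y),(c,d))$ is displayed by the (temporal, reticulation-free) network $\N=\T^\star$, and $(x,c,y)$ is a weight-$0$ binary CPS for $\T^\star$, yet $x\notin H(\{\T\})$ because $p$ has the non-leaf child $v$. Since \cref{lem:exists_cherry_sequence} only guarantees existence of some binary CPS, you cannot control which one you get, and the one you get need not transfer.

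To close the gap you need an argument that produces a sequence respecting the non-binary cherry condition, not just the refined one. The paper's route is to build the sequence directly from $\N$ by repeatedly picking a leaf whose parent has maximal temporal label, which forces leaves sitting deeper in the embedding (such as $c,d$ above) to be removed before the multifurcation at $p$ becomes a genuine non-binary cherry; alternatively you would have to prove a reordering lemma showing that among the binary CPSs for $T^\star$ of weight at most $k$ there is one in which, within each resolved multifurcation, the leaves are picked bottom-up. Neither is automatic from the black-box application of the binary theorem. (A small separate slip: your base network should consist of the root together with the unique leaf of $T\setminus\{s_1,\ldots,s_{n-1}\}$, which is not $s_{n-1}$ itself, since the non-binary CPS omits the last surviving leaf.)
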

\begin{proof}
	Note that this is a generalization of \cref{lem:exists_cherry_sequence} to the case of non-binary input trees and the proof is essentially the same.
	A cherry picking sequence with weight $k$ can be constructed from a temporal network with reticulation number $k$ in the same way as in the proof of \cref{lem:exists_cherry_sequence}.

	The construction of a temporal network $\N$ from a cherry picking $s$ is also very similar to the binary case: for cherry picking sequence $s_1,\ldots, s_t$, define  $\N_{t+1}$ to be the network, only consisting of a root, the only leaf of $T\setminus \{s_1, \ldots, s_t \}$ and an edge between the two.
	For each $i$ let $S_i$ be a minimal neighbor cover of $s_i$ in $T\setminus \{s_{1}, \ldots, s_{i-1} \}$.
	Now obtain $\N_{i}$ from $\N_{i+1}$ by adding node $s_i$, subdividing $(p_x,x)$ for every $x\in S_i$ with node $q_x$ and adding an edge $(q_x,s_i)$ and finally suppressing all nodes with in- and out-degree one.
	It can be shown that $r(\N)=w_T(s)$.
\end{proof}

\begin{lem}
	If $s$ is a cherry picking sequence for $T$ and for $x\in H(T)$ we have $w_T(x)=0$ then there is a cherry picking sequence $s'$ for $T$ with $w_T(s')=w_T(s)$ and $s'_1=x$.
	\label{lem:nonbin:remove_trivial}
\end{lem}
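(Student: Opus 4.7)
The plan is to adapt the proof of \cref{lem:remove_trivial} to the non-binary setting, exploiting the structure forced by $w_T(x)=0$.

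Since $w_T(x)=0$, a minimum neighbour cover of $x$ has size one, so there exists a leaf $y$ with $y\in N_\T(x)$ for every $\T\in T$, meaning $x$ and $y$ share the same cherry-parent (a parent all of whose children are leaves) in every tree of $T$. Consequently the map $\phi$ that swaps $x$ and $y$ and fixes the other leaves is an automorphism in the sense that $\phi(\T)=\T$ for every $\T\in T$. Hence $\phi(s)$ is itself a cherry picking sequence for $T$ with $w_T(\phi(s))=w_T(s)$. By replacing $s$ with $\phi(s)$ if needed, I may assume that either $x$ appears in $s$ strictly before $y$, or $y$ is the unique leaf not appearing in $s$.

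Let $j$ denote the index of $x$ in $s$ and define $s':=(x,s_1,\ldots,s_{j-1},s_{j+1},\ldots,s_{n-1})$; the claim is that $s'$ is a cherry picking sequence for $T$ with $w_T(s')=w_T(s)$. Write $T_i:=T\setminus\{s_1,\ldots,s_{i-1}\}$. For $i<j$ the assumption above guarantees $s_i\neq y$, and the key observation is that removing $x$ from $T_{i-1}$ preserves $s_i$'s membership in a cherry: if $s_i$ is a sibling of $x$ in some $\T\in T_{i-1}$, then $y$ is a sibling too, and deleting $x$ leaves $s_i$ and $y$ (plus any other siblings) still forming a cherry; if $s_i$ is not a sibling of $x$, contracting $x$'s parent (in the case that $\{x,y\}$ were its only children) is a local change that does not affect $s_i$'s cherry elsewhere in $\T$. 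Thus $s_i\in H(T_{i-1}\setminus\{x\})$ for every $i<j$.

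By the same structural observation, any minimum neighbour cover of $s_i$ in $T_{i-1}$ can be transformed into one of the same size in $T_{i-1}\setminus\{x\}$ by substituting $y$ for $x$ wherever $x$ appears, so $w_{T_{i-1}\setminus\{x\}}(s_i)=w_{T_{i-1}}(s_i)$ for each $i<j$. Placing $x$ first in $s'$ contributes $w_T(x)=0$, which cancels exactly the contribution $w_{T_j}(x)=0$ that $x$ made at position $j$ in $s$ (the latter is zero because $y$ still belongs to $T_j$ by the assumption on $y$'s position). For indices after $j$ the trees and picks agree verbatim, so the remaining weight contributions are identical, and summing gives $w_T(s')=w_T(s)$. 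The main subtlety is ensuring that moving $x$ ahead does not break the cherry-validity of the intermediate picks; the universal sibling $y$ functions as a safe substitute whenever $x$ appears in their cherries or in their covers, which is precisely why the preparatory application of $\phi$ is essential.
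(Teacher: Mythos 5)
Your proof is correct and follows essentially the same route as the paper's: identify the universal sibling $y$ guaranteed by $w_T(x)=0$, use the $x\leftrightarrow y$ swap automorphism, and move the relevant leaf to the front of the sequence (you swap first and then move $x$, whereas the paper moves whichever of $x,y$ appears first and then swaps). Your version is in fact slightly more careful, since in the non-binary setting $w_T(x)=0$ only yields a singleton neighbour cover rather than $N_T(x)=\{y\}$ as the paper asserts, and you also spell out why the relocation preserves cherry membership and weights.
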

\begin{proof}
	We have $N_T(x)=\{y\}$.
	Now let $z$ be the element of $\{x,y\}$ that appears in $s$ first with $s_i=z$.
	Now $s'=(s_i,s_1,\ldots, s_{i-1},s_{i+1},\ldots)$ is a cherry picking sequence for $T$ with $w_T(s')=w_T(s)$.
	If $z=x$, then this proves the lemma.
	Otherwise we note that by swapping $x$ and $y$ in $T$, the trees stay the same.
	So we can also swap $x$ and $y$ in $s'$ without affecting the weight.
	Now $s'=x$, which proves the lemma.
\end{proof}

The algorithm relies on some theory from \cite{piovesan_simple_2013}, that we will introduce first.

For a vertex $v$ of $\T$ we say that all vertices reachable by $v$ form a pendant subtree.
For a pendant subtree $S$ we define $\mathcal{L}(S)$ set of the leaves of $S$.
Now we define
\begin{align*}
	Cl(\mathcal{T}) = \{\mathcal{L}(S): S \text{ is a pendant subtree of } \mathcal{T} \}\text{.
	}
\end{align*}
We call this the set of \emph{clusters} of $\T$.
Then we define $Cl(T)=\bigcup_{\T \in T}Cl(\T)$.
Call a cluster $C$ with $|C|=1$ \emph{trivial}.
Now we call a nontrivial cluster $C\in Cl(T)$ a  \emph{minimal} cluster if there is no $C'\in Cl(T)$ with $C'$ nontrivial and $C'\subsetneq C$.

In a cherry picking sequence $s$ we say that at index $i$ the cherry $(s_i,y)$ is \emph{reduced} if there is a $\T\in T$ such that $N_{T\setminus \{s_1,\ldots, s_{i-1}\}}(s_i)=\{y\}$.
\begin{lem}
	Let $T$ be a set of trees with $|T|=2$ such that $T$ contains no leaf $x$ with $w_T(x)=0$.
	Let $s$ be a cherry picking sequence for $T$.
	Then there is a minimal cluster $C$ in $T$ and a cherry picking sequence $s'=(s'_1,\ldots)$ for $T$ with $s'_i\in C$ for $i =1,\ldots, |C| -1$ and $w_T(s')\leq w_T(s)$.
	\label{lem:can_pick_min_cluster}
\end{lem}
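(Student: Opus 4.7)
My plan is to construct $s'$ from $s$ by first identifying a suitable minimal cluster $C$ that contains~$s_1$, and then rearranging $s$ so that picks of $C$-leaves are pulled to the front via a sequence of local swaps.

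For the first step, I claim that under the no-weight-$0$ assumption, $s_1$ always lies in some minimal cluster of $T$. Since $s_1\in H(T)$, the set $L_1:=\{s_1\}\cup N_{T_1}(s_1)$ is a cluster of $T_1$ containing $s_1$, whose pendant subtree in $T_1$ has only leaf children; hence no proper nontrivial sub-cluster of $L_1$ can lie in $Cl(T_1)$. If $L_1$ is not already minimal in $T$, any witness $C''\subsetneq L_1$ must lie in $Cl(T_2)$. I would then argue $s_1\in C''$: otherwise, picking a leaf $c\in C''$ lying in a cherry of $C''$'s pendant subtree in~$T_2$ gives $N_{T_2}(c)\subseteq C''\setminus\{c\}$, whereas $N_{T_1}(c)\supseteq L_1\setminus\{c\}\supseteq C''\setminus\{c\}$, so their intersection is nonempty and $w_T(c)=0$, contradicting the hypothesis. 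Iterating the descent on $C''$ (which is strictly smaller and still contains $s_1$) and using finiteness of $Cl(T)$, we reach a minimal cluster $C$ with $s_1\in C$.

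For the second step, I construct $s'$ by iteratively swapping any adjacent pair $(s_j,s_{j+1})$ with $s_j\notin C$ and $s_{j+1}\in C$ occurring before the position of the $(|C|-1)$-th $C$-pick. The target invariant is that after sufficiently many swaps, the first $|C|-1$ entries of $s$ all lie in $C$. For each swap, I must check (i) validity: both $s_{j+1}\in H(T\setminus\{s_1,\ldots,s_{j-1}\})$ and $s_j\in H(T\setminus\{s_1,\ldots,s_{j-1},s_{j+1}\})$, and (ii) the weight contribution of the two positions does not increase. Validity in $T_1$ is immediate because, with $C\in Cl(T_1)$ (say), every remaining $C$-leaf stays a child of the same parent $p_1$ as long as at least two $C$-leaves remain, regardless of the order in which non-$C$ leaves are removed. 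Validity in $T_2$ relies on the observation that the no-weight-$0$ assumption forbids $s_{j+1}\in N_{T_2}(s_j)$ when $s_{j+1}\in C$ and $s_j\notin C$ (else picking the second of the two would yield weight $0$), so the swap does not destroy $s_{j+1}$'s cherry in $T_2$.

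For the weight accounting at each swap, the $T_1$-contribution is unchanged because only $C$-leaves affect $N_{T_1}$ on $C$-elements. For the $T_2$-contribution, moving the $C$-pick $s_{j+1}$ earlier means fewer non-$C$ leaves have been suppressed, so $N_{T_2}(s_{j+1})$ can only be equal to or a subset of its later counterpart, and the minimum neighbor cover can only shrink; similarly the pick $s_j$ now happens one step later but on essentially the same remaining configuration modulo the removal of $s_{j+1}$, which by minimality of $C$ and the no-weight-$0$ condition does not increase its neighbor-cover weight. The step I expect to be the main obstacle is this pick-by-pick weight accounting: one has to argue carefully that the minimum neighbor covers in both trees, summed over the swapped pair, do not strictly increase, exploiting both the multi-cherry structure of $C$ in~$T_1$ and the structural separation (forced by minimality and the hypothesis) between $C$-leaves and their $T_2$-neighbors.
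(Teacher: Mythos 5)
There is a genuine gap, and it sits in your very first step: the choice of the minimal cluster. Taking $C$ to be a minimal cluster containing $s_1$ (essentially $C=\{s_1\}\cup N_{T_1}(s_1)$) does not support the rest of your argument, because the other elements of $C$ need only be in a cherry of $T_1$ --- nothing forces them to be in a cherry of $T_2$, so they need not be in $H(T)$ and need not be pickable early. Concretely, let $T_1=((x,a,b),(u,v),(c,d))$ and $T_2=((x,u),(a,(c,v)),(b,d))$. Every leaf has weight $1$ (no weight-$0$ leaves), $C=\{x,a,b\}$ is a minimal cluster containing $x$, yet $N_{T_2}(a)=\emptyset$: the leaf $a$ only enters a cherry of $T_2$ after $v$ is picked. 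A valid sequence may begin $s=(v,a,x,\ldots)$, and your first swap $(v,a)\mapsto(a,v)$ already produces an invalid sequence. Note that the failure mode is not the one your validity check (i) guards against: the swap does not \emph{destroy} a cherry of $a$ in $T_2$; rather, $a$'s cherry in $T_2$ is \emph{created} by removing $v$, a case your argument never considers. The supporting claim that the no-weight-$0$ hypothesis forbids $s_{j+1}\in N_{T_2}(s_j)$ is also false: $w_T(s_j)=0$ requires a \emph{common} neighbour in $N_{T_1}(s_j)\cap N_{T_2}(s_j)$, and $s_{j+1}\notin N_{T_1}(s_j)$ precisely because $s_j\notin C$, so no contradiction arises (and in any case the hypothesis constrains only the original $T$, not the intermediate trees). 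The weight accounting breaks for the same underlying reason: an element of your $C$ may be picked in $s$ only after reductions in $T_2$ have given it a common neighbour with its $T_1$-cherry, i.e.\ at cost $0$, whereas fronting it costs $1$, and nothing guarantees compensating savings elsewhere.

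The idea you are missing is that $C$ must be chosen relative to the \emph{first reduction} in $s$, not relative to $s_1$. The paper calls the cherry of $s_i$ \emph{reduced} at index $i$ if $N_{\T\setminus\{s_1,\ldots,s_{i-1}\}}(s_i)$ is a single leaf for some $\T\in T$, lets $p$ be the first index at which this happens, and takes $C$ to be a minimal cluster contained in the original cherry of $T$ containing that reduced pair $(a,b)$. Before index $p$ no leaf's cherry-neighbourhood ever gains a new element, so all elements of that original cherry other than $a$ and $b$ are already deleted by time $p$; hence every element of $C\setminus\{x\}$ (with $x$ the last element of $C$ in $s$) is picked in $s$ before any reduction. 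This gives exactly the two properties your argument lacks: such elements are in cherries of both trees from the outset (so fronting them is valid), and each is picked in $s$ at weight exactly $1$, since its two neighbourhoods are still disjoint subsets of the original ones by the no-weight-$0$ hypothesis (so fronting them, at cost at most $1$ each, cannot increase the total weight). With that choice of $C$, the paper then performs the rearrangement in one step and compares positions directly, rather than via adjacent swaps.
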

\begin{proof}
	Let $p$ be the first index that a cherry is reduced in $s$.
	Let $(a,b)$ be one of the cherries that is reduced at index $p$.
	Now there will be a cherry in $T$ that contains both $a$ and $b$.
	Let $C$ be one of the minimum clusters that is contained in this cherry.
	Let $x$ be the element of $C$ that occurs last in $s$.
	Now let $c_1,\ldots, c_t$ be the elements from $C\setminus \{x\}$ ordered by their index in $s$.
	Now we claim that for any permutation $\sigma$ of $[t]$ we have $s'=(c_{\sigma(1)},\ldots,c_{\sigma(t)})|(s\setminus (C\setminus \{x\}))$ is a cherry picking sequence for $T$ and $w_T(s')\leq w_T(s)$.

	Let $i$ be the index of the last element of $C\setminus \{x\}$ in $s$.
	Suppose that $s'$ is not a CPS for $T$.
	Let $j$ be the smallest index for which $s'_j\notin H(T\setminus \{s'_1, \ldots, s'_{j-1} \})$.

	Let $\T \in T$ be such that $s'_j$ is not in a cherry in $T\setminus \{s'_1, \ldots, s'_{j-1}\}$.
	Choose $k$ such that $s_k=s'_j$.
	Now there are three cases:
	\begin{itemize}
		\item Suppose $j> i$, then $k=j$ and $\{s_1,\ldots, s_k \}=\{s'_1,\ldots, s'_j \}$.
		      This implies that $s'_j\in H(T\setminus \{s'_1,\ldots, s'_j\})$, which contradicts our assumption.
		\item Otherwise, suppose $s'_j\in \{c_1,\ldots, c_t\}$.
		      Then $j\leq t$.
		      Now $s_k$ has to be in a cherry in $\T\setminus \{s_1,\ldots, s_{k-1}\}$.
		      Because no cherries are reduced before index $i$ in $s$ this means that $s'_j$ is in a cherry in $\T$.
		      Because no cherries are reduced in $s'$ before index $t$, this implies that the same cherry is still in $\T\setminus \{s'_1,\ldots, s'_{j-1} \}$, which contradicts our assumption.
		\item Otherwise we must have $j\leq i$.
		      Because no cherries are reduced before index $i$ in $s$ this means that $s'_j$ is in a cherry $Q$ in $\T$.
		      If this cherry contains a leaf $y$ with $s'_w=y$ for $w>j$, then $s'_j$ is still in a cherry in $\T\setminus \{s'_1,\ldots, s'_{j-1} \}$, contradicting our assumption, so this can not be true.
		      However, that implies that the neighbors of $s_k$ in  $\T \setminus \{s_1, \ldots, s_{k-1} \}$ are all elements of $\{c_1, \ldots, c_t \}$.
		      Let $v$ be the second largest number such that $c_v$ is one of these neighbors.
		      Let $q$ be the index of $c_v$ in $s$.
		      Now cherry $Q$ will be reduced by $s$ at index $\max (q, j)< i$, which contradicts the fact that $C$ is contained in a cherry of $T$ that is reduced first by $s$.
	\end{itemize}

	Now to prove that $w_T(s')\leq w_T(s)$, we will prove that for $s_j=s'_k$ we have
	\begin{align*}
		w_{T\setminus \{s_1, \ldots, s_{j-1} \}}(s_j)\geq w_{T\setminus \{s'_1, \ldots, s'_{k-1} \}}(s'_k)\text{.
		}
	\end{align*}
	Note that for $j\geq i$ this is trivial, so assume $j<i$.
	If $s_j\in C\setminus \{x\}$, then $w_{T\setminus \{s_1, \ldots, s_{j-1} \}}(s_j)\geq w_{T}(s_j)$ because no cherries are reduced before $i$, which implies that no new elements added to cherries before $i$.
	For the same reason we must have $s_j\in H(T)$.
	Because there are no $x\in H(T)$ with $w_T(x)=0$ we must have $w_{T}(s_j)=1$.
	So $w_{T\setminus \{s'_1, \ldots, s'_{k-1} \}}(s'_k)\leq  w_{T\setminus \{s_1, \ldots, s_{j-1} \}}(s_j)=1$.
\end{proof}

\subsection{Bounding the number of minimal clusters}
By \cref{lem:can_pick_min_cluster} in the construction of a cherry picking sequence we can restrict ourselves to only appending elements from minimal clusters.
We use the following theory from \cite{piovesan_simple_2013} to bound the number of minimal clusters.
\begin{defn}
	Define the relation $x\xrightarrow[]{T} y$ for leaves $x$ and $y$ of $T$ if every nontrivial cluster $C\in Cl(T)$ also contains $y$.
\end{defn}
\begin{obs}[{\cite[Observation 2]{piovesan_simple_2013}}]
	The relation $\xrightarrow[]{T}$ defines a partial ordering on $\Leav(T)$.
\end{obs}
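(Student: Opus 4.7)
The plan is to verify the three defining properties of a partial order directly from the definition of $\xrightarrow{T}$. Reflexivity is immediate, since every nontrivial cluster containing $x$ contains $x$ itself, giving $x\xrightarrow{T} x$.

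For transitivity, assume $x\xrightarrow{T} y$ and $y\xrightarrow{T} z$, and take any nontrivial cluster $C \in Cl(T)$ with $x \in C$. The first relation forces $y \in C$, after which the second applied to this same $C$ gives $z \in C$. Hence $x\xrightarrow{T} z$, and the property is just composition of the two inclusion conditions.

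Antisymmetry is the main obstacle and needs a tree-by-tree structural argument. The plan is as follows: suppose $x\xrightarrow{T} y$ and $y\xrightarrow{T} x$, and fix any $\T \in T$ with $x,y \in \Leav(\T)$. The pendant subtree at the parent $p_x$ of $x$ yields a nontrivial cluster in $Cl(T)$ containing $x$, which by hypothesis must also contain $y$; symmetrically, the cluster at $p_y$ must contain $x$. Combining these inclusions forces $p_x = p_y$ in $\T$. One then iterates: any deeper internal vertex on the path from $p_x = p_y$ toward $x$ would produce a nontrivial pendant cluster containing $x$ but not $y$, violating $x\xrightarrow{T} y$, unless no such proper descent exists, i.e.\ $x$ is itself the sole witness of its position below $p_x$ and likewise for $y$. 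Since the same argument holds for every $\T \in T$ containing both leaves, the hypotheses collapse $x$ and $y$ to the same leaf of $T$, giving $x=y$ and completing the proof that $\xrightarrow{T}$ is a partial order on $\Leav(T)$.
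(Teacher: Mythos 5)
The paper offers no proof of this statement: it is imported verbatim as Observation~2 of the cited reference, so there is no internal argument to compare against. Your verification of reflexivity and transitivity is correct and is the easy part of the claim (note, though, that the paper's wording of the definition has dropped the clause ``that contains $x$''; you have correctly supplied the intended reading).

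The genuine gap is in antisymmetry. Your argument correctly derives that, under $x\xrightarrow[]{T}y$ and $y\xrightarrow[]{T}x$, the parents $p_x$ and $p_y$ coincide in every tree containing both leaves (the pendant clusters at $p_x$ and $p_y$ each contain both $x$ and $y$, and two non-disjoint clusters of a tree are nested, which forces $p_x=p_y$). But the final step --- ``the hypotheses collapse $x$ and $y$ to the same leaf, giving $x=y$'' --- is a non sequitur: two \emph{distinct} leaves can share a parent in every tree of $T$. Concretely, let $T$ consist of the single tree $((x,y),z)$; its nontrivial clusters are $\{x,y\}$ and $\{x,y,z\}$, so every nontrivial cluster containing $x$ contains $y$ and vice versa, yet $x\neq y$. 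Hence antisymmetry is not merely unproved by your argument, it is false in general: the relation is only a preorder (reflexive and transitive), and no repair of your last step is possible without an extra hypothesis such as the prior reduction of common cherries/pendant subtrees used in the cited source. What the surrounding text actually relies on is only the existence of suitably maximal elements of $\xrightarrow[]{T}$ on the finite set $\Leav(T)$, for which reflexivity, transitivity and finiteness suffice; that is the statement worth proving carefully here.
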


Now call $x\in \Leav(T)$ a \emph{terminal} if there is no $y\neq x$ with $x\xrightarrow[]{T} y$.
Now we will first show that all minimal clusters contain a terminal.
Then a bound on the number of terminals gives a bound on the number of minimal clusters.

\begin{lem}
	Every minimal cluster contains a terminal.
\end{lem}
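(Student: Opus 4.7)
The plan is to locate a terminal inside a minimal cluster $C$ by exploiting the partial order $\xrightarrow{T}$. Since the observation cited from Piovesan et al.\ tells us that $\xrightarrow{T}$ is a partial ordering on $\Leav(T)$, its restriction to the finite nonempty set $C$ is also a partial order, and therefore has at least one maximal element $x$; that is, an $x \in C$ such that no $y \in C \setminus \{x\}$ satisfies $x \xrightarrow{T} y$.

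I then claim this $x$ is terminal in $\Leav(T)$, not merely maximal inside $C$. Suppose for contradiction that some $y \neq x$ in $\Leav(T)$ satisfies $x \xrightarrow{T} y$. By the definition of $\xrightarrow{T}$, every nontrivial cluster of $T$ that contains $x$ must also contain $y$. The minimal cluster $C$ is nontrivial and contains $x$, so $y \in C$. But then $y$ is an element of $C$ strictly above $x$ in the restricted partial order, contradicting the maximality of $x$ within $C$. Hence no such $y$ exists, and $x$ is a terminal.

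The only thing to check carefully is that a maximal element of $\xrightarrow{T}$ restricted to $C$ actually exists, which is immediate from finiteness of $C$ together with antisymmetry and transitivity of $\xrightarrow{T}$ given by the cited observation. The main conceptual point is that minimality of $C$ is never directly invoked beyond ensuring $C \in Cl(T)$ and $|C|>1$; what really does the work is that $C$ itself is a nontrivial cluster containing $x$, which forces any successor $y$ of $x$ in $\xrightarrow{T}$ to stay inside $C$. I do not expect any real obstacle here — the argument is essentially one sentence once the partial order language is set up.
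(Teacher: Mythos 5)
Your proof is correct and is essentially identical to the paper's own argument: both take a maximal element $x$ of the minimal cluster $C$ under the partial order $\xrightarrow[]{T}$ and derive a contradiction from the fact that any $y$ with $x\xrightarrow[]{T}y$ must lie in $C$ (since $C$ is a nontrivial cluster containing $x$), violating maximality. Your added remark that only nontriviality of $C$, not minimality, is used is accurate but does not change the substance.
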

\begin{proof}
	Let $C$ be a minimal cluster of $T$.
	Let $x$ be an element of $C$ that is maximal in $C$ with respect to the partial ordering `$\xrightarrow[]{T}$' (if we say that $x\xrightarrow[]{T}y$ means that $y$ is `greater than or equal to' $y$).
	Now suppose that $x$ is not a terminal.
	Then there is an $y$ such that $x\xrightarrow[]{T}y$.
	However then $y\in C$, but this contradicts the fact that $x$ is a maximal element in $C$ with respect to `$\xrightarrow[]{T}$'.
	Because this is a contradiction, $x$ has to be a terminal.
\end{proof}

\begin{lem}
	\label{lem:max3kterminals}
	Let $T$ be a set of trees with $h_t(T)\geq1$ containing no zero-weight leaves.
	Let $\N$ be a network that displays $T$.
	Then $T$ contains at most $2r(\N)$ terminals that are not directly below a reticulation node.
\end{lem}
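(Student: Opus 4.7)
The plan is to construct an injective map from the set $B$ of terminals not directly below a reticulation to the tree vertices of $\N$ having at least one reticulation child, and then bound the size of the latter by $2r(\N)$ via a counting argument.

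The heart of the argument is the following structural claim: if $x \in B$ has parent $p_x$ in $\N$, then $p_x$ must have at least one reticulation child. To prove this, I would define $C_x$ to be the set of leaves reachable from $p_x$ in $\N$ along a directed path consisting only of tree arcs (arcs not entering any reticulation). For any $y \in C_x$, this tree-arc path contains no arc that can be removed when $\N$ is converted to a displayed tree $\T \in T$, so $y$ is always a descendant of $p_x$ in $\T$. Hence $y$ lies in the cluster of $x$'s parent in $\T$, i.e., in the smallest nontrivial cluster containing $x$, which yields $x \xrightarrow{T} y$. Since $x$ is a terminal, this forces $C_x = \{x\}$. The tree-child property of $\N$ now forces $p_x$ to have a tree-arc child, and this child cannot be a tree vertex distinct from $x$ (otherwise, iterating the tree-child property would produce another leaf in $C_x$, contradicting $C_x = \{x\}$), so it must be $x$ itself. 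Since $p_x$ has out-degree at least $2$, at least one further child of $p_x$ must exist and, not being a tree vertex or leaf child by the same argument, must be a reticulation. The same structural observation shows that the map $x \mapsto p_x$ is injective on $B$: if two distinct terminals $x, x' \in B$ shared a parent $p$, then $x'$ would be a second leaf child of $p = p_x$, placing $x'$ in $C_x$ and contradicting $C_x = \{x\}$.

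It follows that $|B|$ is at most the number of tree vertices of $\N$ with a reticulation child. Each such tree vertex is the source of at least one arc into a reticulation, so this number is bounded above by the total in-degree of the reticulations, $\sum_{v \text{ reticulation}} d^-(v) = r(\N) + R$, where $R$ is the number of reticulations. Since $d^-(v) - 1 \geq 1$ for every reticulation $v$, we have $R \leq r(\N)$, and therefore $|B| \leq 2r(\N)$. The delicate step is the claim $C_x = \{x\}$ combined with the careful use of the tree-child property; once that is in place, the rest of the argument is routine combinatorial counting.
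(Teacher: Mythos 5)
Your final count is the same as the paper's ($|B|\le\sum_{v\in R}d^-(v)\le |R|+r(\N)\le 2r(\N)$), but the structural claim you build it on is strictly stronger than what the paper proves, and this is where the argument breaks. You assert that the parent $p_x$ of a terminal $x$ not directly below a reticulation must itself have a reticulation \emph{child}, and you justify this by invoking ``the tree-child property of $\N$''. The lemma, however, is stated for an arbitrary network $\N$ that displays $T$; being tree-child is not among its hypotheses. Without that hypothesis the claim is false: nothing prevents $p_x$ from having, besides $x$, a single further child $c$ that is a tree vertex both of whose children are reticulations. Then the arc $(p_x,c)$ is a tree arc but no tree arc leaves $c$, so no leaf other than $x$ is reachable from $p_x$ by tree arcs (hence $x$ can perfectly well be a terminal and your identity $C_x=\{x\}$ holds), yet $p_x$ has no reticulation child and your map $x\mapsto p_x$ does not land in the set you are counting. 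The paper's proof avoids this by establishing only the weaker fact that \emph{some} reticulation is reachable from $p_x$ by a directed path whose non-final arcs are tree arcs, mapping $x$ to an incoming arc of that reticulation, and arguing that two distinct terminals cannot be assigned the same incoming arc (otherwise one would reach the other by tree arcs, contradicting terminality); none of this needs the tree-child property.

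The remainder of your argument is sound: the derivation of $C_x=\{x\}$ from $x\xrightarrow{T}y$ for every leaf $y\in C_x$ is correct and is essentially the paper's observation that otherwise another leaf reachable from $p_x$ would violate terminality; the injectivity of $x\mapsto p_x$ on $B$ and the bound $\sum_{v\in R}d^-(v)\le 2r(\N)$ are both fine. Indeed, the only place the paper applies this lemma is to a temporal, hence tree-child, network, so your proof would suffice for that application. But as a proof of the statement as written it has a genuine gap; to close it you must weaken ``$p_x$ has a reticulation child'' to ``a maximal tree-arc walk out of $p_x$ avoiding $x$ must terminate by entering a reticulation'' and charge $x$ to that entering arc, which is exactly the paper's construction.
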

\begin{proof}
	We reformulate the proof from  \cite[Lemma 3]{piovesan_simple_2013}.
	We use the fact that for each terminal one of the following conditions holds: the parent $p_x$ of $x$ in $N$ is a reticulation (condition 1) or a reticulation is reachable in a directed tree-path from the parent $p_x$ of $x$ (condition 2).
	This is always true because if neither of the conditions holds, because otherwise another leaf $y$ is reachable from $p_x$, implying that $x\xrightarrow[]{T}y$, which contradicts that $x$ is a terminal.

	Let $R$ be the set of reticulation nodes in $\N$ and let $W$ be the set of terminals in $T$ that are not directly beneath a reticulation.
	We describe a mapping $F:W\to R$ such that each reticulation $r$ is mapped to at most $d^-(r)$ times.
	Note that for each $x\in W$ condition 2 holds.
	For these elements let $F(x)=y$ where $y$ is a reticulation reachable from $p(x)$ by a tree-path.
	Note that there can not be a path from $p(x)$ to  $y$ containing only tree arcs when $x\neq y$ are both in $H(T)$ because then $x\to y$ which contradicts that $x$ is a terminal.
	It follows that each reticulation $r$ can be mapped to at most $d^-(r)$ times: at most once incoming edge.
	Then for the set of terminals $\Omega$ we have $|\Omega| \leq \sum_{r\in R} d^-(r) \leq \sum_{r\in R} (1+(d^-(r)-1)) \leq |R| + k \leq 2k$.
\end{proof}

\begin{lem}
	\label{lem:max2kterminals}
	Let $T$ be a set of nonbinary trees such that $h_t(T)\geq1$.
	Then any set $S$ of terminals in $T$ with $|S|\geq 2h_t(T)+1$ contains at least one element $x\in H(T)$ such that $s$ is a cherry picking sequence for $T$ with $w_T(s)=h_t(T)$ and $s_1=x$.
\end{lem}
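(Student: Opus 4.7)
The plan is as follows. First I would observe that every $x\in S$ satisfies $w_T(x)\geq 1$. Indeed, if $w_T(x)=0$ then $N_T(x)=\{y\}$ for a single leaf $y$ that belongs to $N_{\T}(x)$ for every $\T\in T$; this forces $y$ into every nontrivial pendant subtree of every $\T$ that contains $x$, so $x\xrightarrow{T}y$, contradicting that $x$ is a terminal. Hence no element of $S$ has weight zero and the hypothesis of \cref{lem:max3kterminals} is in scope.

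Next, fix an optimal temporal network $\N$ that displays $T$, so $r(\N)=h_t(T)=k$. By \cref{lem:max3kterminals}, at most $2k$ terminals of $T$ have a non-reticulation parent in $\N$. Since $|S|\geq 2k+1$, there exists $x\in S$ whose parent $p_x$ in $\N$ is a reticulation. The goal is to show that this particular $x$ lies in $H(T)$ and is the first element of some optimal CPS for $T$.

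To produce such a CPS, I would invoke the construction described just after \cref{lem:exists_cherry_sequence}: given any temporal labeling $t$ of $\N$, the first element of the resulting sequence can be chosen to be any leaf $\ell$ for which $t(p_\ell)$ is maximal among parents of leaves. It therefore suffices to exhibit a temporal labeling $t$ of $\N$ in which $t(p_x)$ attains this maximum. Once this is done, picking $s_1=x$ is legal, $x\in H(T)$ follows automatically from the definition of a CPS, and the resulting $s$ has weight $w_T(s)=r(\N)=k$.

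Building such a labeling is the principal technical step. I would pass to the quotient DAG on hybridization components of $\N$ (vertices are the equivalence classes under the hybridization-arc relation, arcs are induced by the tree arcs of $\N$), and let $C$ be the component containing $p_x$. The crucial claim is that no component strictly above $C$ in this quotient DAG contains a parent of a leaf: if $p_y$ were such a parent for some $y\neq x$, then following a tree-arc path witnessing $C<D$ inside $\N$ would show that in every tree $\T$ displayed by $\N$ the chosen parent $u_i$ of $p_x$ is an ancestor of $y$, placing $y$ in every nontrivial pendant subtree containing $x$ and giving $x\xrightarrow{T}y$, which contradicts that $x$ is a terminal. Granted the claim, one may rescale $t$ so that $C$'s value is at least as large as that of every other parent-of-leaf component while respecting the strict inequalities on tree arcs and equalities on hybridization arcs. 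The main obstacle I anticipate is making this last step watertight: a strict ancestor relation in the quotient DAG need not translate cleanly into an ancestor relation in a specific tree embedding once reticulations are resolved, so careful bookkeeping is needed to show that a tree-arc path from $C$ really induces an ancestor-descendant relation in every displayed tree, regardless of how the intermediate reticulations are resolved. A potentially cleaner fallback would be to delete $x$ and $p_x$ from $\N$, suppress degree-two vertices, verify that the resulting network is still temporal and displays $T\setminus\{x\}$ with reticulation number at most $k-w_T(x)$, and then either induct or directly apply \cref{lem:exists_cherry_sequence} to extend a CPS of the smaller instance.
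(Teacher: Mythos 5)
Your first half coincides with the paper's: you check that terminals have positive weight, apply \cref{lem:max3kterminals} to an optimal temporal network $\N$, and extract from $S$ a terminal $x$ whose parent $p_x$ is a reticulation. The gap is in the second half, where you try to realise $x$ as the first pick by constructing a temporal labeling $t$ of $\N$ with $t(p_x)$ maximal among parents of leaves. The ``crucial claim'' this rests on --- that no leaf-parent lies in a hybridization component forced to carry a strictly larger time than the component $C$ of $p_x$ --- is false. Take a parent $u_1$ of the reticulation $p_x$ whose other child is a tree vertex $w$ with two leaf children $y,z$. Every temporal labeling satisfies $t(p_x)=t(u_1)<t(w)=t(p_y)$, so no rescaling can ever make $t(p_x)$ maximal; yet $x$ can still be a terminal, because the displayed trees that resolve $p_x$ through its other parent $u_2$ can have a minimal cluster containing $x$ but neither $y$ nor $z$, so $x\xrightarrow[]{T}y$ need not hold. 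This is precisely the failure mode you flag yourself: an ancestor relation from \emph{one} parent $u_i$ of $p_x$ only constrains embeddings that route $x$ through $u_i$, and even there a path in $\N$ need not survive into the displayed tree. The fallback (delete $x$ and $p_x$, then induct) is also only a sketch: it quietly assumes $x\in H(T)$ and that deleting $x$ destroys at least $w_T(x)$ reticulations, and neither is established --- indeed $x\in H(T)$ cannot ``follow automatically from the definition of a CPS'' there, since in that branch you have not yet produced a CPS beginning with $x$.

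The paper sidesteps the labeling entirely. It takes the cherry picking sequence $s$ associated with the optimal network $\N$ via \cref{lem:exists_cherry_sequence}, views it as a CPS for the set $T'$ of \emph{all} binary trees displayed by $\N$, and observes that, because $x$ sits directly below a reticulation, no element preceding $x$ in $s$ is a neighbour of $x$ in $T'$. Then \cref{lem:twoconditions} permits moving $x$ to the front of $s$ without changing the weight, and $w_T(s')\leq w_{T'}(s')=r(\N)=h_t(T)$ closes the argument. That reordering lemma is the ingredient your proposal is missing; without it (or a correct substitute) the proof does not go through.
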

\begin{proof}
	Let $\mathcal{N}$ be a temporal network that displays $T$ such that $r(\N)=h_t(T)$ with corresponding cherry picking sequence $s$.
	From the \cref{lem:max3kterminals} it follows that at most $r(\N)$ terminals exist in $T$ that are not directly below a reticulation.
	So there is an $x\in S$ that is directly below a reticulation.

	Now let $T'$ be the set of all binary trees displayed by $\N$.
	Note that $s$ is a cherry picking sequence for $T'$.
	Let $i$ be such that $s_i=x$.
	Because $x$ is directly below a reticulation in $\N$, we have $s_j\notin N_{T'}(x)$, which implies by \cref{lem:twoconditions} that $s'=(s_i,s_1,\ldots, s_{i-1},s_{ i+1},\ldots)$ is a cherry picking sequence for $T'$ with $w_{T'}(s')=w_{T'}(s)=r(\N)=h_t(T)$.
	Now $w_{T}(s')\leq w_{T'}(s')=h_t(T)$, so $w_{T}(s')=h_t(T)$.
\end{proof}

\begin{algorithm}
	\caption{\label{alg:non_binary_new}}

	\begin{algorithmic}[1]
		\Procedure{CherryPicking}{$T,k$}
		\State $s\gets ()$
		\While{$\exists x\in H(T): w_T(x)=0$}
		\State $T\gets T\setminus \{x\}$ \label{line:remove_trivial}
		\State $s\gets s|(x)$
		\EndWhile\\
		\If{$|\Leav(T)|=1$}
		\State\Return $\{s\}$
		\ElsIf{$k=0$}
		\State \Return $\emptyset$
		\EndIf\\
		\State $S\gets $ set of terminals in $T$
		\If{$|S|>2k$}
		\State $S'\gets $subset of $S$ of size $2k+1$
		\For{$x\in S'\cap H(T)$}
		\State $R\gets R \cup \{ (x)\ |\ x : x \in $ \Call{CherryPicking}{$T\setminus \{x \}, k-1$ }  $\}$
		\EndFor
		\Else
		\For{$q\in S$}
		\State $D\gets$ set of minimum clusters that contain $q$
		\If{$\exists y,z: D=\{\{q,y\},\{q,z\}\}$\label{line:if_split_3} }
		\For{$x\in \{q,y,z\} \cap H(T)$}
		\State $R\gets R \cup \{ (y)\ |\ x : x \in $ \Call{CherryPicking}{$T\setminus \{y\}, k-1$}$\}$
		\EndFor
		\Else
		\For{$C\in D$}
		\For{$x\in C: C\setminus \{x\} \subseteq H(T)$}
		\State $(c_1,\ldots, c_t)\gets C\setminus \{x\}$
		\State $R\gets R \cup \{ (c_1,\ldots, c_t)\ |\ x : x \in $ \Call{CherryPicking}{$T\setminus \{c_1, \ldots, c_t \}, k-t$}$\}$
		\EndFor
		\EndFor
		\EndIf
		\EndFor
		\EndIf
		\State\Return $\{s|x:x\in R\}$
		\EndProcedure
	\end{algorithmic}
\end{algorithm}
\subsection{Run-time analysis}
\begin{lem}
	The running time of $\texttt{CherryPicking}(T,k)$ from \cref{alg:non_binary_new} is $O(6^kk!
		\cdot k\cdot n^2)$ if $T$ is a set consisting of two nonbinary trees.
\end{lem}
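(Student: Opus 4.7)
The plan is to factor the running time as \textrm{(work per call)} $\times$ \textrm{(number of recursive calls)}, then bound the per-call work by $O(kn^2)$ and the size of the recursion tree by $O(6^k k!)$.

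For the per-call work, I would bound each component of the non-recursive part of \cref{alg:non_binary_new} for two input trees. The initial while loop removes at most $n$ zero-weight leaves at $O(n)$ cost each. Computing $H(T)$, the neighbor sets $N_\T(\cdot)$, and the weights $w_T(\cdot)$ takes $O(n^2)$ time; note that for two trees $w_T(x) \in \{1,2\}$ depending only on whether $N_{\T_1}(x) \cap N_{\T_2}(x)$ is empty. Listing the terminals (at most $2k$ by \cref{lem:max2kterminals}) and, for each terminal, the at most two minimum clusters containing it (since $|T|=2$, every element lies in at most one cluster per tree) fits in $O(kn)$ additional time. The total is $O(kn^2)$.

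For the recursion tree, let $R(k)$ denote the maximum number of recursive-call leaves produced by an invocation with budget $k$. I would prove by induction that $R(k) \leq 6^k k!$, with base $R(0)=1$. For the inductive step I case-split on which branching rule fires. In \emph{Case 1} ($|S| > 2k$), \cref{lem:max2kterminals} legitimises branching over a size-$(2k+1)$ subset of $S$, and each branch reduces $k$ by $1$, yielding $R(k) \leq (2k+1)R(k-1) \leq 6k \cdot 6^{k-1}(k-1)! = 6^k k!$ for $k \geq 1$. In \emph{Case 2} ($|S| \leq 2k$), the algorithm iterates over $q \in S$ and branches based on the minimum clusters containing $q$; I will show the per-$q$ contribution to $R(k)$ is at most $3R(k-1)$, so summing over $|S| \leq 2k$ values of $q$ gives $R(k) \leq 6k R(k-1) \leq 6^k k!$.

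For a fixed $q \in S$: in the \emph{if} sub-case ($D = \{\{q,y\},\{q,z\}\}$) there are at most $3$ branches with reduction $1$, contributing $\leq 3R(k-1)$. In the \emph{else} sub-case, for each of the at most two minimum clusters $C$ containing $q$, there are at most $|C|$ branches with reduction $|C|-1$. A cluster with $|C|=2$ contributes $\leq 2R(k-1)$, while for $|C| \geq 3$ the inductive ratio
\[
\frac{|C|\, R(k-|C|+1)}{R(k-1)} \;=\; \frac{|C|}{6^{|C|-2}(k-1)(k-2)\cdots(k-|C|+2)}
\]
is at most $1$, so that cluster contributes at most $R(k-1)$. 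Adding contributions from the at most two clusters yields at most $3R(k-1)$ in every combination.

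The main obstacle is the \emph{else} sub-case: large clusters look dangerous because of their $|C|$-way branching, but in fact the resulting larger $k$-reduction shrinks $R(k-|C|+1)$ enough to absorb the $|C|$ factor, as captured by the ratio above. This is the crux of the analysis; everything else is routine. Combining the $R(k) \leq 6^k k!$ bound with the $O(kn^2)$ per-call work yields the stated $O(6^k k! \cdot k n^2)$ running time.
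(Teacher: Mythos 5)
Your proposal is correct and follows essentially the same route as the paper's proof: an inductive $6^k k!$ bound on the branching with the same case split ($|S|>2k$ giving $2k+1$ branches of reduction one; otherwise at most $2k$ terminals, each contributing at most three $(k-1)$-branches because in the else sub-case at most one minimal cluster has size two and larger clusters are absorbed, your ratio $|C|\,6^{-(|C|-2)}/\bigl((k-1)\cdots(k-|C|+2)\bigr)\le 1$ being exactly the paper's monotonicity of $x\mapsto x\,6^{k-x+1}$). The only cosmetic difference is that you separate recursion-tree size from per-call work while the paper runs a single recurrence on total time; since you only bound the \emph{leaves} of the recursion tree, you should add that chains of single recursive calls have length at most $k$ (as $k$ strictly decreases), so the total number of calls, and hence the stated $O(6^k k!\cdot k\cdot n^2)$ bound with $O(n^2)$ work per call, still follows.
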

\begin{proof}
	Let $f(n)$ be an upper bound for the running time of the non-recursive part of the function.
	We claim that the maximum running time $t(n,k)$ for running the algorithm on trees with $n$ leaves and parameter $k$ is bounded by $6^{k}k!
		kf(n)$.

	For $k=0$ it is clear that this claim holds.
	Now we will prove that it holds for any call, by assuming that the bound holds for all subcalls.

	If $|S|>2k$, then the algorithm branches into $2k+1$ subcalls.
	The total running time can then be bounded by
	\begin{align*}
		(2k+1)t(n,k-1) + f(n) & \leq (2k+1)6^{k-1}(k-1)!
		(k-1)f(n)+ f(n)
		\\&\leq 6^{k}(k)!(k)f(n)\text{.}
	\end{align*}

	If the condition of the if-statement on \cref{line:if_split_3} is true, then for that $q$ the functions does $3$ subcalls with $k$ reduced by one.
	So the recursive part of the total running time for this $q$ is bounded by
	\begin{align*}
		3T(k-1)\leq 6^{k-1}(k-1)!
		(k-1)f(n) = 3^{k}2^{k-1}(k-1)!(k-1)f(n) \text{.}
	\end{align*}
	If the condition on \cref{line:if_split_3} holds then there is at most one $d\in D$ with $|d|\leq 2$.
	Using this information we can bound the total running time of the subcalls that are done for $q$ in the else clause by
	\begin{align}
		 & \sum_{d\in D}|d|t(k-|d|+1)\leq \sum_{d\in D}|d|6^{k-|d|+1}(k-|d|+1)!
		(k-|d|+1)f(n)\nonumber                                                                 \\
		 & \leq \sum_{d\in D}|d|6^{k-|d|+1}(k-|d|+1)!(k-|d|+1)f(n)\nonumber                    \\
		 & \leq (k-1)!(k-1)f(n) \sum_{d\in D}|d|6^{k-|d|+1}                                    \\
		 & \leq (k-1)!(k-1)f(n)(2\cdot 6^{k-1} + 3\cdot 6^{k-2})\label{eq:decreasing_function} \\
		 & = (k-1)!(k-1)f(n)2^{k-1}(9\cdot 3^{k-2})\nonumber                                   \\
		 & =(k-1)!(k-1)f(n)2^{k-1}3^{k}
		\text{.}
	\end{align}
	Note that \cref{eq:decreasing_function} follows from  the fact that $x\mapsto x 6^{k-x+1}$ is a decreasing function for $x\in [1,\infty)$.
	So for each $q$ the running time of the subcalls is bounded by $(k-1)!
		(k-1)f(n)2^{k-1}3^{k}$.
	Now the total running time is bounded by 

	\begin{align}
		 & \phantom{\leq .
		} f(n) + (k-1)!(k-1)f(n)2^{k-1}3^{k}|S|         \\
		 & \leq  f(n) +   (k-1)!(k-1)f(n)2^{k-1}3^{k}2k \\
		 & \leq f(n) +   k!(k-1)f(n)6^{k}               \\
		 & \leq 6^{k}k!kf(n)
	\end{align}
	Because the non-recursive part of the function can be implemented to run in $O(n^2)$ time 
	the total running time of the function is $O(6^{k}k!
		\cdot k\cdot n^2)$.
\end{proof}

\begin{lem}
	Let $T$ be a set of non-binary trees.
	If $h_t(T)\leq k$, then \texttt{CherryPicking}$(T,k)$ from \cref{alg:non_binary_new} returns a cherry picking sequence for $T$ of weight at most $k$.
\end{lem}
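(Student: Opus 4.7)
The plan is to argue by strong induction on $k$. A key preliminary observation, specific to the setting of two non-binary trees, is that any minimum neighbour cover has size at most two, hence $w_T(x)\in\{0,1\}$ for every leaf $x$. In particular, after the preprocessing while-loop (which strips off every $w_T(x)=0$ leaf in $H(T)$) every surviving leaf in $H(T)$ has weight exactly one, which matches the decrement $k\mapsto k-1$ performed by each recursive call in \cref{alg:non_binary_new}.

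For the base case $k=0$, the hypothesis $h_t(T)\leq 0$ forces every leaf of the reduced trees to lie in $H(T)$ with weight zero, so the preprocessing loop collapses $T$ to a single leaf and the algorithm returns the accumulated weight-zero singleton. For the inductive step, iterated application of \cref{lem:nonbin:remove_trivial} shows that preprocessing preserves $h_t(T)\leq k$, and I may assume more than one leaf survives, so $h_t(T)\geq 1$ and $k\geq 1$.

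In the branch $|S|>2k$, the subset $S'$ of size $2k+1\geq 2h_t(T)+1$ meets the hypothesis of \cref{lem:max2kterminals}, which produces a terminal $x^\star\in S'\cap H(T)$ beginning an optimal CPS. Then $h_t(T\setminus\{x^\star\})\leq h_t(T)-w_T(x^\star)=h_t(T)-1\leq k-1$, so by induction the subcall $\texttt{CherryPicking}(T\setminus\{x^\star\},k-1)$ returns a CPS $\sigma$ of weight at most $k-1$, and $(x^\star)\mid\sigma$ is a CPS of weight at most $k$ for $T$. In the branch $|S|\leq 2k$ I apply \cref{lem:can_pick_min_cluster} to obtain a minimal cluster $C$ and an optimal CPS $s^\star$ whose first $|C|-1$ entries lie in $C$. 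Every minimal cluster contains a terminal, so some terminal $q\in C$ is visited by the outer loop; let $x$ be the unique element of $C$ absent from $s^\star_1,\ldots,s^\star_{|C|-1}$. If the guard on \cref{line:if_split_3} fires with $D=\{\{q,y\},\{q,z\}\}$, then $C$ is one of these size-$2$ clusters and $s^\star_1\in\{q,y,z\}$, so one of the three recursive branches corresponds to $s^\star_1$; induction with parameter $k-1$ closes this subcase. Otherwise the nested for-loops apply: taking $C'=C$ and the specific $x$ above I have $C\setminus\{x\}\subseteq H(T)$, so the subcall on $(T\setminus(C\setminus\{x\}),k-t)$ with $t=|C|-1$ satisfies $h_t(T\setminus(C\setminus\{x\}))\leq h_t(T)-t\leq k-t$, and the inductive hypothesis produces a CPS of weight at most $k-t$; concatenation with the prefix $(c_1,\ldots,c_t)$ of weight $t$ yields a CPS for $T$ of weight at most $k$.

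The main obstacle is the verification that $C\setminus\{x\}\subseteq H(T)$ inside the inner \texttt{if}-guard of the nested for-loops. This is not explicit in the statement of \cref{lem:can_pick_min_cluster}, but is extracted from its proof: that lemma shows that for any permutation of $C\setminus\{x\}$ there is a CPS whose prefix is that permutation, which in particular forces every element of $C\setminus\{x\}$ to lie in $H(T)$. A secondary point is to check that the minimal cluster $C$ really is enumerated in the set $D$ of minimum clusters containing the chosen terminal $q$; this holds because $C$ is nontrivial, contains $q$, and has no proper nontrivial subcluster, so $C$ is among the minimum clusters containing $q$.
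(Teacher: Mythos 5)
Your proposal is correct and, where it overlaps with the paper, follows the same strategy: induction on $k$, using \cref{lem:nonbin:remove_trivial} to justify the preprocessing loop and \cref{lem:max2kterminals} to handle the branch $|S|>2k$. The notable difference is one of completeness rather than method: the paper's own proof stops after the $|S|>2k$ case and never argues the else branch ($|S|\leq 2k$), nor does it verify the weight bound beyond a one-sentence remark. You fill both gaps — invoking \cref{lem:can_pick_min_cluster} to produce a minimal cluster $C$ and an optimal sequence whose first $|C|-1$ entries lie in $C$, checking that $C$ is actually enumerated in $D$ for some terminal $q\in C$, extracting $C\setminus\{x\}\subseteq H(T)$ from the proof (not the statement) of that lemma, and tracking that each pick after preprocessing has weight exactly one so that the decrements $k\mapsto k-1$ and $k\mapsto k-t$ are matched by the weight actually incurred. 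Two small caveats: your opening observation that $w_T(x)\in\{0,1\}$ is specific to $|T|=2$, which is also what \cref{lem:can_pick_min_cluster} requires, so your proof (like the algorithm's analysis) really establishes the lemma for two trees even though the statement says "a set of non-binary trees"; and in the else branch the claim $h_t(T\setminus(C\setminus\{x\}))\leq h_t(T)-t$ needs the first $t$ picks of the reordered optimal sequence to each have weight exactly $1$ at the moment they are picked — this follows from optimality (otherwise the reordered sequence would beat $h_t(T)$), and is worth one explicit sentence.
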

\begin{proof}
	First we will prove with induction on $k$ that if $h_t(T)\leq k$ then a sequence is returned.

	For $k=0$ it is true because if $h_t(T)=0$, as long as $\Leav(T)>1$ then $|H(T)|>0$ and all elements of $H(T)$ will have zero weight, so they are removed on \cref{line:remove_trivial}.
	After that $\Leav(T)=1$ so an empty sequence will be returned, which proves that the claim is true for $k=0$.

	Now assume that the claim holds for for $k<k'$ and assume that $h_t(T)\leq k'$.
	Now we will prove that a sequence is returned by \texttt{CherryPicking}($T,k$) in this case.
	After removing an element $x$ with weight zero on \cref{line:remove_trivial} we still have $h_t(T)\leq k'$ (\cref{lem:nonbin:remove_trivial}).
	If $|\Leav(T)|=1$, an empty sequence is returned.
	If this is not the case then $0<h_t(T)\leq k$, so the else if is not executed.

	If $|S|>2k$ then from \cref{lem:max2kterminals} it follows that for $S'\subseteq S$ with $|S'|=2k+1$ there is at least one $x\in S'$ such that $h_t(T\setminus \{x\})\leq k'-1$.
	Now from the induction hypothesis it follows that \texttt{CherryPicking}$(T\setminus \{x\},k')$ returns at least one sequence, which implies that $R$ is not empty.
	Because of that the main call will return at least one sequence, which proves that the claim holds for $k=k'$.

	The only thing left to prove is that every returned sequence is a cherry picking sequence for $T$.
	This follows from the fact that only elements from $H(T)$ are appended to $s$ and that $R$ consists of cherry picking sequences for $T\setminus \{s_1,\ldots,s_t \}$.
\end{proof}


\section{Experimental results}
\label{sec:implementation}
\FloatBarrier
We developed implementations of \cref{alg:better_temporal}, \cref{alg:non_binary_new} and \cref{alg:better_temporal2}, \leo{which are freely available}~\cite{sjb_implementation}.
To analyse the performance of the algorithms we made use of dataset generated in \cite{van_iersel_practical_2019} for experiments with an algorithm for construction of tree-child networks with a minimal hybridization number.
\subsection{\cref{alg:better_temporal}}
In \cref{fig:time_k_plot} the running time of \cref{alg:better_temporal} on the dataset from \cite{van_iersel_practical_2019} is shown.
The results are consistent with the bound on the running time that was proven in \cref{sec:algorithm}.
Also, the algorithm is able to compute solutions for relatively high values of $k$, indicating that the algorithm performs well in practice.
\begin{figure}\centering
	\includegraphics[scale=.8]{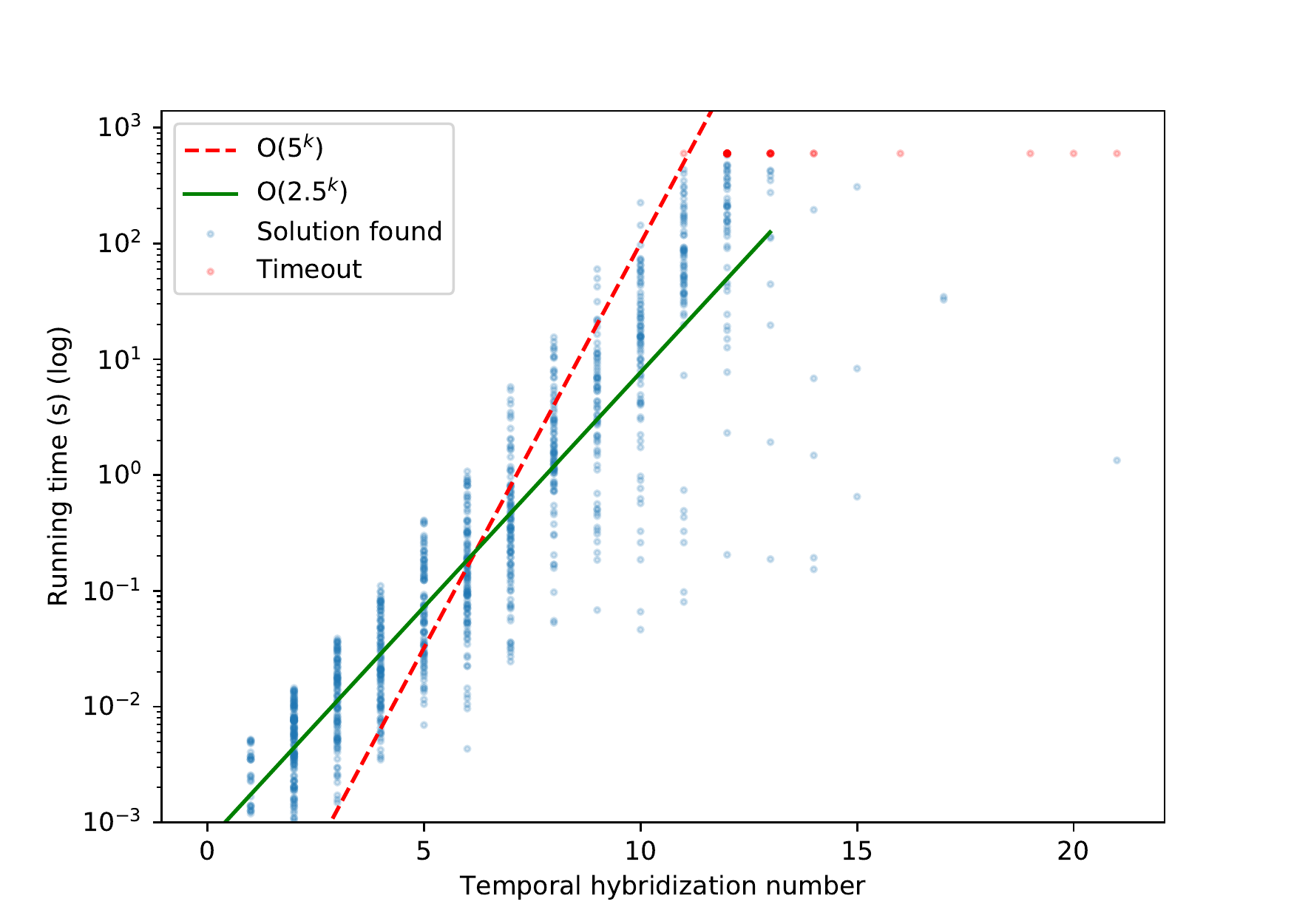}
	\caption{The running time of \cref{alg:better_temporal} on problem shown relative to the corresponding temporal hybridization number.
		A timeout of 10 minutes was used.
		Instances for which the algorithm timed out are shown in red at the value of $k$ where they timed out.
		On the log scale the exponential relation is clearly visible.
		However fitting an exponential function on the data yields a $O(2.5^k)$ function for temporal hybridization number $k$, while the worst-case bound that we proved is $O(5^k)$.
		\label{fig:time_k_plot}
	}
\end{figure}

The authors of \cite{van_iersel_practical_2019} also provide an implementation of their algorithm for tree-child algorithms.
The implementation contains several optimizations to improve the running time.
One of them is an operation called cluster reduction \cite{linz_cluster_2011}.
The implementation is also multi-threaded.
In \cref{fig:runtime_comparison} we provide a comparison of the running times of the tree-child algorithm with \cref{alg:better_temporal}.
In this comparison we let both implementations use a single thread, because our implementation of the algorithm for computing the hybridization number does not support multithreading.
The implementation could however be modified to solve different subproblems in different threads which will probably also result in a significant speed-up.
In \cref{alg:better_temporal} we see that the difference in time complexity between the $O((8k)^k)$ algorithm and the $O(5^k)$ algorithm is also observable in practice.
\begin{figure}\centering
	\includegraphics[scale=.8]{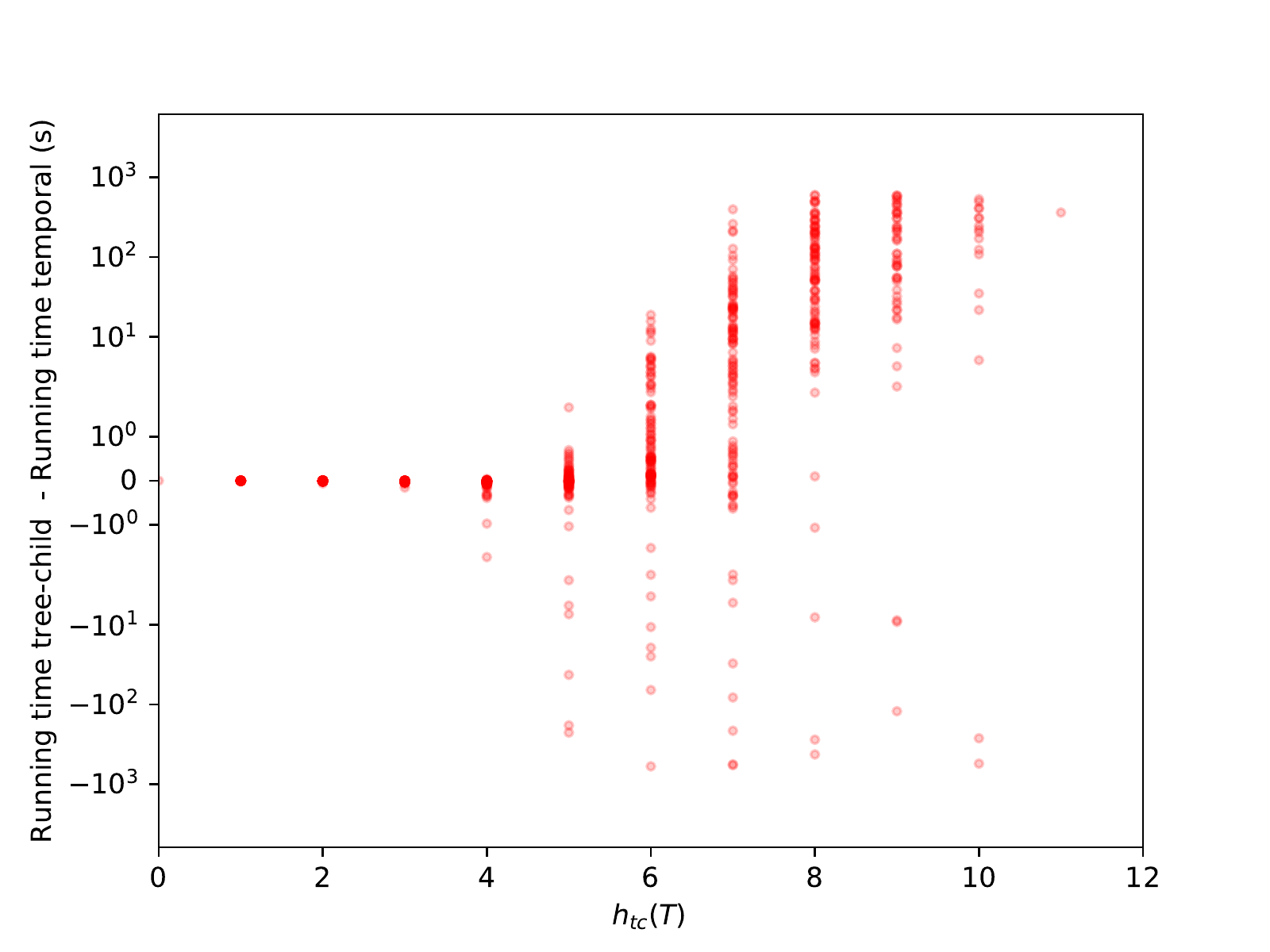}
	\caption{Difference between the running time of \cref{alg:better_temporal} and the algorithm for tree-child networks from \cite{van_iersel_practical_2019}.}
	\label{fig:runtime_comparison}
\end{figure}

\subsection{\cref{alg:non_binary_new}}
We used the software from \cite{van_iersel_practical_2019} to generate random binary problem instances and afterwards randomly contracted edges in the trees to obtain non-binary problem instances.
We used this dataset to test the running time of \cref{alg:non_binary_new}.
The results are shown in \cref{fig:time_k_plot_nonbinary}.
We see that the algorithm is usable in practice and has a reasonable running time.
\begin{figure}\centering
	\includegraphics[scale=.8]{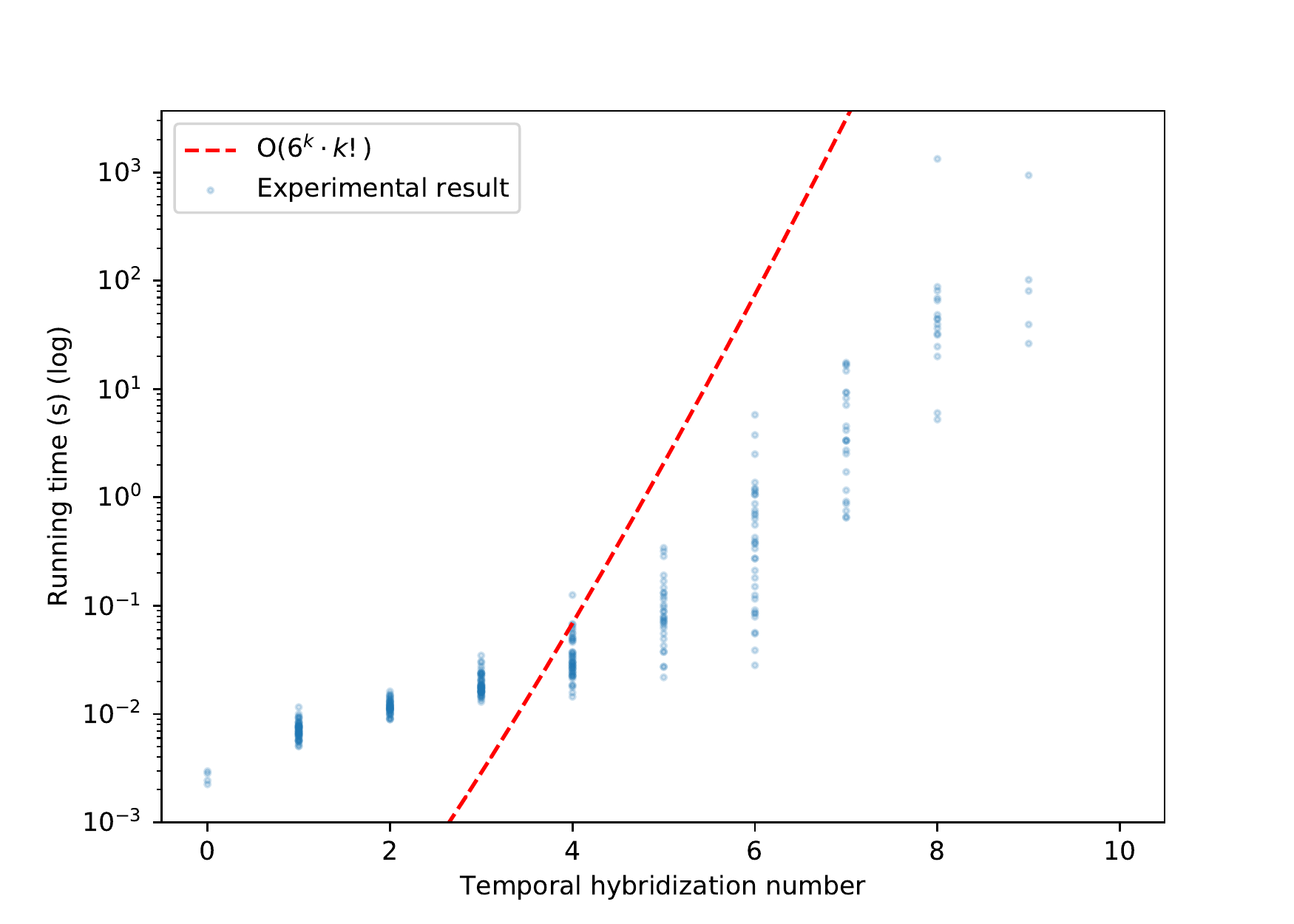}
	\caption{Running time of \cref{alg:non_binary_new} on a generated set of instances consisting of trees with average out-degree $2.5$ relative to the temporal hybridization number.
		A timeout of 10 minutes was used.
	}
	\label{fig:time_k_plot_nonbinary}
\end{figure}

\begin{figure}\centering
	\includegraphics[scale=.8]{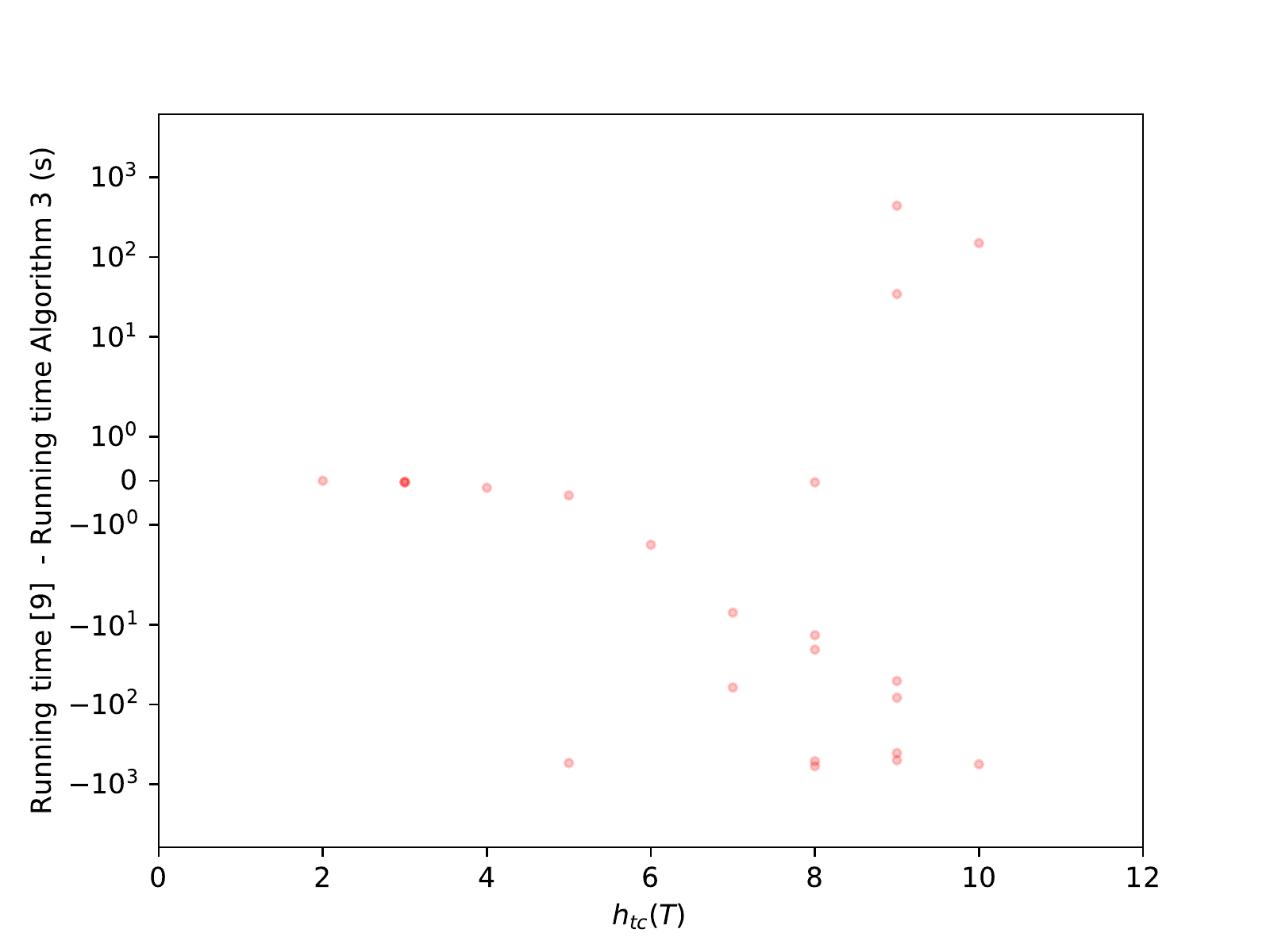}
	\caption{Difference between the running time of \cref{alg:better_temporal2} and the algorithm for constructing tree-child networks from \cite{van_iersel_practical_2019} on all non-temporal instances in the dataset from \cite{van_iersel_practical_2019}.}
	\label{fig:semi_temporal_vs_treechild_runtime}
\end{figure}
\FloatBarrier

\subsection{\cref{alg:better_temporal2}}
\cref{alg:better_temporal2} was tested on all non-temporal instances in the dataset from \cite{van_iersel_practical_2019}.
In \cref{fig:semi_temporal_vs_treechild_runtime} the running time of \cref{alg:better_temporal2} is compared to that of the algorithm from \cite{van_iersel_practical_2019}.
The data show that the algorithm from \cite{van_iersel_practical_2019} is often faster than \cref{alg:better_temporal2}. \leo{However, there also also instances for which \cref{alg:better_temporal2} is much faster. Hence, in practice it can be worthwile to run this algorithm on instances that cannot be solved by the algorithm from \cite{van_iersel_practical_2019} in a reasonable time.}
It should also be noted that we only tested the algorithms on a relatively small dataset. 



\section{Discussion}
\label{sec:further_research}
\Cref{alg:better_temporal}, the algorithm for constructing minimum temporal hybridization networks, has a significantly better running time than the algorithms that were known before.
The results from the implementation show that the algorithm also works well in practice.
However this implementation could still be improved, for example by making use of parallelization.

While we also present an algorithm that finds optimal temporal networks for nonbinary trees, the running time of this algorithm is significantly worse and, moreover, it only works for pairs of trees.
An open question is whether this could be improved to a running time of $O(c^k\cdot poly(n))$ for some $c\in \mathbb{R}$, perhaps using techniques similar to our approach for binary trees.
Another important open problem is whether Temporal Hybridization is FPT for a set of more than two non-binary input trees.

In \cref{sec:non_temporal} a metric is provided to quantify how close a hybridization network is to being temporal.
However, other, possibly more biologically meaningful, metrics could also be used for this purpose.
An open problem is whether an FPT algorithm exists that solves the decision problem associated with these metrics.

\bibliographystyle{plain}
\bibliography{library}

\begin{thebibliography}{10}

\bibitem{bapteste_networks_2013}
Eric Bapteste, Leo van Iersel, Axel Janke, Scot Kelchner, Steven Kelk, James~O.
  McInerney, David~A. Morrison, Luay Nakhleh, Mike Steel, Leen Stougie, and
  James Whitfield.
\newblock Networks: expanding evolutionary thinking.
\newblock {\em Trends in Genetics}, 29(8):439--441, August 2013.

\bibitem{bordewich_computing_2007}
M.~Bordewich and C.~Semple.
\newblock Computing the {Hybridization} {Number} of {Two} {Phylogenetic}
  {Trees} {Is} {Fixed}-{Parameter} {Tractable}.
\newblock {\em IEEE/ACM Transactions on Computational Biology and
  Bioinformatics}, 4(3):458--466, July 2007.

\bibitem{bordewich_computing_2007-2}
Magnus Bordewich and Charles Semple.
\newblock Computing the minimum number of hybridization events for a consistent
  evolutionary history.
\newblock {\em Discrete Applied Mathematics}, 155(8):914--928, April 2007.

\bibitem{sjb_implementation}
Sander Borst.
\newblock Temporal hybridization number algorithm implementations, 2020.
\newblock http://github.com/mathcals/temporal\_hybridization\_number.

\bibitem{cardona_comparison_2007}
Gabriel Cardona, Francesc Rossello, and Gabriel Valiente.
\newblock Comparison of {Tree}-{Child} {Phylogenetic} {Networks}.
\newblock {\em arXiv:0708.3499 [cs, q-bio]}, August 2007.
\newblock arXiv: 0708.3499.

\bibitem{docker_deciding_2019}
Janosch Döcker, Leo van Iersel, Steven Kelk, and Simone Linz.
\newblock Deciding the existence of a cherry-picking sequence is hard on two
  trees.
\newblock {\em Discrete Applied Mathematics}, 260:131--143, May 2019.

\bibitem{humphries_cherry_2013}
Peter~J. Humphries, Simone Linz, and Charles Semple.
\newblock Cherry {Picking}: {A} {Characterization} of the {Temporal}
  {Hybridization} {Number} for a {Set} of {Phylogenies}.
\newblock {\em Bulletin of Mathematical Biology}, 75(10):1879--1890, October
  2013.

\bibitem{humphries_complexity_2013}
Peter~J. Humphries, Simone Linz, and Charles Semple.
\newblock On the complexity of computing the temporal hybridization number for
  two phylogenies.
\newblock {\em Discrete Applied Mathematics}, 161(7-8):871--880, May 2013.

\bibitem{linz_hybridization_2009}
S.~Linz and C.~Semple.
\newblock Hybridization in {Nonbinary} {Trees}.
\newblock {\em IEEE/ACM Transactions on Computational Biology and
  Bioinformatics}, 6(1):30--45, January 2009.

\bibitem{linz_cluster_2011}
Simone Linz and Charles Semple.
\newblock A {Cluster} {Reduction} for {Computing} the {Subtree} {Distance}
  {Between} {Phylogenies}.
\newblock {\em Annals of Combinatorics}, 15(3):465--484, September 2011.

\bibitem{linz_attaching_2019}
Simone Linz and Charles Semple.
\newblock Attaching leaves and picking cherries to characterise the
  hybridisation number for a set of phylogenies.
\newblock {\em Advances in Applied Mathematics}, 105:102--129, April 2019.

\bibitem{mallet_how_2016}
James Mallet, Nora Besansky, and Matthew~W. Hahn.
\newblock How reticulated are species?
\newblock {\em BioEssays}, 38(2):140--149, February 2016.

\bibitem{piovesan_simple_2013}
T.~Piovesan and S.~M. Kelk.
\newblock A {Simple} {Fixed} {Parameter} {Tractable} {Algorithm} for
  {Computing} the {Hybridization} {Number} of {Two} ({Not} {Necessarily}
  {Binary}) {Trees}.
\newblock {\em IEEE/ACM Transactions on Computational Biology and
  Bioinformatics}, 10(1):18--25, January 2013.

\bibitem{soucy_horizontal_2015}
Shannon~M. Soucy, Jinling Huang, and Johann~Peter Gogarten.
\newblock Horizontal gene transfer: building the web of life.
\newblock {\em Nature Reviews Genetics}, 16(8):472--482, August 2015.

\bibitem{van_iersel_practical_2019}
Leo van Iersel, Remie Janssen, Mark Jones, Yukihiro Murakami, and Norbert Zeh.
\newblock A {Practical} {Fixed}-{Parameter} {Algorithm} for {Constructing}
  {Tree}-{Child} {Networks} from {Multiple} {Binary} {Trees}.
\newblock {\em arXiv:1907.08474 [cs, math, q-bio]}, July 2019.
\newblock arXiv: 1907.08474.

\bibitem{van_iersel_kernelizations_2016-1}
Leo van Iersel, Steven Kelk, and Celine Scornavacca.
\newblock Kernelizations for the hybridization number problem on multiple
  nonbinary trees.
\newblock {\em Journal of Computer and System Sciences}, 82(6):1075--1089,
  September 2016.

\end{thebibliography}

\newpage

\appendix
\section{Omitted proofs}

\begin{lemSemiTemporalTequenceToNetwork}
   \lemSemiTemporalTequenceToNetworkText
\end{lemSemiTemporalTequenceToNetwork}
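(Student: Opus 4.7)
The plan is to construct $\N$ and a semi-temporal labeling $t$ simultaneously by reverse induction on the sequence, following the construction of \cite{linz_attaching_2019} so that the standard argument yields $r(\N) \leq w_T(s)$, while choosing the time labels so that a new non-temporal arc is created only when the pair being processed is a non-temporal element of $s$. I start with $\N_{r+1}$ being the single edge $\rho \to x_{r+1}$, set $t(\rho) := 0$, and assign every leaf a large time $M$. For $i = r, r-1, \ldots, 1$, I extend $\N_{i+1}$ to $\N_i$ by processing $(x_i, y_i)$; by the tree-child property $y_j \neq x_i$ for all $j > i$, the leaf $y_i$ is already present in $\N_{i+1}$ and has a parent $p_{y_i}$.

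Two cases arise. If $x_i$ is not yet in $\N_{i+1}$, then I insert a fresh tree vertex by subdividing the arc into $y_i$, attach $x_i$ as a leaf-child of this new vertex (so $(x_i, y_i)$ becomes a cherry), and give the new vertex any time strictly between that of its subdivision parent and $M$; this step contributes no reticulation arc. If $x_i$ is already present, then $p_{x_i}$ is, or is about to become, a reticulation, and I add an arc $(p_{y_i}, p_{x_i})$, subdividing incident arcs as needed so that $p_{y_i}$ has an outgoing arc distinct from $(p_{y_i}, y_i)$. The central design choice is the time labeling: when $p_{x_i}$ is first introduced as a reticulation at the largest index $k$ with $x_k = x_i$, I fix $t(p_{x_i}) := \tau$ for a suitable value $\tau$, and for each subsequent step of the block involving $x_i$, I assign the time $\tau$ to every freshly-subdivided vertex playing the role of $p_{y_j}$.

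When $(x_i, y_i)$ is temporal, the occurrences of $x_i$ in $s$ at indices $\geq i$ form a contiguous block $(x_i, y_i), \ldots, (x_i, y_{i+q})$. Processing this block backwards, $p_{x_i}$ is introduced at step $i+q$ with time $\tau$, and for every $j \in \{i, \ldots, i+q-1\}$ the newly subdivided vertex $p_{y_j}$ can be given time $\tau$ because no pair with a different first coordinate has been processed in between, so no previously placed label forces $p_{y_j}$ strictly above $\tau$. Hence every reticulation arc created inside the block is temporal. For a non-temporal pair, by contrast, an intervening pair $(x_j, y_j)$ with $x_j \neq x_i$ may have raised vertex times above $\tau$, so when $(x_i, y_i)$ is processed, $p_{y_i}$ may be forced to satisfy $t(p_{y_i}) > \tau$, creating at most one non-temporal arc that I charge to this non-temporal element. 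Combined with the standard reticulation count, this yields $r(\N) \leq w_T(s)$ and $d(\N, t) \leq d(s)$.

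The principal technical obstacle is the bookkeeping of time labels through the repeated subdivisions: the value $\tau$ must lie in the admissible time interval dictated by the labels already attached to the endpoints of each subdivided arc, and this must be verified to remain consistent across every step within a temporal block. Equally, one must check that each non-temporal element generates at most one non-temporal arc and that the charging is injective, which relies on a careful ordering of when labels are fixed relative to when subdivisions occur.
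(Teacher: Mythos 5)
Your proposal is essentially the paper's proof: the paper likewise builds $\N$ backwards through $s$ following the Linz--Semple construction while maintaining a semi-temporal labeling, groups each maximal run of elements sharing a first coordinate $x$ (which is exactly a run of all-temporal or all-non-temporal elements), gives every reticulation arc created for that run the common time of $p_x$, and charges each non-temporal arc to a non-temporal element of $s$. The ``principal technical obstacle'' you flag is discharged in the paper by merging each run into a single element $(x,Y)$ and only then setting $\tau := \max\{\max_{y\in Y} t(p_y),\, t(p_x)-1\}$ over the \emph{whole} run, assigning $t(v_y):=\tau+1$ and re-assigning $t(y):=\tau+2$ for every $y\in Y$ --- i.e., the leaf times are pushed upward as needed rather than fixed to a global $M$ in advance, which is precisely what guarantees a single admissible $\tau$ for every arc subdivided in the block (your stated reason, that no intervening pair has a different first coordinate, is not by itself enough, since the parents $p_{y_j}$ were labeled before the block is reached in the backward pass).
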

\begin{proof}
	This can be proven by constructing a tree-child network from the tree-child sequence as described in \cite[Proof of Theorem 2.2]{linz_attaching_2019}.
	We will show that a semi-temporal labeling satisfying our constraints exists for the resulting network.
	We will write
	\begin{align*}
		s=(x_1,y_1),\ldots, (x_r,y_r),(x_{r+1},-)
	\end{align*}
	Now we merge all consecutive elements $(x_{i},y_{i}), (x_{i+1},y_{i+1}),\ldots,(x_{i+j},y_{i+j})$ for which $x_{i}=x_{i+1}=\cdots =x_{i+j}$ into one element $(x_i, \{y_{i},y_{i+1},\ldots, y_{i+j} \})$ and call the resulting sequence $s'$.
	Call an element of this sequence \emph{temporal} if all corresponding elements in $s$ are temporal.
	Call it \emph{non-temporal} if all corresponding elements in $s$ are non-temporal.
	Observe that it can not happen that some of the corresponding elements are temporal while some are non-temporal.
	\begin{enumerate}
		\item Let $\N_{r+1}$ be the network consisting of root $\rho$, vertex $x_{r+1}$ and edge $(\rho, x_{r+1})$.
		      Set $i:=r$.
		      Set $t(\rho):=0$ and $t(x_{r+1}):=1$.
		\item If $i=0$, contract all edges with in- and out-degree $1$ in $\N_{1}$ and return the resulting network together with $t_1$.
		\item Set $t_{i}:=t_{i+1}$.
		\item For the element $s'_i=(x, Y)$ do the following:
		      \begin{enumerate}
			      \item \markj{If $s'_i$ is a temporal element then} $x\notin \mathcal{L}(\N_{{i+1}})$. 
			            In this case let $\N_i$ be the network obtained from $\N_{i+1}$ by adding vertex $x$, vertex $p_x$, edge $(p_x,x)$, subdividing edge $(p_y,y)$ by vertex $v_y$  and adding edge $(v_y,p_x)$ for all $y\in Y$.
			            Set $t_i(p_x):=-\infty$.
			            \label{item:case_edge_two}
			      \item \markj{Otherwise $s'_i$ is a non-temporal element} and $x\in \mathcal{L}(\N_{{i+1}})$.
			            In this case let $\N_i$ be the network obtained from $\N_{i+1}$ by subdividing  $(p_y, y)$ for all $y\in Y$ with a new vertex $v_y$ and adding the edge $(v_y,p_x)$.
			            \label{item:case_edge_one}
		      \end{enumerate}
		\item Set $\tau = \max \{\max_{y\in Y}t_{i}(p_y),t_i(p_x)-1\}$.
		      For all $y\in Y$ set $t_i(v_y):=\tau + 1$ and $t_i(y):= \tau + 2$.
		      If $s_i'$ is a temporal element set \markj{$t_i(p_x):=\tau+1$ and $t_i(x):=\tau+2$.} 
		\item Decrease $i$ by one.
		      Go to step 2.
	\end{enumerate}
	Note that the construction of the network is equivalent to the one described in \cite[Proof of Theorem 2.2]{linz_attaching_2019}, where it is also proven that the resulting network is a tree-child network that is fully reduced by $s$.
	The only thing we have to prove is that $t$ is a semi-temporal labeling of $\N$ with $d(\N, t)\leq d(s)$.

	We will prove with induction on $i$ that $t_i$ is a semi-temporal labeling for $\N_i$.
	For $N_n$ it is clear that this is true.
	Consider an arbitrary edge $(u,v)$ in $\N_i$.
	If the edge was also in $\N_{i+1}$, then $t_{i+1}(u)=t_{i}(u)$ and $t_{i+1}(v)=t_{i}(v)$, so the edge satisfies the conditions for being semi-temporal.

	Now we will go through all newly introduced edges in $\N_i$ and show that they satisfy the conditions for being semi-temporal.
	\begin{itemize}
		\item In \cref{item:case_edge_two} edges $(p_y,v_y)$, $(v_y, y)$, and $(v_y, p_x)$ are created for all $y\in Y$ and $(p_x,x)$ is created.
		      Because $t_i(v_y) =\tau+1>t_i(p_y)$ and $t_i(y)=\tau +2 > \tau +1=t_i(v_y)$ the first two edges are semi-temporal.
		      Because in this case $s_i'$ is a temporal element \markj{$t_i(p_x):=\tau +1$ and $t_i(x):=\tau +2$} will be set in step 5, \markj{so $(p_x,x)$ is semi-temporal}.
		      Consequently \markj{$t_i(v_y)=\tau +1 =t_i(p_x)$}, so $(v_y, x)$ is also semi-temporal.
		\item In \cref{item:case_edge_one} edges $(p_y, v_y)$, $(v_y,y)$, $(v_y,p_x)$ for all $y\in Y$. 
		\markj{Note that before these edges are added we already have $t_i(w)=t_i(p_x)$ for some parent $x$ of $p_x$.} 
		      From step 5 it follows that $t_i(v_y) =\tau+1>t_i(p_y)$, that $p(y)=\tau +2 > \tau +1$ and that $t(p_x)\leq \tau+1=t(v_y)$.
		      Therefore all of the  created edges are semi-temporal.
		      We also have \markj{$t_i(w)=t_i(p_x) < t_i(x)$, so these edges remain semi-temporal.}  
	\end{itemize}

	Note that the only place where non-temporal reticulation edges can be introduced is in \cref{item:case_edge_one} in the creation of edges $(v_y,p_x)$ for all $y\in Y$.
	This only happens for non-temporal items $s'_i$ and for each of this item at most $|Y|$ non-temporal reticulation edges are created, so $d(\N_i, t_i) \leq d(\N_{i+1}, t_{i+1}) + |Y|$.
	Because a non-temporal element $(x, Y)$ in $s'$ corresponds to $|Y|$ non-temporal elements in $s$, this implies that $d(\N,t) \leq d(s)$.
\end{proof}

\begin{lemSemiTemporalNetworkToSequence}
   \lemSemiTemporalNetworkToSequenceText
\end{lemSemiTemporalNetworkToSequence}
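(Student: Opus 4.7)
The plan is to adapt the construction of a tree-child sequence from a tree-child network given in \cite[Lemma~3.4]{linz_attaching_2019}, choosing the order of reductions to exploit a fixed semi-temporal labeling.

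I would first fix a semi-temporal labeling $t^*$ of $\N$ with $d(\N,t^*)=d(\N)$; under $t^*$, an in-arc $(u,v)$ of a reticulation $v$ is non-temporal iff $t^*(u)>t^*(v)$, and there are exactly $d(\N)$ such arcs in total. I then construct $s$ iteratively: starting with $\N_0:=\N$, at each step identify a cherry or reticulated cherry $(x,y)$ in $\N_i$ (whose existence is guaranteed by the standard tree-child argument used in \cite{linz_attaching_2019}), append $(x,y)$ to $s$, delete the corresponding arc---the leaf arc $(p_x,x)$ for a cherry, or the hybridization arc $(p_y,p_x)$ for a reticulated cherry---and simplify to obtain $\N_{i+1}$. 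The restriction of $t^*$ to $\N_i$ remains a semi-temporal labeling of $\N_i$, so the notions of temporal and non-temporal in-arc transfer consistently to each $\N_i$, and the hybridization arcs of $\N_i$ are a subset of those of $\N$. The Linz--Semple analysis already yields a full tree-child sequence with $w_T(s)\leq r(\N)$, so only the bound $d(s)\leq d(\N)$ requires new work.

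The key new ingredient is the ordering rule. For each reticulation $v$ of $\N$ with leaf child $x$, the construction deletes $d^-(v)-1$ of the in-arcs of $v$ before $v$ is suppressed, after which $x$ is reduced once more (via a subsequent cherry, or as the final leaf $x_{r+1}$). I would arrange the processing so that the last in-arc of $v$ deleted is temporal, and moreover so that the deletions of temporal in-arcs of $v$ together with the subsequent $x$-reduction form a single contiguous block of $x$-first-coordinate occurrences in $s$. Under this rule, the occurrences of $x$ as a first coordinate split into two parts: an initial collection corresponding to the non-temporal in-arcs of $v$---each of which is then a non-temporal element of $s$---followed by one contiguous last block of the temporal reductions, which contributes no non-temporal element (since it is the last block of $x$-occurrences). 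Charging each non-temporal element of $s$ to the distinct non-temporal in-arc whose deletion created it yields an injection from non-temporal elements of $s$ into non-temporal hybridization arcs of $\N$, hence $d(s)\leq d(\N,t^*)=d(\N)$.

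The main obstacle is to show that this ordering rule is globally realizable. Reductions for different reticulations may need to interleave, and deleting a specific in-arc $(u,v)$ via a reticulated cherry requires that $u$ currently has a leaf child in $\N_i$, which depends on earlier simplifications. My plan to address this is to argue inductively that at every step a reduction compatible with the rule can be found---by always prioritizing cherries and non-temporal reticulated cherries, and only committing to the temporal reticulated cherries of a given reticulation $v$ once they can be grouped contiguously with the subsequent $x$-reduction---using the existence of cherries or reticulated cherries in tree-child networks from \cite{linz_attaching_2019} together with a case analysis on the structure around each reticulation. Making this scheduling argument rigorous (and verifying that tree-child-ness of $s$ is preserved throughout) is the technical crux.
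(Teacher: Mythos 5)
Your overall strategy is the same as the paper's: adapt the Linz--Semple construction, fix an optimal semi-temporal labeling to classify each reticulated cherry as temporal or non-temporal, and charge every non-temporal element of $s$ to a distinct non-temporal hybridization arc, inheriting the weight bound $w_T(s)\leq r(\N)$ from the original construction. The comparison therefore comes down to the one step you explicitly leave open.

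That step --- the global realizability of your ordering rule --- is a genuine gap as written, but it is closed in the paper by a simpler device than the ``wait until the temporal reductions can be grouped'' scheduling you sketch. The paper applies a strict three-tier priority at every iteration: (i) if a cherry exists, reduce it (with the Linz--Semple tie-breaking rule that keeps the sequence tree-child), deleting that leaf; (ii) else, if a non-temporal reticulated cherry exists anywhere in the current network, reduce one, deleting its non-temporal arc; (iii) else every reticulated cherry present is temporal, so pick a reticulation leaf $x$, append $(x,q_1),\ldots,(x,q_t)$ for \emph{all} leaves $q_j$ currently forming a reticulated cherry with $x$, and delete $x$ outright. Because the reduced leaf is removed from the network in cases (i) and (iii), the elements produced there are the final, contiguous occurrences of their first coordinate and hence are never non-temporal; only case (ii) can create non-temporal elements, and each such step deletes a distinct non-temporal arc, giving $d(s)\leq d(\N,t)=d(\N)$. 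There is no interleaving problem to resolve: one never needs to predict when a reticulation's temporal reductions become groupable, because by the time tier (iii) fires the grouping is forced and harmless, and the existence of a cherry or reticulated cherry at every step is exactly what Linz--Semple already guarantee. Your additional requirement that ``the last in-arc of $v$ deleted is temporal'' is also unnecessary: a reticulation may lose all but one in-arc through tier (ii) and then be finished off as an ordinary cherry, and the accounting still works because it only needs an injection from non-temporal elements into non-temporal arcs, not the converse. Replacing your scheduling sketch with this priority rule makes the rest of your argument go through.
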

\begin{proof}
	We provide a way of constructing a tree-child sequence $s$ from a tree-child network $\N$ with semi-temporal labeling $t$ such that $d(s)=d(\N)$.
	We do this by modifying the proof from \cite[Lemma 3.4]{linz_attaching_2019}.
	Let $\rho$ denote the root of $\N$ and  let $v_1,\ldots, v_r$ denote the reticulations in the network.
	Let $\ell_\rho,\ell_1,\ldots, \ell_r$ denote the leaves at the end of the paths $P_\rho, P_1,\ldots, P_r$ starting at $v_1,\ldots, v_r$  respectively and consisting of only tree arcs.

	We will call a set $\{x, y\}$ with parents $p_x$ and $p_y$ in a given network a \emph{cherry} if $p_x=p_y$.
	We will call it a \emph{reticulated cherry} if $p_x$ and $p_y$ are joined by a reticulation edge $(p_y, p_x)$.
	In this case we call $x$ the \emph{reticulation leaf} of the cherry.
	We call such a reticulated cherry \emph{temporal} if $t(p_y)=t(p_x)$, otherwise we call it \emph{non-temporal}.

	Start off with an empty sequence $\sigma_0$.
	Set $\N_0:=\N$ and $i:=1$.

	\begin{enumerate}
		\item  If $\N_{i-1}$ consists of a single vertex $x$ then set $\sigma_i:=\sigma_{i-1} |((x,-))$ and return $\sigma_i$.
		\item If there is a cherry $\{x, y\}$ in $\N_{i-1}$, then
		      \begin{enumerate}
			      \item If one of $\{x,y\}$, say $x$, is an element of  $\{\ell_1, \ldots, \ell_r \}$ and $v_j$ is not a reticulation in $\N_{i-1}$ set $x_i:=x$ and $y_i:=y$.
			      \item Otherwise let $\{x_i, y_i\}:=\{x,y\}$ such that $x_i\notin \{\ell_p, \ell_1,\ldots, \ell_r \}$.
			      \item Set $\sigma_i=((x_i,y_i)) | \sigma_{i-1}$.
			            Let $\N_i$ be the tree-child network obtained from $\N_{i-1}$ by deleting $x_i$.
			      \item Go to step 5.
		      \end{enumerate}
		\item Else, if there is a non-temporal \markj{reticulated} cherry $\{x, y\}$ in $\N_{i-1}$ with $x$ the reticulation leaf then set $\sigma_i= \sigma_{i-1}|((x_i,y_i))$.
		      Let $\N_i$ be the tree-child network obtained from $\N_{i-1}$ by deleting the edge $(y_i,x_i)$ and suppressing vertices of both in-degree and out-degree one.
		\item Else, there has to be a temporal \markj{reticulated}  cherry $\{x, y\}$ in $\N_{i-1}$ with $x$ the reticulation leaf.
		      Let $q_1,\ldots,q_t$ be the set of leaves that $x$ is in a reticulation cherry with in $\N_{i-1}$.
		      Set $\sigma_i=\sigma_{i-1} | ((x,q_1),\ldots, (x,q_t))$.
		      Let $\N_i$ be the tree-child network obtained from $\N_{i-1}$ by deleting vertex $x$ and suppressing vertices of both in-degree and out-degree one.
		\item Increase $i$ and go to step 1.
	\end{enumerate}
	The proof that this yields a full tree-child sequence $s$ for $\N$ with $w_T(s)\leq r(\N)$ can be found in \cite[Lemma 3.4]{linz_attaching_2019}, so we will omit it here.
	Note that non-temporal elements can only be added to $s$ in step $3$ and each time this happens a non-temporal arc is removed from the network.
	Consequently the resulting tree-child sequence can not contain more non-temporal elements than the number of non-temporal arcs in $\N$.
	It follows that $d(s)\leq d(\N)$.
\end{proof}

\begin{lemTCProcedureReturnsSequence}
   \lemTCProcedureReturnsSequenceText
\end{lemTCProcedureReturnsSequence}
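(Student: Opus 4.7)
The plan is to prove the lemma by induction, following the structure of \cref{lem:procedure_returns_sequence} but with an additional case to handle non-temporal processing. A natural measure is the lexicographic triple $(p,\,k-P(C),\,n^2-|C|)$, with $p$ as the outer coordinate. I write $W(p,v,u)$ for the claim that, for every instance with these parameter values in which a sequence $s$ of the required properties exists, the call returns a non-empty set.

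First I would dispose of the early-return checks. \cref{lem:bound_w_pc_nontemp} rules out the guard on \cref{line2:first_if}; the guard on \cref{line2:third_if} is ruled out except when $|\mathcal{L}(T')|=1$, in which case \cref{line2:emptysequence} already returns a singleton (using that any sequence satisfying $C'$ picks every element of $\pi_1(C')$, which must therefore be contained in $\mathcal{L}(T')$). Correctness of \texttt{Pick} (\cref{lem:tc:correctness_pick}) transfers the witness $s$ from the outer instance to the reduced instance $(T',k',C')$ without loss of weight or of $d$-value.

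The two recursive branches at \cref{line2:if-statement,line2:else-if-statement} are treated as in the temporal case, invoking \cref{lem:tc:branch_in_two} respectively \cref{lem:tc:branch_in_three} to guarantee that one of the added-constraint subcalls is still consistent with $s$. In each subcall $p$ is unchanged, while $|C|$ strictly increases and $k-P(C)$ does not increase, so the induction hypothesis applies to a lexicographically smaller instance and supplies the desired non-empty set.

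The genuinely new case is the \textbf{else-if} on \cref{alg:line:alternative_algorithm}, reached when $p>0$ and neither temporal branching condition applies. The failure of those conditions together with the exit criterion of \texttt{Pick} implies that no leaf of $H(T')$ admits a legal ``temporal'' first move, i.e.\ $H(T')\setminus \pi_2(C')=\emptyset$. \cref{lem:first_non_temporal} then forces the first element of any minimum-length witness $s$ satisfying $C'$ to be non-temporal; and using \cref{lem:tc:trivial_cherries} to push the corresponding trivial cherry to the front of the sequence, one checks that this first element has the form $(x,y)$ with $y$ present in every tree of $T'$ and $x\notin \pi_2(C')$, i.e.\ a pair in the enumerated set $P$. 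The recursive call indexed by this pair has parameter $p-1$, hence lexicographically smaller measure, so the induction hypothesis yields a non-empty return value.

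The main obstacle is justifying the early return when $|P|>8k^\star$. Here I would invoke \cref{lem:tc:unique_cherries}: since \texttt{Pick} has exhausted the trivial cherries it is permitted to remove, each pair in $P$ corresponds to a distinct unique cherry of $T'$ (with at most two ordered realisations per cherry), so whenever a feasible sequence exists we have $|P|\le 8w_{T^\star}(s^\star|s)\le 8k^\star$. Consequently the early return only fires on infeasible instances and does not threaten correctness, closing the induction.
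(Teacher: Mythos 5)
Your proposal is correct and follows essentially the same route as the paper's proof: it rules out the early returns via \cref{lem:bound_w_pc_nontemp} and \cref{lem:tc:correctness_pick}, handles the two branching cases via \cref{lem:tc:branch_in_two} and \cref{lem:tc:branch_in_three}, and in the new case derives $H(T')\setminus\pi_2(C')=\emptyset$, applies \cref{lem:first_non_temporal}, shows the first element lies in $P$, and bounds $|P|\leq 8k^\star$ via \cref{lem:tc:unique_cherries}. The only difference is the induction measure --- you use the lexicographic triple $(p,\,k-P(C),\,n^2-|C|)$ where the paper inducts on $k$ and $n^2-|C|$ --- and both are well-founded here, so this is immaterial.
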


\begin{proof}
	Let $W(k,u)$ be the claim that if a tree-child sequence $s$ for $T$ of weight at most $k$ exists that satisfies constraint set $C$ with $n^2-|C|\leq u$ and $d(s)\leq p$, such that $s^\star|s$ is a tree-child sequence of weight at most $k^\star$,  then the algorithm will return a non-empty set.
	We will prove this claim with induction on $k$ and $n^2-|C|$.

	For the base case $k=0$, if a generalized cherry picking sequence of weight $k$ exists we must have that all cherries in $T$ are trivial cherries.
	Therefore $|\Leav(T')|=1$, and a non-empty set is returned.

	Note that we can never have a constraint set $C$ with \markj{$|C|>n^2$} 
	because $C\subseteq \mathcal{L}(T)^2$.
	Therefore $W(k,-1)$ is true for all $k$.

	Now suppose $W(k, n^2-|C|)$ is true for all cases where $0\leq k< k_b$ and all cases where $k=k_b$ and $n^2-|C| \leq u$.
	We consider the case where a sequence $s$ with $d(s)\leq p$ of weight at most $k=k_b+1$ exists for $T$ that satisfies $C$ and $n^2-|C|\leq u_b+1$ such that $s^\star|s$ is a tree-child sequence for $T^\star$ with $w_{T^\star}(s^\star|s)\leq k^\star$.
	Now we will prove that a non-empty set is returned by the algorithm.
	.

	\cref{lem:bound_w_pc_nontemp} implies that $k-P(C)\geq 0$, so the if-statement on \cref{line2:first_if} will not be satisfied.
	From \cref{lem:tc:correctness_pick} it follows that a tree-child cherry picking sequence $s'$ for $T'$ of weight at most $k'$ exists for $T'$ that satisfies $C'$.
	From the way the \texttt{Pick} works it follows that either $k'<k$ or $n^2-C'= n^2-C$.
	If $|\Leav(T')|=1$ then $\{()\}$ is returned and we have proven $W(k_b+1, u+1)$ to be true for this case.
	Otherwise $s'$ is not empty, so $k'-P(C')\geq w_T(s')>0$.
	Because $s'$ satisfies $C'$, $\pi_1(C')\subseteq \Leav(T')$.
	So the condition on \cref{line2:third_if} is not satisfied.

	Now we are left with three cases:
	\begin{enumerate}
		\item If there is a pair $ (x, y) \in G(T',C')$ with  $w_T(x)>0 \land x\in \pi_1(C')$, then from \cref{lem:tc:branch_in_two} it follows that $s$ satisfies either $C'\cup \{(x,y)\}$ or $C'\cup \{(y,x)\}$.
		      From our induction hypothesis it now follows that either \texttt{SemiTemporalCherryPicking}($T'$, $k'$, $p$, $C'\cup \{ (x,y ) \}$) or  SemiTemporalCherryPicking($T'$, $k'$, $p$, $C'\cup \{ (y,x ) \}$) will return a non-empty set.
		      Therefore $R$ will not be empty, so a non-empty set will be returned.
		\item Otherwise, if there is a pair $ (x,a) \in G(T',C')$ with $w_{T'}(x)>0 \land x\notin \pi_2(C')$, there is a $b\neq a$ with $(x,b)\in G(T',C')$, for the same reasons as in 	\cref{lem:exists_uncovered_pair} for the temporal case.
		      Now from \cref{lem:tc:branch_in_three} it follows that $s$ satisfies $C'\cup \{(a,x)\}$, $C'\cup \{(b,x)\}$ or $C'\cup \{(x,a),(x,b)\}$.
		      From our induction hypothesis it now follows that the corresponding subcall will return a non-empty set.
		      Therefore $R$ will not be empty, so a non-empty set will be returned.
		\item Because the conditions in both the if and the else-if statement are not satisfied it follows that 
		\markj{$H(T)\setminus \pi_2(C)$}
		is empty. 
		\markj{Indeed, any $x \in H(T)$ must have $w_{T'}(x) > 0$, as otherwise it would have been removed by \texttt{Pick}, and for any $x \in H(T)\setminus\pi_2(C)$ there exists at least one $y$ with $(x,y) \in G(T',C)$, as otherwise $\{(x,n) : n \in N_{T'}(x) \} \subseteq C'$ and again $x$ would be removed by \texttt{Pick}. Then either case 1 (if $x\in \pi_1(C)$) or case 2 (otherwise) would apply.}
		
		      Now from \cref{lem:first_non_temporal} it follows that $s_1$ has to be a non-temporal element.
		      Observe that for $s_1'=(x,y)$ we must have $x\notin \pi_2(C)$, because otherwise $s'$ has to contains some element $(z,x)$, but such an element can not appear after an element $(x,y)$, because the sequence is a tree-child sequence.
		      Also $y$ has to be in all trees in \markj{$T'$}, because otherwise $s$ contains an element $(y,z)$, which contradicts the assumption that $s|s'$ is tree-child.
		      So now we have shown that $s_1'\in P$.
		      Each element of $P$ is a cherry in \markj{$T'$}.
		      \Cref{lem:tc:unique_cherries} implies that there are at most $4k^\star$ unique cherries in \markj{$T'$}.
		      Therefore it follows that $|P|\leq 8k^\star$.
		      Because $s_1'\in P$, there $R$ is not empty because the result of SemiTemporalCherryPicking($T'(s_1'), k'-1,p-1,C''$) is added to $R$, which by our induction hypothesis is a non-empty set.
	\end{enumerate}
\end{proof}

\end{document}